\documentclass[10pt]{article} 
\usepackage{graphicx}
\usepackage{amsmath,amssymb,amsfonts}%
\usepackage{amsthm}%
\usepackage{mathrsfs}%
\usepackage[title]{appendix}%
\usepackage{xcolor}%
\usepackage{textcomp}%
\usepackage{manyfoot}%
\usepackage{booktabs}%
\usepackage{algorithm}%
\usepackage{algorithmicx}%
\usepackage{algpseudocode}%
\usepackage{listings}%
\usepackage{makecell}
\usepackage{multirow}%
\usepackage{subfig}
\usepackage[
  letterpaper,
  margin=1.5in,
  top=1.25in,
  bottom=1.5in,
  nohead
]{geometry}

\newtheorem{proposition}{Proposition}%
\newtheorem*{mainproblem}{Main Problem}{\bf}{}
\newtheorem{remark}{Remark}%
\newtheorem{definition}{Definition}%

\newtheorem{lemma}{Lemma}

\newcommand{\jj}[0]{\mathrm{j}}

\newcommand{\bernbas}{\mathscr{B}}
\newcommand{\re}[1]{\operatorname{Re}#1}
\newcommand{\rebrack}[1]{\operatorname{Re}\left\{#1\right\}}
\newcommand{\Hmat}{\mathbb{H}}
\newcommand{\vet}[1]{\boldsymbol{#1}}
\newcommand{\mat}[1]{\mathbf{#1}}
\newcommand{\mvartheta}{\boldsymbol{\vartheta}}
\newcommand{\eye}{\mathbb{I}}
\newcommand{\dotprod}[2]{\langle #1, #2 \rangle}

\newcommand{\norm}[1]{\left\vert\left\vert#1\right\vert\right\vert}\newcommand{\snorm}[1]{\vert\vert#1\vert\vert}
\newcommand{\junk}[1]{}
\newcommand{\data}[1]{\breve{#1}}

\begin{document}

\title{Stability Enforcement in Multivariate Rational Approximation of Parametric Transfer Functions
}

\author{Antonio Carlucci
\thanks{Department of Electronics and Telecommunications, Politecnico di Torino, Italy. Email: antonio.carlucci@polito.it
\\
{\it Preprint. This manuscript has not undergone peer review.}}
}
\date{}

\maketitle

\begin{abstract}
Preserving stability is a central problem in data-driven model order reduction of dynamical systems. For linear systems whose dynamics depend on geometric or physical parameters, multivariate rational approximation algorithms such as the Parameterized Sanathanan-Koerner iteration and the pAAA algorithm construct parameterized reduced models from sampled transfer function data. 
In this setting, stability must be enforced robustly across the parameter domain.
This paper introduces a necessary and sufficient criterion for characterizing the stability of parameterized models. Within a unified framework, the results apply to models with general rational as well as polynomial dependence on the parameters. Building on this criterion, we develop and demonstrate a rational approximation algorithm that includes robust stability constraints through convex optimization. Relative to the state of the art, the approach enforces stability without conservatism while allowing increased flexibility in the choice of model structure.
\end{abstract}

{\bf Key words:} Rational Approximation; Parametric Model Reduction; Stability Enforcement.

\section{Introduction}
Rational approximation is a key tool for reduced-order modeling of Linear Time-Invariant (LTI) systems. LTI models arise as mathematical representations of physical processes governed by linear differential equations. Deriving system equations directly from basic physical laws, for example by semi-discretization of a PDE, often leads to representations with many variables. Consequently, solving for the system response to a given input excitation implies a high computational cost, typically unacceptable in many engineering fields that heavily rely on simulations for design and verification. Model Order Reduction (MOR) techniques \cite{benner-morbook,antoulas-book} address this problem by constructing a simpler yet accurate representation of a given LTI system to be used as a surrogate in numerical simulations.

The input-output behavior of an LTI system can be represented in terms of a Transfer Function (TF), that is a function $\data{H}(s)$ of a complex variable $s$. 
The relevance of rational approximation algorithms for reduced-order modeling lies in their ability to approximate LTI systems by approximating their TFs. For this purpose, a low-order approximant $H(s)\approx \data{H}(s)$ is derived starting from sampled evaluations of the original TF $\data{H}(s)$. The resulting rational model $H(s)$ itself corresponds to a reduced LTI system with significantly fewer variables than the initial first-principles model. This is a purely data-driven approach that only requires availability of TF evaluations, which is appealing in practice.
Consequently, several MOR algorithms have rational approximation as an essential component \cite{morbook-rational}, including Vector Fitting (VF) \cite{gustavsen-vf}, the interpolatory Loewner framework \cite{mayo2007loewner}, the Adaptive Antoulas-Anderson (AAA) algorithm \cite{nakatsukasa2018aaa,nakatsukasa5years} and RKFIT \cite{rkfit}.

In engineering applications, reduced models are used to support design processes based on numerical simulations with the goal of optimizing the system for a specific objective. In these scenarios, the behavior of the system is modeled as depending on physical and geometric parameters $\mvartheta$. Consequently, the TF is a multivariate function $\data{H}(s,\mvartheta)$ of $\mvartheta$ as well as the Laplace variable $s$.
In this parameterized setting, it is natural to turn to multivariate rational approximation to construct reduced models of parameter-dependent TFs. 
Established algorithmic solutions include the Parameterized Sanathanan-Koerner (PSK) iteration \cite{triverio2009psk,psk-deschriver} and the pAAA algorithm \cite{paaa,paaa-scattered}. In addition, purely interpolatory approaches, such as Loewner and projection frameworks, have also been extended to the parametric case in \cite{ionita2014parametric,benner-parametric-survey}.

Motivated by applications in engineering design where physical parameters are associated with real numbers ranging in given intervals, we focus on deriving rational approximants of TFs that depend on a complex variable $s\in\mathbb{C}$ and $k$ real variables $\mvartheta = (\vartheta_1,\dots,\vartheta_k)\in[0,1]^k$.  
The dataset $\mathcal{D}$ is a collection of data points,
\begin{equation}\label{eq:dataset-def}
    \mathcal{D}= \{(s^{(i)},\mvartheta^{(i)}, \data{H}^{(i)}): \;  s^{(i)}\in \mathbb{C},\; \mvartheta^{(i)}\in[0,1]^k, i = 1,\dots,I\},
\end{equation}
where $\data{H}^{(i)}=\data{H}(s^{(i)}, \mvartheta^{(i)})$ are samples of a scalar transfer function $\data{H}(s,\mvartheta)$ depending on $s$ and $\mvartheta$.
In both pAAA and PSK, the model structure for the rational approximant $H(s,\mvartheta)$ is the common doubly-barycentric form \cite{berrut2004bary,antoulas-anderson},
\begin{equation}
    H(s,\mvartheta) = \frac{\alpha_0(\mvartheta)+
    \sum_{n=1}^{\nu}\frac{\alpha_n(\mvartheta)}{s-\sigma_n}
    }{\beta_0(\mvartheta) + \sum_{n=1}^{\nu}\frac{\beta_n(\mvartheta)}{s-\sigma_n}},
    \label{eq:bary}
\end{equation}
where $\{\alpha_n(\mvartheta)\}_{n=0}^{\nu}$ and $\{\beta_n(\mvartheta)\}_{n=0}^{\nu}$ are multivariate functions of $\mvartheta$. The \emph{basis poles} $\sigma_n\in\mathbb{C}$ are predefined auxiliary constants used to define the model \eqref{eq:bary}. 

The model format \eqref{eq:bary} is sufficiently general to encompass both the PSK iteration and the pAAA algorithm depending on the choice of $\alpha_n(\mvartheta)$, $\beta_n(\mvartheta)$, which we refer to as the \emph{parameterization} of the model.
 In PSK, $\alpha_n(\mvartheta)$ and $\beta_n(\mvartheta)$ are usually chosen to be polynomials \cite{bradde-stability,triverio2009psk}, but other choices are possible \cite{zanco-hd}. On the other hand, \eqref{eq:bary} matches the structure of pAAA if $\alpha_n(\mvartheta)$ and $\beta_n(\mvartheta)$ are rational functions, specifically linear combinations of (multivariate) partial fractions.
 The original pAAA formulation requires $\alpha_0(\mvartheta)=\beta_0(\mvartheta)=0$.
Focusing on scalar $\data{H}(s,\mvartheta)$ is not a limitation because the case of transfer matrices (arising in multi-input multi-output systems) can be handled through a barycentric form like \eqref{eq:bary} with matrix-valued numerator and scalar denominator \cite{gosea-mimo}. The results of this paper apply without modification in that case, hence we focus on scalar TFs to avoid obscuring the main ideas with unnecessary details.

This paper addresses the problem of enforcing stability of the rational model $H(s,\mvartheta)$, which is an essential property to ensure that the resulting reduced model can be used reliably in simulations. 
In many applications it is known from physical considerations that the system under analysis has bounded-input bounded-output behavior. However, data-driven algorithms can still yield unstable models if not explicitly constrained, even when the data samples are generated by a stable system.

Stability enforcement has been studied for PSK models in \cite{zanco-uniformly} and \cite{bradde-stability}. Both works achieve stability conservatively by enforcing conditions that are sufficient but not necessary. In addition, the results of \cite{zanco-uniformly,bradde-stability} are essentially tied to particular choices of model parameterizations, in which the functions $\beta_n(\mvartheta)$ are required, rather restrictively, to be constructed as combinations of specific basis functions with certain positivity properties. 
With the novel approach of this paper, we show that stability can in fact be enforced without imposing restrictive requirements on the parameterization. In particular, we provide stability criteria and an algorithm to construct stable models where $\alpha_n(\mvartheta)$, $\beta_n(\mvartheta)$ are general rational functions of $\mvartheta$. This includes, as special cases, multivariate polynomials expressed in arbitrary bases, rather than being restricted to Bernstein polynomials as in \cite{bradde-stability};  combinations of partial fractions as in pAAA; and rational radial basis functions such as inverse quadratics.
The class of parameterizations considered herein is defined precisely later in Sec.~\ref{sec:model-param}. A central contribution beyond this increased flexibility is the formulation of an exact stability
 criterion for parameterized models that is both necessary and sufficient. In this way, the conservatism inherent in the approaches of \cite{zanco-uniformly,bradde-stability} is removed.

\subsection{Notation and Definitions}
The imaginary unit is denoted as $\jj$. The open Right-Half Plane (RHP) is $\mathbb{C}_+ = \{s\in\mathbb{C}:\,\re s >0\}$, while the closed RHP is $\bar{\mathbb{C}}_+ = \mathbb{C}_+\cup\jj\mathbb{R}$. 
The symbol $\Hmat^n$ denotes the set of $n\times n$ Hermitian matrices, and $\Hmat_+^n$ and $\Hmat_{++}^n$ indicate positive semidefinite and positive definite matrices, respectively, of size $n\times n$. The superscript $n$ is omitted when the size is clear from the context or irrelevant. For $\mat{A}$, $\mat{B}\in\Hmat^n$, a scalar product is defined as $\dotprod{\mat{A}}{\mat{B}}= \mathrm{tr}(\mat{A}\mat{B})$. The superscript $\mat{A}^*$ indicates the Hermitian conjugate of $\mat{A}$, or the complex conjugate $x^*$ of a complex number $x$. $\mat{A}^T$ the matrix transpose. We use $\star$ to complete Hermitian matrices, and write $(\vet{x};\vet{y})$ to indicate vertical concatenation of vectors or matrices $\vet{x}$, $\vet{y}$. The symbol $\eye_n$ is the identity matrix of size $n$. The Frobenius norm of a matrix $\mat{A}$ is $\snorm{\mat{A}}_F$.

For a multi-index $\boldsymbol{n}=(n_1,\dots,n_k)$, the inequality $\boldsymbol{n}\leq l$ is intended element-wise, that is $n_{k'}\leq l$ for all $k'=1,\dots,k$. When considering multivariate polynomials depending on variables $\boldsymbol{x}=(x_1,\dots,x_k)\in\mathbb{R}^k$, the monomial $x_1^{n_1}\cdots x_k^{n_k}$ is abbreviated as $\boldsymbol{x}^{\boldsymbol{n}}$ and the entries of the multi-index $\boldsymbol{n}$, also used as exponent, are intended to be natural numbers including zero. Following \cite{boyd-cvx}, if $\mathcal{K}$ is a closed convex cone, the notation $\mat{X}\succeq_{\mathcal{K}}0$ and $\mat{X}\succ_{\mathcal{K}}0$ means, respectively, $\mat{X}\in\mathcal{K}$ and $\mat{X}\in\mathrm{int}\,\mathcal{K}$. 
As usual, $\mat{X}\succeq 0$, $\mat{X}\succ 0$ indicate that $\mat{X}$ is positive semidefinite ($\mat{X}\in\Hmat_+$) or positive definite ($\mat{X}\in\Hmat_{++}$). 
For a set $\mathcal{Q}\subseteq \Hmat^n$, the notation $\mathcal{Q}^*$ indicates its dual cone defined as $\mathcal{Q}^* = \{\mat{Z}\in\Hmat^n:\; \dotprod{\mat{Q}}{\mat{Z}}\geq 0\;\forall\, \mat{Q}\in\mathcal{Q}\}$. 
The convex hull of a set $\mathcal{S}$ is $\mathrm{conv}\,\mathcal{S}$. 
Following \cite{scherer2005}, the term \emph{robust Linear Matrix Inequality (LMI)} is used to indicate LMI conditions of the form $\mat{F}(\boldsymbol{x},\mvartheta)\succeq 0$ $\forall \mvartheta\in \mathcal{K}$ where $\mat{F}(\boldsymbol{x},\mvartheta)$ is affine in $\boldsymbol{x}$ for any fixed $\mvartheta$, and the condition is required to hold for all $\mvartheta$ in a given domain $\mathcal{K}$. In this paper, all robust LMIs have coefficients that depend polynomially on $\mvartheta$.
The Bernstein polynomial basis of degree $\delta$ in a single variable $x\in\mathbb{R}$ is
\begin{equation}
    \mathscr{B}_{n}^{({\delta})}(x) = \binom{\delta}{n}x^{n}(1-x)^{\delta-n} ,\qquad n = 0,\dots,\delta.
\end{equation}
In the multivariate case with $\boldsymbol{x}=(x_1,\dots,x_k)\in\mathbb{R}^{k}$, multivariate Bernstein polynomials are indexed by a multi-index $\boldsymbol{n} = (n_1,\dots,n_k)$. The Bernstein basis of degree $\delta$ in each variable is
\begin{equation}
    \mathscr{B}_{\boldsymbol{n}}^{({\delta})}(\boldsymbol{x}) = \mathscr{B}_{n_1}^{({\delta})}(x_1)\mathscr{B}_{n_2}^{({\delta})}(x_2)\cdots\mathscr{B}_{n_k}^{({\delta})}(x_k),\qquad 0\leq \boldsymbol{n}\leq\delta.
\end{equation}

\subsection{Model Parameterization and Optimization}\label{sec:model-param}
A general and unified approach is adopted here by assuming that $\alpha_n(\mvartheta)$ and $\beta_n(\mvartheta)$, for $n=0,\dots,\nu$, in \eqref{eq:bary} are linear combinations of a predefined set of $\rho$ basis functions $\varphi_1(\mvartheta),\dots,\varphi_{\rho}(\mvartheta)$. The
row vector $\boldsymbol{\varphi}(\mvartheta)$ collecting these functions is assumed to admit the rational representation
\begin{equation}\label{eq:phi-def}
     \vet{\varphi}(\mvartheta) = \begin{pmatrix}\varphi_1(\mvartheta) & \cdots & \varphi_\rho(\mvartheta)
\end{pmatrix} = \vet{\eta}[\mat{\Theta}(\mvartheta)-\mat{\Pi}]^{-1}\mat{B}_{\theta},
\end{equation}
with matrix coefficients $\mat{\Pi}\in\mathbb{R}^{K\times K}$, $\mat{B}_{\theta}\in\mathbb{R}^{K\times \rho}$, and a row vector $\vet{\eta}\in\mathbb{R}^{1\times K}$. The matrix $\mat{\Theta}(\mvartheta)\in\mathbb{R}^{K\times K}$ is only required to be an affine function of $\mvartheta$.
The functions $\alpha_n(\mvartheta)$ and $\beta_n(\mvartheta)$ appearing in the model structure \eqref{eq:bary} are expressed as linear combinations of the entries of $\vet{\varphi}(\mvartheta)$ according to 
\begin{equation}
    \alpha_n(\mvartheta) = \boldsymbol{\varphi}(\mvartheta)\boldsymbol{\alpha}_n,\quad 
    \beta_n(\mvartheta) = \boldsymbol{\varphi}(\mvartheta)\boldsymbol{\beta}_n,
\end{equation}
where coefficients are collected in the vectors $\boldsymbol{\alpha}_n$, $\boldsymbol{\beta}_n \in\mathbb{C}^{\rho}$.  

This format is sufficiently general because the entries of $\boldsymbol{\varphi}(\mvartheta)$ can be multivariate polynomials or general rational functions depending on the choice of $\mat{\Theta}(\mvartheta)$ and $\mat{\Pi}$. The model formats used in PSK or pAAA are particular cases of \eqref{eq:bary}, as shown constructively in Appendix \ref{app:parameterization}.

In the following, the term \emph{polynomial parameterization} is used to indicate the case in which $\boldsymbol{\varphi}(\mvartheta)$ contains only multivariate polynomials.
 The term \emph{rational parameterization} refers to the more general case in which $\vet{\varphi}(\mvartheta)$ contains rational functions. Polynomially parameterized models are regarded here as a special case of rationally parameterized models. 
The numerator and denominator in \eqref{eq:bary} are denoted by
\begin{equation}
\alpha(s,\mvartheta)
=\alpha_0(\mvartheta) + \sum_{n=1}^{\nu}\frac{\alpha_n(\mvartheta)}{s-\sigma_n}, \qquad
    \beta(s,\mvartheta)=\beta_0(\mvartheta) + \sum_{n=1}^{\nu}\frac{\beta_n(\mvartheta)}{s-\sigma_n}.
\end{equation}

Determining the coefficients $\{\boldsymbol{\alpha}_n\}$, $\{\boldsymbol{\beta}_n\}$ of the best rational approximant to a given dataset is not straightforward. In fact, a direct attempt to minimize the least-squares (LS) error between model and data, i.e. $\sum_i |\data{H}^{(i)}-H(s^{(i)}, \mvartheta^{(i)})|^2$, leads to a challenging non-convex optimization problem owing to the nonlinear dependence of $H(s,\mvartheta)$ on $\{\boldsymbol{\beta}_n\}$. 
Consequently, both PSK and pAAA use Levy's linearization \cite{levy} to introduce a simplified cost function $J_0$ that is convex in the model coefficients. In both cases, $J_0$ has the form 
\begin{equation}
    J_0(\{\vet{\alpha}_n\},\{\vet{\beta}_n\}) = \sum_{i\in\mathcal{I}}w_i\left\vert
    \data{H}^{(i)}\beta(s^{(i)}, \mvartheta^{(i)}) - \alpha(s^{(i)}, \mvartheta^{(i)}) 
    \right\vert^2\label{eq:j-def}
\end{equation}
where $w_i$ are positive weights and $\mathcal{I}$ is a subset of the indices $\{1,\dots,I\}$ corresponding to data points included in the cost function. 
In PSK, the set of weights is updated in each iteration to compensate for the fact that $J_0$ is an inexact surrogate of the model-data error. In this case $\mathcal{I} = \{1,\dots,I\}$, that is, all data points are used in $J_0$.
In pAAA, no weighting is used ($w_i = 1$) and $\mathcal{I}$ selects only a strict subset of the available data points to define $J_0$, while exact interpolation is enforced for the remaining points.

When no constraints are imposed on $\boldsymbol{\alpha}_n$, convexity of $J_0$ allows to eliminate the numerator coefficients to work with the cost function $J(\{\boldsymbol{\beta}_n\}) = \min_{\{\boldsymbol{\alpha}_n\}}J_0(\{\boldsymbol{\alpha}_n\},\{\boldsymbol{\beta}_n\})$.
Minimizing the cost $J(\{\boldsymbol{\beta}_n\})$ without additional constraints on $\{\boldsymbol{\beta}_n\}$ may yield a solution for which $H(s,\mvartheta)$ has unstable poles for some fixed values of $\mvartheta$, corresponding to zeros of $\beta(s,\mvartheta)$ with $s$ in the RHP. 
This paper focuses on the PSK iteration and formulates a constrained version of the optimization problem solved in each iteration in order to guarantee model stability. 

\section{Problem Statement}\label{sec:problem-statement}
The main objective of this paper is to construct multivariate rational approximants $H(s,\mvartheta)$ of parametric TFs that correspond to stable systems. 
Among the various notions of stability, \emph{Bounded-Input Bounded-Output (BIBO) stability} of a dynamical system is particularly relevant, as it is formulated in terms of input-output behavior. A system mapping $u(t)$ to $y(t)$ is BIBO stable \cite{bibo-note} if the response $y(t)$ to any bounded input $u(t)$ is bounded.

BIBO stability requires $H(s,\mvartheta)$ to have no poles for any $s$ in the closed RHP and any $\mvartheta\in[0,1]^k$, i.e. $H(s,\mvartheta)$ must have no poles in the region
\begin{equation}
    \Lambda = \{(s,\mvartheta):\;s\in\bar{\mathbb{C}}_+,\; \mvartheta \in[0,1]^k\}.
\end{equation} 
In this paper, it is assumed that the basis poles $\{\sigma_n\}$ defining \eqref{eq:bary} have negative real part. Hence, poles of $H(s,\mvartheta)$ in $\Lambda$ correspond to zeros of the denominator function $\beta(s,\mvartheta)$, and stability requires that $\beta(s,\mvartheta)\neq 0$ for all $(s,\mvartheta)\in\Lambda$. 
In addition, $H(s,\mvartheta)$ cannot have poles at infinity, that is, $H(\infty,\mvartheta)=\lim_{s\to\infty} H(s,\mvartheta)$ is finite for all $\mvartheta\in[0,1]^k$. In fact, if this condition does not hold, $H(s,\mvartheta)$ is improper and corresponds to a system whose response depends on derivatives of the input. Such a model may produce an unbounded response even when the input is bounded. The barycentric form \eqref{eq:bary} has no poles at infinity provided that $\beta(\infty,\mvartheta)=\lim_{s\to \infty}\beta(s,\mvartheta)\neq 0$ for all $\mvartheta\in[0,1]^k$.

Consistently with the notion of BIBO stability, a model of the form \eqref{eq:bary} is termed \emph{stable} if its denominator has no zeros in $\Lambda$ or at infinity. Equivalently, $\beta(s,\mvartheta)\neq 0$ in the extended stability region $\bar{\Lambda} = \Lambda \cup (\infty, [0,1]^k)$. 
\begin{mainproblem} Starting from a given dataset $\mathcal{D}$ as in \eqref{eq:dataset-def}, the problem considered herein is to find coefficients $\{\vet{\alpha}_n\}$, $\{\vet{\beta}_n\}$ of a multivariate rational model $H(s,\mvartheta)$ in the form \eqref{eq:bary} minimizing the cost function $J(\{\boldsymbol{\beta}_n\})$ with the constraint that $\beta(s,\mvartheta) \neq 0$ for all $(s,\mvartheta)\in\bar{\Lambda}$, i.e. the model is stable.
\end{mainproblem}

Previous work in \cite{bradde-stability} followed an approach that guarantees a weaker notion of stability by constraining the denominator to be a Positive Real (PR) function. A rational function $f(s):\mathbb{C}\to\mathbb{C}$ is called PR \cite{anderson1967pr} if it is analytic in the open RHP $\mathbb{C}_+$, and $\re{f(s)}\geq 0$ $\forall\,s\in\mathbb{C}_+$. Note that, as in \cite{Guiver2017}, the usual additional conjugate symmetry assumption $f(s^*) = f(s)^*$ is here omitted because it is not needed. The relevance of PR-ness in this context derives from the fact that a PR function cannot have zeros in $\mathbb{C}_+$ unless it is identically zero. Therefore, constraining $\beta(s,\mvartheta)$ to be PR for all fixed $\mvartheta\in[0,1]^k$ gives, as a consequence, $\beta(s,\mvartheta)\neq 0$ over $\mathbb{C}_+\times[0,1]^k$. It is crucial to observe that PR-ness of the denominator is sufficient but not necessary to ensure that the model has no poles in the open RHP. Hence, constraining $\beta(s,\mvartheta)$ to be PR may yield a suboptimal model. Nevertheless, this approach is appealing because the Positive Real Lemma (PRL) \cite{anderson1967pr} gives algebraic conditions on the coefficients $\{\boldsymbol{\beta}_n\}$ that are equivalent to PR-ness of $\beta(s,\mvartheta)$ and are compatible with computationally tractable semidefinite optimization. 

Enforcement of PR-ness does not prevent the denominator from vanishing for values of $s$ on the imaginary axis or at infinity, implying that a PR denominator could still have unwanted zeros within the stability region $\bar{\Lambda}$ considered herein. We shall thus adopt the following slightly stronger notion of positivity that guarantees the absence of zeros in $\bar{\Lambda}$, 
\begin{definition}[Extended Strictly Positive (ESP) function, \cite{weiqian1994}]
    A proper rational transfer function $f(s): \mathbb{C}\to\mathbb{C}$ is Extended Strictly Positive if it is analytic in an open right half-plane containing $\bar{\mathbb{C}}_+$ and $\re{f(\jj\omega)}\geq \epsilon$ $\forall\, \omega \in\mathbb{R}$ for some real $\epsilon >0$.
\end{definition}
Similarly to PR functions, ESP functions are nonzero on the extended closed RHP $\bar{\mathbb{C}}_+\cup\infty$. Therefore, a conservative way to obtain a stable model is to constrain $\beta(s,\mvartheta)$ to be ESP for all fixed $\mvartheta\in[0,1]^k$. Requiring the ESP property to obtain model stability as a consequence is attractive because, as with PR-ness, the Kalman-Yakubovich-Popov (KYP) \cite{kyp-kalman} lemma allows one to express the ESP conditions equivalently as an LMI in the coefficients $\{\boldsymbol{\beta}_n\}$. The ESP property is again sufficient, but not necessary, to ensure that $\beta(s,\mvartheta)\neq 0$ in $\bar{\Lambda}$, thus representing a stronger requirement whose enforcement may lead to suboptimal model accuracy.  

\subsection{Outline and Contributions}
A solution to the main problem is developed in several steps:
\begin{itemize}
\item[a)] Section \ref{sec:preliminary} discusses preliminary aspects regarding a state-space realization of $\beta(s,\mvartheta)$. Section \ref{sec:background} reviews an existing approach for enforcing model stability in the context of multivariate rational fitting. 
    \item[b)] Section  \ref{sec:stability-condition} presents a matrix inequality condition that is necessary and sufficient for model stability. The result is an exact characterization of stable models that makes it possible to enforce stability in a non-conservative way. The results of this section are given as robust matrix inequalities.
    \item[c)] Section \ref{sec:polya} discusses the Bernstein relaxation approach for converting robust matrix inequalities into finitely many conditions so as to formulate computationally tractable optimization problems. The convergence of the relaxation is established by adapting the results of \cite{scherer2005}.
    \item[d)] Section \ref{sec:relax} uses the theoretical results to formulate semidefinite programming (SDP) problems that can be solved numerically \cite{boyd-cvx,lmibook} via convex optimization. Since the main stability condition leads to a nonconvex constraint, a solution is devised to transform the problem into a convex one.

    \item[e)] Section \ref{sec:results} presents numerical results on four examples.
\end{itemize}

This paper contributes to the state of the art in several aspects. First, it provides theoretical stability criteria and algorithmic solutions that apply to general rationally parameterized models. Stability is not tied to a particular choice of model parameterization, unlike in \cite{zanco-uniformly,bradde-stability}. This increased flexibility in the proposed formulation enables enforcing stability for polynomially parameterized models using any polynomial basis (e.g. Legendre, Chebyshev) or rationally parameterized models (e.g. pAAA-type). In addition, stability is enforced exactly rather than conservatively through more restrictive properties such as PR-ness, offering a way to substantially reduce the model-data error for a given model complexity. 

\section{Preliminaries and Background}
\subsection{Realization of the denominator function}\label{sec:preliminary}
This section introduces a state-space realization of the denominator function $\beta(s,\mvartheta)$.
Let $\mat{\Sigma} = \textrm{diag}\{\sigma_i\}_{i=1}^{\nu}\in\mathbb{C}^{\nu\times \nu}$ and $\vet{b}_{\sigma}=\mathbf{1}$, where $\mathbf{1}$ is a vector of $\nu$ ones. The denominator of \eqref{eq:bary} can be rewritten as
\begin{equation}
    \beta(s,\mvartheta) =  \beta_0(\mvartheta) + \begin{pmatrix}
        \beta_1(\mvartheta)&\cdots & \beta_\nu(\mvartheta)
    \end{pmatrix}[s\eye-\mat{\Sigma}]^{-1}\vet{b}_{\sigma}.
\end{equation}
The model coefficients are grouped in $
    \boldsymbol{\beta} = \begin{pmatrix}
        \boldsymbol{\beta}_0 & \cdots & \boldsymbol{\beta}_\nu 
    \end{pmatrix}\in\mathbb{C}^{\rho\times (\nu+1)}$.
We also introduce the matrix $\mat{A}_{21}$ and the vector $\vet{b}_{\theta}$ that depend linearly on the denominator coefficients $\boldsymbol{\beta}$,
\begin{equation}
\mat{A}_{21}=
    \mat{B}_{\theta}\begin{pmatrix}
     \boldsymbol{\beta}_1
     &
     \cdots
     &
     \boldsymbol{\beta}_{\nu}
    \end{pmatrix},\qquad \vet{b}_{\theta} = \mat{B}_{\theta}\boldsymbol{\beta}_0.
\end{equation}
With these definitions, the denominator is written as $
    \beta(s,\mvartheta) =
\vet{\eta}[\mat{\Theta}(\mvartheta) - \mat{\Pi}]^{-1}\left[\mat{A}_{21}(s\eye-\mat{\Sigma})^{-1}\vet{b}_{\sigma} + \vet{b}_{\theta}\right]
$ and admits the realization
\begin{equation}\label{eq:beta-def-1}
    \beta(s,\mvartheta) = \begin{pmatrix}
      \vet{0} &\vet{\eta}
    \end{pmatrix}\left[
    \begin{pmatrix}
        s\eye_{\nu} & \mat{0}\\
        \mat{0}& \mat{\Theta}(\mvartheta)
    \end{pmatrix}
    -
    \begin{pmatrix}
        \mat{\Sigma} & \mat{0}
        \\
        \mat{A}_{21} & \mat{\Pi}
    \end{pmatrix}
    \right]^{-1}\begin{pmatrix}
         \vet{b}_{\sigma}\\\vet{b}_{\theta}
    \end{pmatrix}.
\end{equation}
Hence, $\beta(s,\mvartheta)$ can be regarded as a multivariate function of $(s, \mvartheta)$ with the explicit expression $\beta(s,\mvartheta) = \vet{c}[\mat{\Delta}(s,\mvartheta) - \mat{A}]^{-1}\vet{b}$ in terms of the realization matrices
\begin{equation}\label{eq:beta-realz-matrices}
    \vet{c} = \begin{pmatrix}
       \vet{0} & \vet{\eta}
    \end{pmatrix},
    \quad \vet{b} = \begin{pmatrix}
         \vet{b}_{\sigma}\\\vet{b}_{\theta}
    \end{pmatrix},
    \quad \mat{\Delta}(s,\mvartheta) = \begin{pmatrix}
        s\eye_{\nu} & \mat{0}\\\mat{0}
        & \mat{\Theta}(\mvartheta)
    \end{pmatrix},
    \quad 
    \mat{A} =\begin{pmatrix}
       \mat{\Sigma} & \mat{0}
        \\
        \mat{A}_{21} & \mat{\Pi}
    \end{pmatrix}.
\end{equation}

Equivalently, the same function can be regarded as a rational transfer function in the variable $s$ with parameter-dependent coefficients by introducing
\begin{equation}
    \vet{c}(\mvartheta) =\vet{\eta}[\mat{\Theta}(\mvartheta) - \mat{\Pi}]^{-1}\mat{A}_{21},\qquad d(\mvartheta) = \vet{\eta}[\mat{\Theta}(\mvartheta) - \mat{\Pi}]^{-1}\vet{b}_{\theta},\label{eq:c-d-def}
\end{equation}
to rewrite \eqref{eq:beta-def-1} equivalently as $
 \beta(s,\mvartheta) = \vet{c}(\mvartheta)[s\eye-\mat{\Sigma}]^{-1}\vet{b}_{\sigma} + d(\mvartheta)
$. This equivalent format emphasizes that the denominator function may be regarded as a standard transfer function in $s$ for fixed $\mvartheta$.

\begin{remark}
    If the transfer function to be approximated is known to have conjugate symmetry, the relation $H(s^*,\mvartheta) = H(s,\mvartheta)^*$ can be structurally embedded in the model. In this case, $\{\sigma_n\}_{n=1}^{\nu}$ is chosen to consist of $n_r$ real poles $\sigma_1,\dots,\sigma_{n_r}$ followed by $n_c$ complex conjugate pairs (i.e. $\nu = n_r+2n_c$). Assume for simplicity that conjugate pairs are indexed consecutively. Then $\beta_n(\mvartheta)$ is constrained to be real-valued if $\sigma_n$ is real, and conjugate symmetry $\beta_{n}(\mvartheta) = \beta_{n+1}(\mvartheta)^*$ is imposed for conjugate pairs $\sigma_{n}$, $\sigma_{n+1} = \sigma_{n}^*$. In the above notation, enforcement of conjugate symmetry amounts to choosing $\mat{\Sigma} = \mat{T}^{-1}\mathrm{diag}\{\sigma_1,\dots,\sigma_\nu\}\mat{T}$, $\vet{b}_{\sigma} = \mat{T}^{-1}\mathbf{1}$ with $\mat{T} = \mathrm{diag}\{\eye_{n_r}, \mat{T}_c\}$,
    \begin{equation}
        \mat{T}_c=\eye_{n_c}\otimes \begin{pmatrix}
            1 & \jj
            \\
            1 & -\jj
        \end{pmatrix}.
    \end{equation}
    In this case $\mat{\Sigma}$, $\vet{b}_{\sigma}$ are real and the model coefficients $\boldsymbol{\beta}$ to be estimated are also constrained to be real. This procedure is well-known and, as explained in \cite[Sec. 8.3.8]{triverio-vf}, it is simply a change of variables whereby complex model coefficients are expressed in terms of their real and imaginary parts.
\end{remark}
\subsection{Review of Stability Enforcement}
\label{sec:background}
An early approach to enforcing stability of PSK models was described in \cite{zanco-uniformly}. The main idea is to choose basis functions that are nonnegative over the parametric domain and impose constraints so as to obtain a PR denominator. The resulting method imposes even more restrictive constraints than PR-ness, which is itself not necessary. Another strategy proposed in \cite{austin-multivariate} is limited to real-valued rational functions.

The methodology of \cite{zanco-uniformly} was improved in \cite{bradde-stability}. The resulting framework is restricted to polynomially parameterized models, where $\vet{c}(\mvartheta)$ and $d(\mvartheta)$ are polynomials in $\mvartheta$. Moreover, it is inherently conservative, as it constrains $\boldsymbol{\beta}$ to obtain a PR denominator, which is a sufficient, but not necessary, condition to rule out denominator zeros in the open RHP.
The approach hinges on applying the PRL to $\beta(s,\mvartheta)$ at every point $\mvartheta$ in its domain. The PRL can be stated in this parameterized setting as follows.
    \begin{proposition}[Positive-Real Lemma, \cite{anderson1967pr,weiqian1994,dickinson-pseudopositive,gusev-kyp}]\label{prop:prl}
    Assume that $\mat{\Sigma}$ is Hurwitz, $(\mat{\Sigma},\vet{b}_{\sigma})$ is controllable and $\vet{c}(\mvartheta)$, $d(\mvartheta)$ are as in \eqref{eq:c-d-def}.
    Let $\mat{X}(\mvartheta)\in \Hmat$ denote a Hermitian-valued function and
     \begin{equation}
     \mat{W}(\mvartheta) =
    \begin{pmatrix}
        \mat{X}({\mvartheta})\mat{\Sigma} + \mat{\Sigma}^* \mat{X}({\mvartheta}) & \mat{X}({\mvartheta})\vet{b}_{\sigma}- \vet{c}(\mvartheta)^*
        \\
        \star & -d(\mvartheta)-d(\mvartheta)^*
    \end{pmatrix}.\label{eq:pr-theta-lmi}
    \end{equation}
    Then the following statements hold for $\beta(s,\mvartheta)  = \vet{c}(\mvartheta)[s\eye-\mat{\Sigma}]^{-1}\vet{b}_{\sigma} + d(\mvartheta)$:
    \begin{itemize}
        \item [a)] $\beta(\cdot,\mvartheta)$ is PR for all $\mvartheta\in[0,1]^k$ if and only if there exists $\mat{X}({\mvartheta})\succeq 0$ satisfying $\mat{W}(\mvartheta)\preceq 0$  $\forall\,\mvartheta\in[0,1]^k$;
        \item[b)] $\beta(\cdot,\mvartheta)$ is ESP for all $\mvartheta\in[0,1]^k$ if and only if there exists $\mat{X}({\mvartheta})\succ 0$ satisfying $\mat{W}(\mvartheta)\prec 0$  $\forall\,\mvartheta\in[0,1]^k$.
    \end{itemize}
\end{proposition}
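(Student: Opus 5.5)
The plan is to reduce the parameterized statement to the classical, non-parametric Positive Real Lemma and its strict (ESP) version, applied pointwise in $\mvartheta$. The statement quantifies over all $\mvartheta\in[0,1]^k$ and allows $\mat{X}(\mvartheta)$ to be an arbitrary Hermitian-valued function with no regularity requirement. Existence of such a function is therefore equivalent to existence, for each fixed $\mvartheta$, of a matrix $\mat{X}_{\mvartheta}\in\Hmat$ with the required sign properties. By the axiom of choice, these pointwise matrices can be assembled into a function $\mat{X}(\mvartheta)$. Hence it suffices to prove, for fixed $\mvartheta$ and with $\vet{c}=\vet{c}(\mvartheta)$, $d=d(\mvartheta)$: (a) $\beta(\cdot)=\vet{c}(s\eye-\mat{\Sigma})^{-1}\vet{b}_\sigma+d$ is PR iff there is $\mat{X}\succeq0$ with $\mat{W}\preceq0$; (b) $\beta$ is ESP iff there is $\mat{X}\succ0$ with $\mat{W}\prec0$. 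One caveat is that $\vet{c}(\mvartheta)$, $d(\mvartheta)$ are complex in general, since $\vet{\beta}_n\in\mathbb{C}^\rho$. I will therefore use the versions of the lemmas without conjugate symmetry, as in \cite{dickinson-pseudopositive,gusev-kyp}.

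For (a), sufficiency is a direct computation. Given $\mat{X}\succeq0$ with $\mat{W}\preceq0$, take $s\in\mathbb{C}_+$ and set $\vet{x}=(s\eye-\mat{\Sigma})^{-1}\vet{b}_\sigma$. Evaluating the quadratic form of $\mat{W}$ on $(\vet{x};1)$ and using $\mat{\Sigma}\vet{x}=s\vet{x}-\vet{b}_\sigma$ gives
\begin{equation*}
0\geq (\vet{x};1)^*\mat{W}(\vet{x};1) = -2\re{s}\,\vet{x}^*\mat{X}\vet{x} - 2\re{\beta(s)}.
\end{equation*}
Since $\mat{X}\succeq 0$, this yields $\re{\beta(s)}\geq -\re{s}\,\vet{x}^*\mat{X}\vet{x}\cdot$ (nonpositive sign absorbed), i.e.\ $\re{\beta(s)}\geq0$. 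Analyticity of $\beta$ in $\mathbb{C}_+$ holds because $\mat{\Sigma}$ is Hurwitz. Necessity is the hard direction of the PRL. It relies on controllability of $(\mat{\Sigma},\vet{b}_\sigma)$ together with the frequency-domain inequality $\re{\beta(\jj\omega)}\geq0$ for all $\omega$, which follows from PR-ness by continuity up to the axis since there are no axis poles. The KYP lemma then provides a Hermitian $\mat{X}$ with $\mat{W}\preceq0$. Because $\mat{\Sigma}$ is Hurwitz, the $(1,1)$ block gives $\mat{X}\mat{\Sigma}+\mat{\Sigma}^*\mat{X}\preceq0$, and a Lyapunov argument then shows $\mat{X}\succeq0$. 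I will cite this directly from \cite{anderson1967pr,gusev-kyp}.

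For (b), sufficiency follows the same computation. Strict negativity gives $\re{\beta(\jj\omega)}\geq\epsilon$ uniformly in $\omega$, including the limit $\omega\to\infty$ where the bound reads $\re d\geq\epsilon$; this uses the fact that $\mat{W}\prec0$ implies $\mat{W}\preceq-\epsilon'\eye$ and that $\snorm{(\vet{x};1)}\geq1$. Analyticity in a half-plane containing $\bar{\mathbb{C}}_+$ again follows from $\mat{\Sigma}$ Hurwitz. For necessity, ESP means the frequency inequality holds strictly, including at infinity. The strict KYP lemma \cite{weiqian1994} gives $\mat{X}$ with $\mat{W}\prec0$, and here no controllability is needed. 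The $(1,1)$ block then reads $\mat{X}\mat{\Sigma}+\mat{\Sigma}^*\mat{X}\prec0$, and Lyapunov theory with $\mat{\Sigma}$ Hurwitz yields $\mat{X}\succ0$.

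The main obstacle is the necessity direction of the non-strict case (a). The KYP lemma in the non-strict form requires controllability and a limiting argument, or spectral factorization. It also has to be checked in the complex, non-conjugate-symmetric setting. I will rely on the cited references for that step rather than reprove it, and only verify that the pointwise assembly over $\mvartheta$ introduces no further difficulty, which holds because no continuity of $\mat{X}(\mvartheta)$ is claimed.
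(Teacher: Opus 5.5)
Your approach matches what the paper relies on. The paper gives no separate proof of this proposition and simply invokes the cited classical results pointwise in $\mvartheta$. Your version does the same: it applies the complex, non-conjugate-symmetric KYP lemma (with controllability in the non-strict case, without it in the strict case), uses a Lyapunov argument on the $(1,1)$ block to get $\mat{X}\succeq 0$ or $\mat{X}\succ 0$, and assembles the certificates pointwise, which is legitimate because no regularity of $\mat{X}(\mvartheta)$ is required.

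One step needs fixing: the sufficiency computation in (a) has the wrong sign on the first term. From $\mat{\Sigma}\vet{x}=s\vet{x}-\vet{b}_\sigma$ one gets $(\vet{x};1)^*\mat{W}(\vet{x};1)=2\re{(s)}\,\vet{x}^*\mat{X}\vet{x}-2\re{\beta(s)}$. Hence $\mat{W}\preceq 0$ gives $\re{\beta(s)}\geq \re{(s)}\,\vet{x}^*\mat{X}\vet{x}\geq 0$ on $\mathbb{C}_+$, which is exactly where $\mat{X}\succeq 0$ is used. With your sign, the lower bound $-\re{(s)}\,\vet{x}^*\mat{X}\vet{x}$ is nonpositive, so it does not yield PR-ness, and ``absorbing'' the sign does not repair that. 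Part (b) is unaffected, because this term vanishes on $s=\jj\omega$.
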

In principle, conditions \emph{a)} or \emph{b)} ensure that $\beta(s,\mvartheta)$ is respectively PR or ESP throughout the parametric domain if a \emph{certificate} matrix $\mat{X}(\mvartheta)$ can be found independently for each $\mvartheta \in [0,1]^k$. In turn, the PR or ESP properties imply $\beta(s,\mvartheta)\neq 0$ in $\mathbb{C}_+$ or $\bar{\mathbb{C}}_+\cup\infty$, respectively.
It is important to remark that no requirement is explicitly imposed on the dependence of $\mat{X}(\mvartheta)$ on $\mvartheta$. However, in practice, the search for a certificate $\mat{X}(\mvartheta)$ is made tractable only by assuming a certain functional dependence on $\mvartheta$. Considering polynomially parameterized models where $\vet{c}(\mvartheta)= \sum_{\boldsymbol{n}\leq {\delta}} \vet{c}_{\boldsymbol{n}}\mvartheta^{\boldsymbol{n}}$, $d(\mvartheta)= \sum_{\boldsymbol{n}\leq {\delta}}d_{\boldsymbol{n}}\mvartheta^{\boldsymbol{n}}$ are multivariate polynomials with degree at most $\delta$ in each variable, the method of \cite{bradde-stability} makes condition \eqref{eq:pr-theta-lmi} computationally tractable by restricting $\mat{X}(\mvartheta)$ to be polynomial in $\mvartheta$ with the same maximum degree $\delta$, that is $
    \mat{X}(\mvartheta) = \sum_{\boldsymbol{n}\leq \delta}\mat{X}_{\boldsymbol{n}}\mvartheta^{\boldsymbol{n}}
$.  
Restricting the search to polynomial $\mat{X}(\mvartheta)$ is backed theoretically by the results in \cite{bliman,gusev-parameterdependent}.
Substituting this ansatz into the ESP condition $\mat{W}(\mvartheta)\prec0$ of Prop. \ref{prop:prl}(b) yields a robust LMI in the auxiliary unknowns $\mat{X}_{\boldsymbol{n}}$ along with $\vet{c}_{\boldsymbol{n}}$, $d_{\boldsymbol{n}}$ which, in turn, depend linearly on the model coefficients $\boldsymbol{\beta}$ to be optimized.
The resulting condition reads
\begin{equation}
   \mat{W}(\mvartheta) =\sum_{\boldsymbol{n}\leq \delta} 
 \begin{pmatrix}
        \mat{X}_{\boldsymbol{n}}\mat{\Sigma} + \mat{\Sigma}^* \mat{X}_{\boldsymbol{n}} & \mat{X}_{\boldsymbol{n}}\vet{b}_{\sigma}- \vet{c}_{\boldsymbol{n}}^*
        \\
        \star & -d_{\boldsymbol{n}}-d_{\boldsymbol{n}}^*
    \end{pmatrix} \mvartheta^{\boldsymbol{n}}
    =\sum_{\boldsymbol{n}\leq \delta} \mat{W}_{\boldsymbol{n}}\mvartheta^{\boldsymbol{n}}\prec 0,
    \label{eq:w-poly-exp}
\end{equation}
where $\mat{W}(\mvartheta)$ is expressed in terms of Hermitian coefficients $\mat{W}_{\boldsymbol{n}}$ that depend linearly on the auxiliary variables $\mat{X}_{\boldsymbol{n}}$ and the model coefficients $\boldsymbol{\beta}$ through $ \vet{c}_{\boldsymbol{n}}, d_{\boldsymbol{n}}$. Considering the PR condition instead would lead to the non-strict version of \eqref{eq:w-poly-exp}, which is the alternative adopted in \cite{bradde-stability}. We henceforth focus on the ESP criterion as it ensures the absence of model poles in the closed RHP and at infinity, consistently with the notion of BIBO stability adopted herein. 
Condition \eqref{eq:w-poly-exp} is used to formulate a constrained optimization problem
\begin{subequations}
\begin{align}
     \operatorname*{minimize}_{\mat{X}_{\boldsymbol{n}}, \boldsymbol{\beta}}\;&J(\boldsymbol{\beta})\;\mathrm{s.t.}
     \\
     &\mat{W}(\mvartheta)\prec 0, \; \forall\,\mvartheta\in [0,1]^k
     \label{eq:robust-lmi-w}
     \end{align}
     \label{prob:background}
\end{subequations}
whose solution yields coefficients $\boldsymbol{\beta}$ of a model with ESP denominator, hence stable. In \eqref{prob:background}, $J(\vet{\beta})$ is a shorthand for $J(\{\vet{\beta}_n\})$.

In this way, the problem is reduced to enforcing the robust LMI $\mat{W}(\mvartheta)\prec 0 \; \forall \,\mvartheta\in[0,1]^k$. As discussed in \cite{scherer2005}, the problem of establishing conditions under which a polynomial is positive on a given set has been studied extensively in the mathematical literature with classical results such as P\'olya's and Handelman's theorems in the case of simplicial and polytopic domains, respectively. These provide the basis for relaxing robust LMIs such as \eqref{prob:background} into finitely many positivity conditions involving the coefficients $\mat{W}_{\boldsymbol{n}}$ \cite{scherer2005,peet-polyopt}. For parameterized macromodeling, the work in \cite{bradde-stability} represents $\mat{W}(\mvartheta)$ in the Bernstein basis with an \emph{elevated} degree $\bar{{\delta}}\geq {\delta}$,
\begin{equation}\label{eq:w-bern-basis}
    \mat{W}(\mvartheta) = \sum_{\boldsymbol{n}\leq \bar{\delta}} \mat{W}^{(\bar{\delta})}_{{\rm ber},\boldsymbol{n}}
    \mathscr{B}^{(\bar{{\delta}})}_{\boldsymbol{n}}(\mvartheta),
\end{equation}
where the Bernstein coefficients $\mat{W}^{(\bar{{\delta}})}_{{\rm ber},\boldsymbol{n}}$ depend implicitly on $\{\mat{X}_{\boldsymbol{n}}\}, \{\vet{c}_{\boldsymbol{n}}\}, \{d_{\boldsymbol{n}}\}$. They are obtained as linear combinations of the coefficients $\mat{W}_{\vet{n}}$ in \eqref{eq:w-poly-exp} by applying a change of basis from the monomial to the degree-$\bar{{\delta}}$ Bernstein basis. Then the robust LMI \eqref{eq:robust-lmi-w} is enforced by requiring the stronger conditions
\begin{equation}
    \mat{W}^{(\bar{{\delta}})}_{{\rm ber},\boldsymbol{n}}\prec 0,\qquad  \boldsymbol{n}\leq \bar{\delta}.
\label{eq:bernstein-positivity}
\end{equation}

The process of replacing the robust LMI $\mat{W}(\mvartheta)\prec 0\;\forall\,\mvartheta\in[0,1]^k$ with the finitely many standard LMIs \eqref{eq:bernstein-positivity} is referred to as \emph{Bernstein relaxation}. 
Because Bernstein polynomials $\mathscr{B}_{\boldsymbol{n}}(\mvartheta)$ are nonnegative for $\mvartheta\in[0,1]^k$ and form a partition of unity, conditions \eqref{eq:bernstein-positivity} imply the inequality \eqref{eq:w-poly-exp}. The main idea of \cite{bradde-stability} is to consider representations with increasingly large $\bar{{\delta}}$. In fact, even if replacing \eqref{eq:w-poly-exp} with the stronger conditions \eqref{eq:bernstein-positivity} may introduce additional conservatism, especially for small $\bar{{\delta}}$, this conservatism is gradually reduced as $\bar{{\delta}}$ is increased. 
Relaxations based on the Bernstein basis are also discussed in \cite{leroy-bernstein-simplicial,boudaoud2008bernstein,kojima2019bernstein,hilhorst-bernstein} and related methods are leveraged in robust control \cite{peet-polyopt,chen-handelman-lyapunov,hamadneh-bernstein-lyapunov,chesi-lmi}.
We refer the reader to \cite{bradde-stability} for a complete description of this approach, and do not dwell on the Bernstein relaxation of robust LMIs as this topic will be revisited in Sec. \ref{sec:polya}, where we provide a generalization for the purposes of this paper.

The above approach is successful in yielding accurate and stable models when combined with the PSK iteration, as documented in \cite{bradde-stability}. However, it has two main limitations that the developments presented herein aim to overcome. First, it is limited to polynomially parameterized models, and it is not applicable in the case of rational parameterization. Second, it enforces $\rebrack{\beta(s,\mvartheta)}>0$ on $\bar{\Lambda}$, which is more restrictive than the condition of stability, i.e. $\beta(s,\mvartheta)\neq 0$. Therefore, even when the conservatism arising from the Bernstein relaxation is removed by degree elevation, the approach remains inherently conservative. 
 The following developments present solutions to both issues. In addition, an alternative approach is introduced that avoids explicitly parameterizing $\mat{X}(\mvartheta)$ by using a different stability criterion that guarantees stability through a parameter-independent certificate matrix.

\section{Stability Conditions}
\label{sec:stability-condition}
\label{sec:global-cert}
This section is concerned with deriving criteria to establish whether a given denominator function $\beta(s,\mvartheta)$ corresponds to a stable model. As in Sec. \ref{sec:background}, these criteria are given in terms of matrix inequality conditions involving the model coefficients $\boldsymbol{\beta}$ to be estimated. 
The results of this section are stated as robust matrix inequalities, as the conversion to a finite set of standard LMIs will be discussed in Sec. \ref{sec:polya}.

An exact non-conservative stability condition is derived first, as it is the cornerstone of the algorithm presented herein, followed by a simpler but conservative condition that is used only instrumentally in the final proposed algorithm. 

\subsection{Exact Stability Condition}
We view the denominator $\beta(s,\mvartheta) = \vet{c}[\mat{\Delta}(s,\mvartheta)-\mat{A}]^{-1}\vet{b}$ as a rational function of the combined variables $(s,\mvartheta)$ as defined in \eqref{eq:beta-def-1} in terms of the realization matrices $\mat{A}$, $\vet{b}$, $\vet{c}$. The objective is to ensure that $\beta(s,\mvartheta)\neq 0$ over the domain $\bar{\Lambda}$.
The derivation of the stability condition follows from a variant of the S-procedure \cite{terlaky-survey-s} known as the \emph{full-block S-procedure}, introduced in \cite{scherer2001} and also used in \cite{scherer2005}. The first step is to equivalently rephrase the model stability condition $\beta(s,\mvartheta)\neq 0$ as the positivity condition $|\beta(s,\mvartheta)|^2>0\;\forall\, (s,\mvartheta)\in\bar{\Lambda}$.
As a consequence of the full-block S-procedure \cite{scherer2005}, the inequality
\begin{multline}
    |\beta(s,\mvartheta)|^2=\\
    \begin{pmatrix}
        [\mat{\Delta}(s,\mvartheta)-\mat{A}]^{-1}\vet{b}
        \\
        1
    \end{pmatrix}^*
    \begin{pmatrix}
        \vet{c}^*\vet{c} & \vet{0}\\\star & 0
    \end{pmatrix}
    \begin{pmatrix}
        [\mat{\Delta}(s,\mvartheta)-\mat{A}]^{-1}\vet{b}
        \\
        1
    \end{pmatrix}>0\quad \forall (s,\mvartheta)\in\bar{\Lambda}
\label{eq:positivity-condition}
\end{multline}
is satisfied if there is a Hermitian matrix $\mat{P}$ that satisfies the LMI
\begin{equation}\label{eq:stab-cond}
    \begin{pmatrix}
        \vet{c}^*\vet{c} & \vet{0}\\\star & 0
    \end{pmatrix}
     - 
     \begin{pmatrix}
         \mat{A} & \vet{b} 
         \\
         \eye &0
     \end{pmatrix}^*
     \mat{P}
     \begin{pmatrix}
         \mat{A} & \vet{b} 
         \\
         \eye &0
     \end{pmatrix}\succ 0
\end{equation}
and
\begin{equation}
    \begin{pmatrix}
        \mat{\Delta}(s,\mvartheta) \\ \eye
    \end{pmatrix}^*\mat{P}
    \begin{pmatrix}
        \mat{\Delta}(s,\mvartheta) \\ \eye
    \end{pmatrix} \succeq 0\qquad \forall \,(s,\mvartheta)\in{\Lambda}
    .
    \label{eq:p-def}
\end{equation}
In other words, the matrix $\mat{P}$ satisfying \eqref{eq:stab-cond} is to be sought in a set of admissible matrices $\mat{P}$ that satisfy the robust LMI \eqref{eq:p-def}.
As remarked in \cite{SCHERER2006} and originally proven in \cite{scherer2001}, the conditions of the full-block S-procedure also become necessary when the domain is compact, which is not the case for $\Lambda$. In addition, the robust inequality \eqref{eq:p-def} is still too abstract and unsuitable for numerical implementation because it is only an implicit definition of the admissible matrices $\mat{P}$, as observed in \cite{scherer2005}. This issue is solved by restricting the search to a subset of admissible matrices $\mat{P}$ for which an explicit characterization can be provided, as shown next.

Because the parameters $\mvartheta$ vary in the hypercube, while $s$ ranges over $\bar{\mathbb{C}}_+$, it is convenient to express \eqref{eq:p-def} as a quadratic form in $s$ with coefficients depending on $\mvartheta$,
by means of the identity
\begin{equation}
\begin{pmatrix}
    \mat{\Delta}(s,\mvartheta) \\ \eye
\end{pmatrix}=
    \begin{pmatrix}
        s\eye_\nu & \mat{0} \\
        \mat{0} & \mat{\Theta}(\mvartheta) \\ 
        \eye_\nu & \mat{0} 
        \\
        \mat{0} & \eye_K
    \end{pmatrix}=
    \underbrace{
     \begin{pmatrix}
        \eye_\nu &\mat{0} &\mat{0}
        \\
        \mat{0}
        &
        \mat{0}
        &
        \mat{\Theta}(\mvartheta)
        \\
        \mat{0}  
        &
        \eye_\nu
        & \mat{0}
        \\
        \mat{0}  
        &
        \mat{0}
        & \eye_{K}
    \end{pmatrix}
    }_{\triangleq \mat{M}(\mvartheta)}
     \begin{pmatrix}
        s\eye_{\nu} & \mat{0}
        \\
        \eye_{\nu} & \mat{0}
        \\
       \mat{0} &\eye_K
    \end{pmatrix}\label{eq:def-of-M}
\end{equation}
so that \eqref{eq:p-def} is
\begin{equation}\label{eq:s-highlight}
     \begin{pmatrix}
    \mat{\Delta}(s,\mvartheta) \\ \eye
\end{pmatrix}^*\mat{P}\begin{pmatrix}
    \mat{\Delta}(s,\mvartheta) \\ \eye
\end{pmatrix}
    =
    \begin{pmatrix}
        s\eye_{\nu} & \mat{0}
        \\
        \eye_{\nu} & \mat{0}
        \\
        \mat{0}&\eye_K
    \end{pmatrix}^*\mat{M}(\mvartheta)^*\mat{P}\mat{M}(\mvartheta)
    \begin{pmatrix}
        s\eye_{\nu} & \mat{0}
        \\
        \eye_{\nu} & \mat{0}
        \\
       \mat{0} &\eye_K
    \end{pmatrix}\succeq 0.
\end{equation}
Having isolated the $s$ variable, let us observe that for any $\mat{X}\succeq 0$
\begin{equation}\label{eq:skew-x}
\begin{pmatrix}
    s\eye & \mat{0}
    \\
    \eye & \mat{0}
    \\
    \mat{0} & \eye
\end{pmatrix}^*
    \begin{pmatrix}
        \mat{0} & \mat{X}&\mat{0} \\ \mat{X} & \mat{0} & \mat{0}\\
        \mat{0} & \mat{0} & \mat{0}
    \end{pmatrix}
    \begin{pmatrix}
    s\eye & \mat{0}
    \\
    \eye & \mat{0}
    \\
    \mat{0} & \eye
\end{pmatrix}\succeq 0\qquad \forall \;s\in\bar{\mathbb{C}}_+.
\end{equation}
 A similar observation appears in \cite{SCHERER2006} as a way to relax semi-infinite constraints like \eqref{eq:s-highlight}. Moreover, the skew-block-diagonal structure in \eqref{eq:skew-x} is also related to certain matrix sets considered in \cite{generalized-kyp,generalized-s-procedure} to describe curves or regions in the complex plane in a non-parameterized setting. In our case, where parameters $\mvartheta$ are present in \eqref{eq:s-highlight} alongside the complex variable $s$, we can expand on this insight to introduce the cone $\mathcal{Q}$ of Hermitian matrices defined as
\begin{equation}\label{eq:q-def}
    \mathcal{Q} = \left\{\mat{Q}\in\Hmat^{2\nu+K}: \exists \, \mat{X}\in\Hmat^{\nu},\;\mat{X}\succeq 0,\; 
    \mat{Q}\succeq \begin{pmatrix}
        \mat{0} & \mat{X} &\mat{0} \\ \mat{X} & \mat{0} & \mat{0}
        \\ \mat{0} & \mat{0} & \mat{0}
    \end{pmatrix}\right\}.
\end{equation}
By comparing \eqref{eq:s-highlight} and \eqref{eq:skew-x}, it can be directly seen that if, for a fixed $\mvartheta$, the matrix $\mat{M}(\mvartheta)^*\mat{P}\mat{M}(\mvartheta)$ belongs to $\mathcal{Q}$, i.e. $\mat{M}(\mvartheta)^*\mat{P}\mat{M}(\mvartheta)\succeq_{\mathcal{Q}}0$, then inequality \eqref{eq:s-highlight} is verified for all $s\in\bar{\mathbb{C}}_+$. Therefore, if $\mat{P}$ satisfies the robust LMI $\mat{M}(\mvartheta)^*\mat{P}\mat{M}(\mvartheta)\succeq_{\mathcal{Q}} 0\;\;\forall\, \mvartheta \in [0,1]^k$, then it also satisfies \eqref{eq:p-def}. Hence, the set 
\begin{equation}
    \mathcal{P} = \{\mat{P}\in\Hmat: \mat{M}(\mvartheta)^*\mat{P}\mat{M}(\mvartheta)\succeq_{\mathcal{Q}} 0\;\;\forall\, \mvartheta \in [0,1]^k\}\label{eq:p-set-def}
\end{equation} is a subset of admissible matrices $\mat{P}$ for \eqref{eq:p-def}.

In Appendix \ref{sec:appendix2}, we use duality arguments as in \cite{SCHERER2006,generalized-s-procedure,generalized-kyp} to show that considering this subset $\mathcal{P}$ is enough to obtain a necessary and sufficient stability criterion based on a parameter-independent certificate matrix as summarized in the following proposition.
\begin{proposition} \label{prop:stab-criterion}
Let $\beta(s,\mvartheta)=\vet{c}[\mat{\Delta}(s,\mvartheta)-\mat{A}]^{-1}\vet{b}$ be as defined in \eqref{eq:beta-def-1}-\eqref{eq:beta-realz-matrices} and assume that $\mat{\Delta}(s,\mvartheta)-\mat{A}$ is nonsingular for all $(s,\mvartheta)\in\Lambda$. Then $\beta(s,\mvartheta)\neq 0$ for all $(s,\mvartheta) \in \bar{\Lambda}$ if and only if there exists a matrix $\mat{P}\in\Hmat$ such that
\begin{align}
&\mat{M}(\mvartheta)^*\mat{P}\mat{M}(\mvartheta)\succeq_{\mathcal{Q}} 0 \qquad \forall \,\mvartheta\in [0,1]^k,\label{eq:q-cond}
     \\
     &\begin{pmatrix}
        \vet{c}^*\vet{c} & \vet{0} \\ \star & 0
    \end{pmatrix}
     - 
     \begin{pmatrix}
         \mat{A} & \vet{b} 
         \\
         \eye &\vet{0}
     \end{pmatrix}^*
     \mat{P}
     \begin{pmatrix}
         \mat{A} & \vet{b} 
         \\
         \eye &\vet{0}
     \end{pmatrix}\succ 0.
     \label{eq:lmi-stab-cond}
\end{align}
\end{proposition}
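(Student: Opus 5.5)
Sufficiency is essentially assembled in the text preceding the statement; necessity is the substantive direction, which I would prove by a separation/duality argument in the style of \cite{SCHERER2006,generalized-kyp}.

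\emph{Sufficiency.} Let $\mat{P}$ satisfy \eqref{eq:q-cond}--\eqref{eq:lmi-stab-cond}. By \eqref{eq:skew-x} and the definition \eqref{eq:q-def} of $\mathcal{Q}$, condition \eqref{eq:q-cond} yields \eqref{eq:p-def} on $\Lambda$ via the factorization \eqref{eq:def-of-M}. Fix $(s,\mvartheta)\in\Lambda$ and set $\vet{x}=[\mat{\Delta}-\mat{A}]^{-1}\vet{b}$, so that $\mat{\Delta}\vet{x}=\mat{A}\vet{x}+\vet{b}$. Multiply \eqref{eq:lmi-stab-cond} on both sides by $(\vet{x};1)$. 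This gives
\[
|\vet{c}\vet{x}|^2 > (\mat{\Delta}\vet{x};\vet{x})^*\mat{P}(\mat{\Delta}\vet{x};\vet{x}) \geq 0,
\]
because $(\vet{x};1)\neq 0$. The same inequality gives $\beta\neq0$ on $\Lambda$.

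For the point at infinity, I would write $\beta(s,\mvartheta)=\vet{c}(\mvartheta)(s\eye-\mat{\Sigma})^{-1}\vet{b}_\sigma+d(\mvartheta)$, so that $\beta(\infty,\mvartheta)=d(\mvartheta)$. I would then apply the same quadratic form argument with $\vet{x}=(\vet{0};[\mat{\Theta}-\mat{\Pi}]^{-1}\vet{b}_\theta)$. This is the limit of the scaled solutions as $s\to\infty$, and the corresponding $s$-block of $(\mat{\Delta}\vet{x};\vet{x})$ is handled by the leading term of the $\mathcal{Q}$-structure. In that block the form reduces to a positive semidefinite term, because $\mat{Q}\succeq$ the skew block and the $s$-components vanish. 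Hence $|d(\mvartheta)|^2>0$.

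\emph{Necessity.} Assume $\beta\neq 0$ on $\bar{\Lambda}$. The set of $\mat{P}$ satisfying \eqref{eq:q-cond} is a convex cone $\mathcal{P}$, and \eqref{eq:lmi-stab-cond} is a strict LMI. Suppose by contradiction that no such $\mat{P}$ exists. Write $\mat{N}=\begin{pmatrix}\mat{A}&\vet{b}\\ \eye&\vet{0}\end{pmatrix}$. A separating hyperplane argument (a theorem of alternatives) then gives a nonzero $\mat{Z}\succeq 0$ with two properties. First, $\dotprod{\mathrm{diag}(\vet{c}^*\vet{c},0)}{\mat{Z}}\le 0$. Second, $\mat{N}\mat{Z}\mat{N}^*\in\mathcal{P}^*$.

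The key step is to characterize $\mathcal{P}^*$. Since $[0,1]^k$ is compact and $\mat{M}$ is affine, I would show that $\mathcal{P}^*$ is the closed convex cone generated by the matrices $\mat{M}(\mvartheta)\mat{Y}\mat{M}(\mvartheta)^*$ with $\mat{Y}\in\mathcal{Q}^*$. Carathéodory's theorem then gives finitely many terms. I would also compute $\mathcal{Q}^*$ explicitly: it consists of $\mat{Y}\succeq0$ whose $(1,2)$ block $\mat{Y}_{12}$ satisfies $\mat{Y}_{12}+\mat{Y}_{12}^*\succeq0$. Rank-one elements of $\mathcal{Q}^*$ then correspond to vectors $(s\vet{u};\vet{u};\vet{v})$ with $\re s\ge0$, together with limits $(\vet{u};\vet{0};\vet{v})$ that represent $s=\infty$. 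The main obstacle is this rank-one decomposition. I would adapt the lemma of \cite{generalized-kyp} (Lemma 3 in Iwasaki--Hara) to factor each $\mat{Y}$ as a sum of such rank-one terms.

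With the decomposition in hand, the condition $\mat{N}\mat{Z}\mat{N}^*\in\mathcal{P}^*$ yields columns $\vet{z}$ of a factor of $\mat{Z}=\sum \vet{z}\vet{z}^*$. Matching the terms, these columns satisfy $(\mat{\Delta}(s,\mvartheta)-\mat{A})\vet{x}=\vet{b}w$ with $\vet{z}=(\vet{x};w)$ and $(s,\mvartheta)\in\bar\Lambda$. Nonsingularity on $\Lambda$ gives $\vet{x}=[\mat{\Delta}-\mat{A}]^{-1}\vet{b}\,w$. The first separating inequality then reads $\sum|\beta(s_i,\mvartheta_i)|^2|w_i|^2\le 0$, with the $s=\infty$ terms contributing $|d(\mvartheta_i)|^2|w_i|^2$. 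Hence every $w_i=0$, and then every $\vet{x}_i=0$, so $\mat{Z}=0$. This contradicts $\mat{Z}\neq0$.

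To obtain the strict inequality in \eqref{eq:lmi-stab-cond}, the cones must be closed. I would get this from compactness of $[0,1]^k$ together with normalizing $\mathrm{tr}\,\mat{Z}=1$, which prevents mass from escaping.
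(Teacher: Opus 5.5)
Your sufficiency argument is correct. At $s=\infty$, your evaluation at $(\vet{0};[\mat{\Theta}-\mat{\Pi}]^{-1}\vet{b}_\theta;1)$ works because its image under $\left(\begin{smallmatrix}\mat{A}&\vet{b}\\ \eye&\vet{0}\end{smallmatrix}\right)$ is $\mat{M}(\mvartheta)(\vet{b}_\sigma;\vet{0};\vet{v})$, on which the skew term vanishes. The paper instead passes to the limit using the uniform margin $|\beta|^2>\epsilon$.

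The gap is in necessity, at the step where you plan to factor every $\mat{Y}\in\mathcal{Q}^*$ as a sum of rank-one elements of $\mathcal{Q}^*$. These are the terms $\vet{\xi}\vet{\xi}^*$ with $\vet{\xi}=(s\vet{u};\vet{u};\vet{v})$, $\re s\geq 0$, or $\vet{\xi}=(\vet{u};\vet{0};\vet{v})$. Such a full decomposition holds for the equality cone $\mathcal{Q}_0^*$ (Lemma~\ref{lemma:1vector}(b), used for Proposition~\ref{prop:pr-criterion}), but it is false for $\mathcal{Q}^*$.

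Take $\nu=2$, $\mat{V}=\mat{0}$, $\mat{G}=\eye_2$, $\mat{F}=\left(\begin{smallmatrix}1&1\\0&1\end{smallmatrix}\right)$. Then $\mat{F}\mat{G}^*+\mat{G}\mat{F}^*\succ 0$, so $\mat{Y}=(\mat{F};\mat{G};\mat{V})(\mat{F};\mat{G};\mat{V})^*\in\mathcal{Q}^*$. In a structured decomposition, the $(2,2)$, $(1,2)$ and $(1,1)$ blocks give $\sum_i\vet{g}_i\vet{g}_i^*=\eye$, $\sum_i s_i\vet{g}_i\vet{g}_i^*=\mat{F}$ and $\sum_i|s_i|^2\vet{g}_i\vet{g}_i^*\preceq\mat{F}\mat{F}^*$. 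A Schur complement of $\sum_i(s_i\vet{g}_i;\vet{g}_i)(s_i\vet{g}_i;\vet{g}_i)^*$ gives the reverse inequality. Hence that sum equals $(\mat{F};\eye)(\mat{F};\eye)^*$, and each $(s_i\vet{g}_i;\vet{g}_i)$ lies in the range of $(\mat{F};\eye)$, i.e.\ $\mat{F}\vet{g}_i=s_i\vet{g}_i$. A Jordan block has only one eigendirection, so $\sum_i\vet{g}_i\vet{g}_i^*$ cannot equal $\eye$. You therefore cannot make every column of $\mat{R}$ correspond to a point of $\bar\Lambda$, and the conclusion ``every $w_i=0$, hence $\mat{Z}=\mat{0}$'' is not reachable this way.

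The missing idea is that one structured column suffices. Because $\mat{\Gamma}_{\rm st}=(\vet{c}\;\;0)^*(\vet{c}\;\;0)\succeq 0$, the inequality $\dotprod{\mat{\Gamma}_{\rm st}}{\mat{R}\mat{R}^*}\leq 0$ forces $(\vet{c}\;\;0)\vet{r}=0$ for \emph{every} column $\vet{r}$ of $\mat{R}$. The paper proceeds as follows:
\begin{itemize}
\item It uses Lemma~\ref{lemma:1vector}(a) only to rotate the factor of $\mat{Y}_1$ so that its first column $\vet{\xi}=(\vet{f};\vet{g};\vet{v})$, taken nonzero, satisfies $\vet{f}\vet{g}^*+\vet{g}\vet{f}^*\succeq 0$.
\item It then applies Lemma~\ref{lemma:sqrt}, which is the rigorous form of your ``matching the terms'' and relies on full column rank of $\left(\begin{smallmatrix}\mat{A}&\vet{b}\\ \eye&\vet{0}\end{smallmatrix}\right)$. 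This yields $\mat{R}$ whose first column $(\vet{r}_1;r_2)$ satisfies $\vet{r}_1=[\mat{\Delta}(s_1,\mvartheta_1)-\mat{A}]^{-1}\vet{b}\,r_2$, or $\vet{r}_1=(\vet{0};[\mat{\Theta}(\mvartheta_1)-\mat{\Pi}]^{-1}\vet{b}_\theta r_2)$ if $\vet{g}=\vet{0}$.
\item Nonsingularity gives $r_2\neq 0$, so $\vet{c}\vet{r}_1=0$ exhibits a zero of $\beta$ at $(s_1,\mvartheta_1)$ or at $(\infty,\mvartheta_1)$.
\end{itemize}

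A smaller point concerns closedness. What you need closed is $\mathrm{conv}\,\mathcal{C}$, so that $\mathcal{P}^*$-membership yields a finite sum. This follows from compactness of the normalized generators, together with $\mathcal{Q}^*\subseteq\Hmat_+$ and full column rank of $\mat{M}(\mvartheta)$, which keep $\mat{0}$ out of their hull (Lemma~\ref{lemma:p-dual}). It does not come from normalizing $\mathrm{tr}\,\mat{Z}$, and it is unrelated to the strictness of \eqref{eq:lmi-stab-cond}.
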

\begin{proof}
    See Appendix \ref{sec:appendix2}.
\end{proof}
A convenient way to write \eqref{eq:lmi-stab-cond}, which also makes it explicit that $\mat{A}$ and $\vet{b}$ depend on the model coefficients $\boldsymbol{\beta}$, is $\mat{\Gamma}_{\rm st}-\mat{K}_{\rm st}(\boldsymbol{\beta})^*\mat{P}\mat{K}_{\rm st}(\boldsymbol{\beta})\succ 0$ using the abbreviations
\begin{equation}
    \mat{\Gamma}_{\rm st} = \begin{pmatrix}
        \vet{c}^*\vet{c} & \vet{0} \\ \star & 0
    \end{pmatrix},
    \qquad 
    \mat{K}_{\rm st}(\boldsymbol{\beta})=\begin{pmatrix}
         \mat{A} & \vet{b} 
         \\
         \eye &\vet{0}
     \end{pmatrix}.
\end{equation}
Proposition \ref{prop:stab-criterion} states that the condition $\beta(s,\mvartheta)\neq 0$ on $\bar{\Lambda}$ is equivalent to the matrix inequality \eqref{eq:lmi-stab-cond} and the robust condition \eqref{eq:q-cond} that the polynomial matrix $\mat{M}(\mvartheta)^*\mat{P}\mat{M}(\mvartheta)$, with quadratic dependence on $\mvartheta$, is contained in $\mathcal{Q}$.
The latter constraint is still too abstract to be practically applicable because, in addition to being a robust inequality in $\mvartheta$, it is stated in terms of generalized positivity with respect to $\mathcal{Q}$. Nevertheless, the Bernstein relaxation remains useful because it can be adapted to convert \eqref{eq:q-cond} into a finite set of standard LMIs. Hence, the issue of making condition \eqref{eq:q-cond} computationally practical is addressed in the next section through the Bernstein relaxation.

It is important to observe that the matrix inequality \eqref{eq:lmi-stab-cond} is not linear in the model coefficients $\boldsymbol{\beta}$ to be optimized, since they appear in $\mat{A}$ and $\vet{b}$. Although a way to handle the resulting non-convexity will be described later, we first complete the picture by introducing a natively convex but conservative criterion obtained by considering the ESP property again. 

\subsection{ESP Condition}

The ideas described in the previous section can also be used to derive a conservative stability criterion based on ESP functions and a parameter-independent certificate. In fact, for the ESP property, we need to verify positivity of the quadratic form
\begin{equation}
    2\re \beta(s,\mvartheta) =\begin{pmatrix}
        [s\eye-\mat{\Sigma}]^{-1}\vet{b}_{\sigma}
        \\
        [\mat{\Theta}(\mvartheta)^*-\mat{\Pi}^*]^{-1}\vet{\eta}^*
        \\
        1
    \end{pmatrix}^* \begin{pmatrix}
        0 & \mat{A}_{21}^* & \vet{0} \\ \star & \mat{0} &\vet{b}_{\theta}
        \\
       \star&\star&0
    \end{pmatrix}\begin{pmatrix}
        [s\eye-\mat{\Sigma}]^{-1}\vet{b}_{\sigma}
        \\
        [\mat{\Theta}(\mvartheta)^*-\mat{\Pi}^*]^{-1}\vet{\eta}^*
        \\
        1
    \end{pmatrix}>0
    \label{eq:pr-positivity-condition}
\end{equation}
for all $s\in\jj\mathbb{R}\cup\infty$ and $\mvartheta\in[0,1]^k$. The full-block S-procedure applies as above to formulate LMI conditions equivalent to the quadratic constraint \eqref{eq:pr-positivity-condition}. Retracing the steps to make the dependence on $s$ explicit leads to the definition of the matrix
\begin{equation}
    \mat{M}_0(\mvartheta) = \begin{pmatrix}
        \eye_\nu &\mat{0} &\mat{0}
        \\
        \mat{0}
        &
        \mat{0}
        &
        \mat{\Theta}^*(\mvartheta)
        \\
        \mat{0}  
        &
        \eye_\nu
        & \mat{0}
        \\
        \mat{0}  
        &
        \mat{0}
        & \eye_{K}
    \end{pmatrix}
\end{equation}
that is used analogously to $\mat{M}(\mvartheta)$ to write a semi-infinite constraint on $s$ similar to \eqref{eq:s-highlight}. Since, in this case, the resulting semi-infinite constraint must be checked only on the (extended) imaginary axis rather than on the closed RHP, a larger set of matrices may be considered according to the following definition: 
\begin{equation}
    \mathcal{Q}_0 = \left\{\mat{Q}\in\Hmat^{2\nu+K}: \exists \, \mat{X}\in\Hmat^{\nu},\;\mat{Q}\succeq \begin{pmatrix}
        \mat{0} & \mat{X} &\mat{0} \\ \mat{X} & \mat{0} & \mat{0}
        \\ \mat{0} & \mat{0} & \mat{0}
    \end{pmatrix}\right\}.
\end{equation}
The following statement represents the counterpart of the ESP criterion in Prop. \ref{prop:prl}(b) based on a parameter-independent certificate.
\begin{proposition}\label{prop:pr-criterion}
    Let $\beta(s,\mvartheta)=\vet{c}[\mat{\Delta}(s,\mvartheta)-\mat{A}]^{-1}\vet{b}$ with  $\mat{A}$, $\vet{b}$, $\vet{c}$, $\mat{\Delta}(s,\mvartheta)$ given in \eqref{eq:beta-realz-matrices}. Assume that $\mat{\Delta}(s,\mvartheta)-\mat{A}$ is nonsingular for $(s,\mvartheta)\in\bar{\mathbb{C}}_+\times [0,1]^k$. Then $\beta(\cdot,\mvartheta)$ is ESP for all $\mvartheta\in[0,1]^k$ if and only if there exists a Hermitian matrix $\mat{P}$ such that
    \begin{align}
&\mat{M}_0(\mvartheta)^*\mat{P}\mat{M}_0(\mvartheta)\succeq_{\mathcal{Q}_{0}} 0 \qquad \forall \,\mvartheta\in [0,1]^k,\label{eq:q-cond-pr}
     \\
     &\underbrace{\begin{pmatrix}
        0 & \mat{A}_{21}^* & \vet{0} \\ \star & \mat{0} &\vet{b}_{\theta}
        \\
       \star&\star&0
    \end{pmatrix}}_{\triangleq \mat{\Gamma}_{\rm esp}(\boldsymbol{\beta})}
     - 
     \begin{pmatrix}
             \mat{\Sigma} & \mat{0} & \vet{b}_{\sigma}
             \\
             \mat{0}&\mat{\Pi}^* & \vet{\eta}^*
          \\
             \eye & \mat{0}& \mat{0}
             \\
             \mat{0} & \eye & \mat{0}
     \end{pmatrix}^*
     \mat{P}
     \underbrace{
    \begin{pmatrix}
             \mat{\Sigma} & \mat{0} & \vet{b}_{\sigma}
             \\
             \mat{0}&\mat{\Pi}^* & \vet{\eta}^*
          \\
             \eye & \mat{0}& \mat{0}
             \\
             \mat{0} & \eye & \mat{0}
     \end{pmatrix}}_{\triangleq \mat{K}_{\rm esp}}\succ 0.
     \label{eq:lmi-pr-cond}
\end{align}
\end{proposition}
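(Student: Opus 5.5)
We plan to follow the same route as for Prop.~\ref{prop:stab-criterion}: a full-block S-procedure for sufficiency, and a duality (separating-hyperplane) argument for necessity. Throughout, we write $\vet{z}(s,\mvartheta)$ for the vector $\bigl([s\eye-\mat{\Sigma}]^{-1}\vet{b}_{\sigma};\,[\mat{\Theta}(\mvartheta)^*-\mat{\Pi}^*]^{-1}\vet{\eta}^*;\,1\bigr)$ appearing in \eqref{eq:pr-positivity-condition}. Two observations drive the argument.

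First, this vector is the image of a "generating" vector $\vet{g} = \bigl([s\eye-\mat{\Sigma}]^{-1}\vet{b}_{\sigma};\,[\mat{\Theta}^*-\mat{\Pi}^*]^{-1}\vet{\eta}^*;\,1\bigr)$ under $\mat{K}_{\rm esp}$ combined with $\mat{\Delta}_0(s,\mvartheta)=\mathrm{diag}(s\eye,\mat{\Theta}(\mvartheta)^*)$. Concretely, from $s\vet{x}=\mat{\Sigma}\vet{x}+\vet{b}_\sigma$ and $\mat{\Theta}^*\vet{y}=\mat{\Pi}^*\vet{y}+\vet{\eta}^*$ we get
\[
\mat{K}_{\rm esp}\vet{z}=\begin{pmatrix}\mat{\Delta}_0\\ \eye\end{pmatrix}\begin{pmatrix}\vet{x}\\ \vet{y}\end{pmatrix}
= \mat{M}_0(\mvartheta)\,\bigl(s\vet{x};\,\vet{x};\,\vet{y}\bigr).
\]

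Second, for $s\in\jj\mathbb{R}$ and any Hermitian $\mat{X}$, the skew form $\bigl(s\vet{x};\vet{x};\vet{y}\bigr)^*\bigl(\begin{smallmatrix}\mat{0}&\mat{X}&\mat{0}\\ \mat{X}&\mat{0}&\mat{0}\\ \mat{0}&\mat{0}&\mat{0}\end{smallmatrix}\bigr)\bigl(s\vet{x};\vet{x};\vet{y}\bigr)$ equals $2\re(s)\,\vet{x}^*\mat{X}\vet{x}=0$. This is why no sign constraint on $\mat{X}$ is needed, and hence why the definition of $\mathcal{Q}_0$ drops $\mat{X}\succeq0$.

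\textbf{Sufficiency.} Take $\vet{z}^*(\cdot)\vet{z}$ of \eqref{eq:lmi-pr-cond}. This gives $2\re\beta(s,\mvartheta)>\vet{z}^*\mat{K}_{\rm esp}^*\mat{P}\mat{K}_{\rm esp}\vet{z}$, and by \eqref{eq:q-cond-pr} together with the identity above, the right-hand side is $\geq 0$ for $s\in\jj\mathbb{R}$. To obtain a uniform $\epsilon$, we note that \eqref{eq:lmi-pr-cond} gives a margin $\mu\|\vet{z}\|^2$, and $\|\vet{z}\|\geq 1$ because its last entry is $1$. Hence $\re\beta(\jj\omega,\mvartheta)\geq\mu/2$. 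The point at infinity is covered by the same bound taken as a limit, since $\vet{x}\to 0$ and $\vet{z}$ stays bounded. Analyticity of $\beta(\cdot,\mvartheta)$ in an open half plane containing $\bar{\mathbb{C}}_+$ follows because $\mat{\Sigma}$ is Hurwitz, and $\mat{\Theta}(\mvartheta)-\mat{\Pi}$ is invertible by the nonsingularity assumption. Therefore $\beta(\cdot,\mvartheta)$ is ESP for every $\mvartheta$.

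\textbf{Necessity.} This is the main obstacle. Suppose $\beta(\cdot,\mvartheta)$ is ESP for all $\mvartheta$, so that $\vet{z}^*\mat{\Gamma}_{\rm esp}\vet{z}\geq 2\epsilon$ on the compact set $(\jj\mathbb{R}\cup\infty)\times[0,1]^k$; compactness plus continuity gives a uniform $\epsilon$. Assume for contradiction that no $\mat{P}$ exists. The set of pairs $\bigl(\mat{M}_0^*\mat{P}\mat{M}_0|_{\mvartheta},\ \mat{\Gamma}_{\rm esp}-\mat{K}_{\rm esp}^*\mat{P}\mat{K}_{\rm esp}\bigr)$ is affine in $\mat{P}$, and we want to separate it from the convex cone $\prod_{\mvartheta}\mathcal{Q}_0\times\Hmat_{++}$. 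We would do this with a Hahn–Banach argument in the space of continuous Hermitian-valued functions on $[0,1]^k$, following the appendix proof of Prop.~\ref{prop:stab-criterion}.

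The separation yields a nonzero $\mat{Z}\succeq0$ and a nonnegative measure $\mu$ on $[0,1]^k$ with densities $\mat{Y}(\mvartheta)\in\mathcal{Q}_0^*$ such that
\[
\mat{K}_{\rm esp}\mat{Z}\mat{K}_{\rm esp}^*=\int \mat{M}_0\mat{Y}\mat{M}_0^*\,d\mu
\quad\text{and}\quad
\dotprod{\mat{\Gamma}_{\rm esp}}{\mat{Z}}\leq0 .
\]
We then characterize $\mathcal{Q}_0^*$: a matrix $\mat{Y}\succeq0$ whose off-diagonal $\nu\times\nu$ block $\mat{Y}_{12}$ satisfies $\mat{Y}_{12}+\mat{Y}_{12}^*=0$. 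Using the generalized KYP-type decomposition of \cite{generalized-kyp}, this is a sum of rank-one terms $\bigl(s_j\vet{u}_j;\vet{u}_j;\vet{v}_j\bigr)(\cdot)^*$ with $s_j\in\jj\mathbb{R}$, plus limit terms corresponding to $s=\infty$.

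Substituting this decomposition into the equality constraint and using invertibility of $\mat{\Delta}_0-\mat{A}$ on the imaginary axis, we aim to show that $\mat{Z}$ is a conic combination (an integral) of $\vet{z}(s,\mvartheta)\vet{z}(s,\mvartheta)^*$ over $s\in\jj\mathbb{R}\cup\infty$. Because $\mat{Z}\neq0$, this would give $\dotprod{\mat{\Gamma}_{\rm esp}}{\mat{Z}}\geq 2\epsilon\,\mathrm{tr}$-mass $>0$, which is the desired contradiction. The delicate points, where we expect to spend the effort, are twofold. One is reconstructing $\vet{z}$-type vectors from the dyads, which requires carrying the $\mvartheta$-block through $\mat{\Theta}^*$ pointwise in $\mu$. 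The other is closedness of the relevant cone, which we need so the separation is strict; for this we would use compactness of $[0,1]^k$ and of the extended imaginary axis.
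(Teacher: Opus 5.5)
Your proposal is correct and follows essentially the paper's route. Sufficiency multiplies \eqref{eq:lmi-pr-cond} by your $\vet{z}(s,\mvartheta)$ and uses that the skew term vanishes on $\jj\mathbb{R}$. Necessity uses separation, the description of $\mathcal{Q}_0^*$ from Lemma~\ref{lemma:dualq}, the rank-one splitting with $\vet{f}=\jj\omega\vet{g}$ or $\vet{g}=\vet{0}$, and identifies each column of a square root of $\mat{Z}$ as a multiple of $\vet{z}(s,\mvartheta)$ on the extended imaginary axis. The only deviation is your measure-valued Hahn--Banach step: the paper separates in finite dimensions over the solution set $\mathcal{P}_0$ of \eqref{eq:q-cond-pr} and uses Lemma~\ref{lemma:p-dual}, whose closedness-by-compactness argument gives a finite sum to which Lemma~\ref{lemma:sqrt} applies directly, so the closedness you flag is what makes the representation finite, not what makes the separation strict.
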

\begin{proof}
    See Appendix \ref{sec:pr-proof}.
\end{proof}
Inequality \eqref{eq:lmi-pr-cond} is an LMI in the model coefficients, as they appear only linearly in the matrices $\mat{A}_{21}$ and $\vet{b}_{\theta}$. It can also be written compactly as $\mat{\Gamma}_{\rm esp}(\boldsymbol{\beta}) - \mat{K}_{\rm esp}^*\mat{P}\mat{K}_{\rm esp}\succ 0$.

\section{Stability Enforcement }\label{sec:polya}
This section is concerned with using the criteria in Sec. \ref{sec:global-cert} to explicitly formulate optimization problems to find stable models minimizing the cost $J(\boldsymbol{\beta})$. The exact stability condition given in Prop. \ref{prop:stab-criterion} leads to the optimization problem
\begin{subequations}
\begin{align}
    \operatorname*{minimize}_{\boldsymbol{\beta}, \mat{P},\mu>0}\, J(\boldsymbol{\beta})\quad \textrm{s.t.}\; &
    \mu\mat{\Gamma}_{\rm st} - \mat{K}_{\rm st}(\boldsymbol{\beta})^*\mat{P}\mat{K}_{\rm st}(\boldsymbol{\beta})\succeq \eye,
   \label{eq:lmi-stab-cond-prob} \\&
\mat{M}(\mvartheta)^*\mat{P}\mat{M}(\mvartheta)\succeq_{\mathcal{Q}} 0 \quad\forall\,\mvartheta\in [0,1]^k.\label{eq:q-cond-prob}
\end{align}\label{prob:stable-model-abstract}
\end{subequations}
where the positive rescaling variable $\mu$ has been introduced to rewrite \eqref{eq:lmi-stab-cond} equivalently as a nonstrict inequality. Because of the product $\mat{K}_{\rm st}(\boldsymbol{\beta})^*\mat{P}\mat{K}_{\rm st}(\boldsymbol{\beta})$, \eqref{prob:stable-model-abstract} is not a convex SDP problem. On the other hand, the conservative ESP criterion of Prop. \ref{prop:pr-criterion} leads to a convex SDP problem
\begin{subequations}
\begin{align}
    \operatorname*{minimize}_{\boldsymbol{\beta}, \mat{P}}\, J(\boldsymbol{\beta})\quad \textrm{s.t.}\; &\mat{\Gamma}_{\rm esp}(\boldsymbol{\beta}) - \mat{K}_{\rm esp}^*\mat{P}\mat{K}_{\rm esp}\succeq \eye, 
    \label{eq:spr-model-1}
    \\
    &
\mat{M}_0(\mvartheta)^*\mat{P}\mat{M}_0(\mvartheta)\succeq_{\mathcal{Q}_0} 0 \quad\forall\,\mvartheta\in [0,1]^k.\label{eq:q-cond-pr-prob}
\end{align}
\label{prob:spr-model}
\end{subequations} 
It is possible to write \eqref{eq:spr-model-1} as a nonstrict inequality because the original \eqref{eq:lmi-pr-cond} is homogeneous in $\boldsymbol{\beta}$ and $\mat{P}$, and $J(\vet{\beta})$ is also homogeneous.
Problem \eqref{prob:spr-model} is convex and thus solvable numerically to obtain a model with an ESP denominator, hence stable.

Both problems \eqref{prob:stable-model-abstract}-\eqref{prob:spr-model} still contain the robust conditions \eqref{eq:q-cond-prob} or \eqref{eq:q-cond-pr-prob}. Focusing on the former, we now show how it is converted into a finite set of standard LMIs through a generalization of the Bernstein relaxation. Note that the use of the Bernstein basis in the relaxation is completely independent from the model parameterization. In fact, the Bernstein relaxation as formulated next is generally applicable to replace a generalized positivity condition on a polynomial matrix $\mat{M}(\mvartheta)^*\mat{P}\mat{M}(\mvartheta)\succeq_{\mathcal{Q}}0$ with a finite set of standard LMIs.
The Bernstein relaxation applied to the robust LMI \eqref{eq:q-cond-prob} requires to express the quadratic polynomial $\mat{W}(\mvartheta) = \mat{M}(\mvartheta)^*\mat{P}\mat{M}(\mvartheta)$ in the Bernstein basis with a given elevated degree $\bar{\delta}\geq 2$ and with Hermitian coefficients $\mat{W}_{{\rm ber},\boldsymbol{n}}^{(\bar{\delta})}(\mat{P})$ depending on $\mat{P}$, i.e.
\begin{equation}
    \mat{W}(\mvartheta) = \sum_{\boldsymbol{n}\leq \bar{\delta}} \mat{W}_{{\rm ber}, \boldsymbol{n}}^{(\bar{\delta})}(\mat{P})\bernbas^{(\bar{\delta})}_{\boldsymbol{n}}(\mvartheta).
\end{equation}
For any $\bar{\delta}\geq 2$, the finite set of conditions 
\begin{equation}
    \mat{W}_{{\rm ber},\boldsymbol{n}}^{(\bar{\delta})}(\mat{P})\succeq_{\mathcal{Q}} 0,\qquad  \boldsymbol{n}\leq\bar{\delta}\label{eq:bern-q}
\end{equation}
implies $\mat{W}(\mvartheta)\succeq_{\mathcal{Q}} 0$ for all $\mvartheta \in [0,1]^k$. This is because the Bernstein basis polynomials are nonnegative in the unit hypercube, making $\mat{W}(\mvartheta)$ a conic combination of its Bernstein coefficients $\mat{W}_{{\rm ber},\boldsymbol{n}}^{(\bar{\delta})}(\mat{P})$. Replacing \eqref{eq:q-cond-prob} with \eqref{eq:bern-q} yields the relaxation of Problem \eqref{prob:stable-model-abstract} with finitely many LMI constraints,
\begin{subequations}
\begin{align}
    \operatorname*{minimize}_{\boldsymbol{\beta}, \mat{P}, \{\mat{X}_{\boldsymbol{n}}\},\mu>0}\, J(\boldsymbol{\beta})\quad \textrm{s.t.}\; &\mu\mat{\Gamma}_{\rm st} - \mat{K}_{\rm st}(\boldsymbol{\beta})^*\mat{P}\mat{K}_{\rm st}(\boldsymbol{\beta})\succeq \eye,\label{eq:lmi-stab-cond-prob-2}
    \\
    &\mat{W}_{{\rm ber},\boldsymbol{n}}^{(\bar{\delta})}(\mat{P})\succeq \begin{pmatrix}
        0 & \mat{X}_{\boldsymbol{n}} & 0
        \\
        \star & 0 & 0
        \\
        \star&\star&0
    \end{pmatrix}, \; \mat{X}_{\boldsymbol{n}}\succeq 0
\label{eq:stable-model-b}
\end{align}\label{prob:stable-model}
\end{subequations}
where \eqref{eq:stable-model-b} is the explicit LMI corresponding to $\mat{W}_{{\rm ber},\boldsymbol{n}}^{(\bar{\delta})}(\mat{P})\succeq_{\mathcal{Q}} 0$, by definition of $\mathcal{Q}$. The relaxation of \eqref{prob:spr-model} follows by applying the same reasoning to the polynomial $\mat{W}(\mvartheta) = \mat{M}_0(\mvartheta)^*\mat{P}\mat{M}_0(\mvartheta)$.

The relaxed constraints \eqref{eq:stable-model-b} are more restrictive than \eqref{eq:q-cond-prob}. Therefore, the optimal cost of \eqref{prob:stable-model} is larger than that of \eqref{prob:stable-model-abstract}. 
However, it can be shown that the suboptimality introduced by replacing Problem \eqref{prob:stable-model-abstract} with \eqref{prob:stable-model} can be made negligible by increasing the elevation degree $\bar{\delta}$. The key point to consider is whether the relaxed conditions \eqref{eq:bern-q} are also necessary, rather than merely sufficient, for the original robust LMI.
To this end, we adapt the technique of \cite{scherer2005} to derive the following analogue of \cite[Theorem 7.1]{scherer2005}.
\begin{proposition}\label{prop:polya}
    Let $\mathcal{Q}\subset\Hmat$ be a closed convex cone and $\mat{W}(\mvartheta)$ be an Hermitian polynomial matrix such that $
    \mat{W}(\mvartheta)\succ_{\mathcal{Q}}0\; \forall \,\mvartheta\in[0,1]^k$.
    Then there is an elevation degree $\bar{\delta}$ for which the representation of $\mat{W}(\mvartheta)$ in the degree-$\bar{\delta}$ Bernstein basis
    \begin{equation}
        \mat{W}(\mvartheta) = \sum_{\boldsymbol{n}\leq \bar{\delta}}\mat{W}_{{\rm ber},\boldsymbol{n}}^{(\bar{\delta})}\mathscr{B}^{(\bar{\delta})}_{\boldsymbol{n}}(\mvartheta)
    \end{equation}
    satisfies $\mat{W}_{{\rm ber},\boldsymbol{n}}^{(\bar{\delta})}\succeq_{\mathcal{Q}}0$. Moreover, Bernstein coefficients $\mat{W}_{{\rm ber},\boldsymbol{n}}^{(\delta)}$ of a representation with higher degree $\delta\geq \bar{\delta}$ also satisfy $\mat{W}_{{\rm ber},\boldsymbol{n}}^{({\delta})}\succeq_{\mathcal{Q}}0$. 
\end{proposition}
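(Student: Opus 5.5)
The plan is to show that, under degree elevation, the Bernstein coefficients converge uniformly to samples of $\mat{W}$ on a grid, and then to use the fact that $\mat{W}$ lies in the interior of $\mathcal{Q}$ uniformly over the compact hypercube.

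\emph{Step 1: a uniform margin.} Since $\mat{W}(\mvartheta)\succ_{\mathcal{Q}}0$ means $\mat{W}(\mvartheta)\in\mathrm{int}\,\mathcal{Q}$, and $\mvartheta\mapsto\mat{W}(\mvartheta)$ is continuous on the compact set $[0,1]^k$, the image $\{\mat{W}(\mvartheta)\}$ is a compact subset of the open set $\mathrm{int}\,\mathcal{Q}$. Hence there is $\epsilon>0$ such that $\mat{W}(\mvartheta)+\mat{E}\in\mathcal{Q}$ for all $\mvartheta\in[0,1]^k$ and all Hermitian $\mat{E}$ with $\snorm{\mat{E}}_F\le\epsilon$. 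This holds because the distance from a compact set to the closed complement of an open set is positive.

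\emph{Step 2: Bernstein coefficients approach grid samples.} Let $\delta_0$ be the maximal degree of $\mat{W}$ in each variable. For $\bar\delta\ge\delta_0$, the degree-$\bar\delta$ Bernstein coefficients are obtained entrywise. Because the multivariate Bernstein basis is a tensor product, the classical univariate estimate extends variable by variable:
\[
\Big\|\mat{W}^{(\bar\delta)}_{{\rm ber},\boldsymbol{n}}-\mat{W}(\boldsymbol{n}/\bar\delta)\Big\|_F\le \frac{C}{\bar\delta}\qquad\forall\,\boldsymbol{n}\le\bar\delta,
\]
with $C$ depending only on $\mat{W}$ (on its monomial coefficients, $\delta_0$, and $k$). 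The univariate estimate follows by writing each monomial $x^m$ in degree-$\bar\delta$ Bernstein form, where its coefficient is $\binom{n}{m}/\binom{\bar\delta}{m}=(n/\bar\delta)^m+O(1/\bar\delta)$ uniformly in $n$. This is exactly the mechanism of \cite[Theorem 7.1]{scherer2005}, with Pólya's simplex argument replaced by its hypercube/Bernstein analogue. This step is the main technical obstacle. I would handle it by proving the bound once for a scalar monomial $\mvartheta^{\boldsymbol{m}}$ through the product formula, then summing over finitely many monomials, using the triangle inequality and linearity of the change of basis.

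\emph{Step 3: conclusion.} Choose $\bar\delta\ge C/\epsilon$. Each coefficient is then $\mat{W}(\boldsymbol{n}/\bar\delta)+\mat{E}_{\boldsymbol{n}}$ with $\snorm{\mat{E}_{\boldsymbol{n}}}_F\le\epsilon$, and since $\boldsymbol{n}/\bar\delta\in[0,1]^k$, Step 1 gives $\mat{W}^{(\bar\delta)}_{{\rm ber},\boldsymbol{n}}\succeq_{\mathcal{Q}}0$. In fact these coefficients lie in the interior of $\mathcal{Q}$ once $C/\bar\delta<\epsilon$. The "moreover" part follows in either of two ways. First, the same bound holds for every $\delta\ge\bar\delta$, since $C/\delta\le\epsilon$. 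Alternatively, degree elevation from $\delta$ to $\delta+1$ writes each new coefficient as a convex combination of old coefficients: in one variable, $\mat{W}^{(\delta+1)}_{n}=\frac{n}{\delta+1}\mat{W}^{(\delta)}_{n-1}+\big(1-\frac{n}{\delta+1}\big)\mat{W}^{(\delta)}_{n}$, applied in each variable in turn. Convex combinations of elements of the convex cone $\mathcal{Q}$ stay in $\mathcal{Q}$, so induction on $\delta$ completes the proof.
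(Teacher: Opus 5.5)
Your proof is correct, but it takes a different route from the paper's: you work in the primal space, while the paper works through the dual cone.

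\textbf{The paper's route.} It scalarizes. It pairs $\mat{W}(\mvartheta)$ with normalized $\mat{Y}\in\mathcal{Q}^*$ and applies the scalar lower bound of \cite[Theorem 3.4(iii)]{deklerk-bounds} to the smallest Bernstein coefficient of $g(\mvartheta,\mat{Y})=\dotprod{\mat{W}(\mvartheta)}{\mat{Y}}$. It makes the constants uniform using compactness of $\{\mat{Y}\in\mathcal{Q}^*:\snorm{\mat{Y}}_F=1\}$, and returns to $\mathcal{Q}$ through $\mathcal{Q}^{**}=\mathcal{Q}$.

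\textbf{Your route.} You take a norm-ball margin $\epsilon$ around the compact image $\mat{W}([0,1]^k)\subset\mathrm{int}\,\mathcal{Q}$. This $\epsilon$ plays the role of the paper's $g_{\rm min}$; the two are dual descriptions of the same margin. You then replace the cited bound with the elementary matrix estimate $\snorm{\mat{W}^{(\bar\delta)}_{{\rm ber},\boldsymbol{n}}-\mat{W}(\boldsymbol{n}/\bar\delta)}_F\le C/\bar\delta$. This estimate holds: $\binom{n}{m}/\binom{N}{m}$ differs from $(n/N)^m$ by at most $\binom{m}{2}/N$ uniformly in $n$. Since the tensor-product factors lie in $[0,1]$, the errors add by telescoping.

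\textbf{What each buys.} Your argument is self-contained: it needs no dual cone, no bipolar theorem and no external error bound. Its key estimate also implies the scalar bound the paper cites, up to the constant. The paper's argument has the advantage of plugging an off-the-shelf scalar result into the template of \cite[Theorem 7.1]{scherer2005}.

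\textbf{The ``moreover'' part.} The paper's proof only shows that every degree above its threshold $CL/g_{\rm min}$ works. Your degree-elevation argument, where each new coefficient is a convex combination of old ones, is stronger. It shows that once the coefficients lie in $\mathcal{Q}$ at some degree, they stay there at every higher degree. Consequently the feasible sets of the Bernstein relaxations are nested and their optimal values are nonincreasing in $\bar\delta$, which the paper's proof does not deliver.
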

\begin{proof}
    See Appendix \ref{sec:appendix3}.
\end{proof}
In other words, the strict version of \eqref{eq:q-cond-prob} is equivalent to its Bernstein relaxation \eqref{eq:stable-model-b} for a sufficiently large $\bar{\delta}$. 

If \eqref{prob:stable-model-abstract} was formulated only in terms of strict inequalities, the reasoning in \cite[Theorem 7.2]{scherer2005} would be directly applicable to deduce from Proposition \ref{prop:polya} that the optimal cost of the Bernstein relaxation \eqref{prob:stable-model} converges to the optimal cost of \eqref{prob:stable-model-abstract} as $\bar{\delta}\to \infty$. Since \eqref{eq:q-cond-prob} is nonstrict, we make the following observation to show that the same conclusion still holds.

\begin{remark}\label{rem:convergence}
Let $\mathcal{F}$ be the feasible set of \eqref{prob:stable-model-abstract}, and $\mathcal{F}_{\bar{\delta}}$ the feasible set of its degree-$\bar{\delta}$ relaxation \eqref{prob:stable-model}. Then for any $(\boldsymbol{\beta}, \mat{P},\mu)\in\mathcal{F}$, there is a point $(\boldsymbol{\beta}, \mat{P}',\mu')$ that is arbitrarily close to $(\boldsymbol{\beta}, \mat{P},\mu)$ and satisfies $(\boldsymbol{\beta}, \mat{P}',\mu')\in\mathcal{F}_{\bar{\delta}}$ for a sufficiently large $\bar{\delta}$. In fact, fix an arbitrarily small $\epsilon$ with $0<\varepsilon<1$. A sufficiently small $\mat{P}_{\rm sm}\succ 0$ always exists for which
$\mu\mat{\Gamma}_{\rm st} - \mat{K}_{\rm st}(\boldsymbol{\beta})^*(\mat{P}+\mat{P}_{\rm sm})\mat{K}_{\rm st}(\boldsymbol{\beta})\succeq (1-\varepsilon)\eye$. 
Define $\mat{P}' = (\mat{P}+\mat{P}_{\rm sm})/(1-\varepsilon)$ and $\mu' = \mu/(1-\varepsilon)$.
Since $\mat{M}(\mvartheta)$ has full column rank, $\mat{M}(\mvartheta)^*\mat{P}_{\rm sm}\mat{M}(\mvartheta)\succ  0$, hence $\mat{M}(\mvartheta)^*\mat{P}'\mat{M}(\mvartheta)\succ_{\mathcal{Q}}  0$ for all $\mvartheta \in [0,1]^k$. 
Therefore the point $(\boldsymbol{\beta}, \mat{P}', \mu')$ is still feasible for \eqref{prob:stable-model-abstract} with strict $\succ_{\mathcal{Q}}$. By Proposition \ref{prop:polya}, there is a $\bar{\delta}$ such that this triple is also feasible for the Bernstein relaxation \eqref{prob:stable-model} of degree $\bar{\delta}$, i.e. $(\boldsymbol{\beta}, \mat{P}',\mu')\in\mathcal{F}_{\bar{\delta}}$, or larger. 
A similar argument holds for problem \eqref{prob:spr-model}. In that case, with arbitrarily small $\varepsilon>0$ and $\mat{P}_{\rm sm}\succ 0$, any point $(\vet{\beta},\mat{P})$ in the feasible set $\mathcal{F}$ of \eqref{prob:spr-model} can be approximated by a point $(\vet{\beta}',\mat{P}')$ with $\vet{\beta}'=\vet{\beta}/(1-\varepsilon)$, $\mat{P}'=(\mat{P}+\mat{P}_{\rm sm})/(1-\varepsilon)$  that is in the feasible set $\mathcal{F}_{\bar{\delta}}$ of a Bernstein relaxation with sufficiently large $\bar{\delta}$. 
\end{remark}

 Remark \ref{rem:convergence} shows that in both problems \eqref{prob:stable-model-abstract}-\eqref{prob:spr-model} any feasible point can be approximated arbitrarily well by a point in the feasible set of a relaxation with sufficiently large $\bar{\delta}$. Consequently, the optimal values of the relaxed problems converge from above to those of \eqref{prob:stable-model-abstract}-\eqref{prob:spr-model} as $\bar{\delta}\to\infty$.
The next section finally addresses the non-convexity of Problem \eqref{prob:stable-model} to enable the construction of stable models with the exact criterion of Prop. \ref{prop:stab-criterion}.

\section{Local Solution Method}\label{sec:relax}
The purpose of this section is to formulate a computational approach to solve the optimization problem \eqref{prob:stable-model} that uses the exact criterion of Prop. \ref{prop:stab-criterion} instead of requiring ESP-ness. The only obstacle to a direct solution of Problem \eqref{prob:stable-model} as an SDP is the non-convexity of \eqref{eq:lmi-stab-cond-prob-2}.

Non-convexity can be addressed by resorting to a variation of the \emph{Iterative Convex Overbounding} technique (ICO) described in \cite{warner-ico,ico} or the similar technique of \cite{boydmimopid}. A crucial preliminary requirement is to have a feasible solution for \eqref{prob:stable-model}. This can be found through the following procedure: first, a conservatively stable model with an ESP denominator is obtained by solving the convex problem \eqref{prob:spr-model}. Denote the denominator coefficients of this model as $\bar{\boldsymbol{\beta}}$, so that the corresponding optimal cost is $\gamma_0 = J(\bar{\boldsymbol{\beta}})$. This model, although suboptimal, is stable and, as such, it must verify the conditions of Proposition \ref{prop:stab-criterion}. Therefore, a certificate $\bar{\mat{P}}$ can be found by solving the convex feasibility problem corresponding to \eqref{prob:stable-model} after fixing the variables $\boldsymbol{\beta}=\bar{\boldsymbol{\beta}}$. 
The solution $\bar{\mat{P}}$, $\bar{\boldsymbol{\beta}}$, $\bar{\mu}$ obtained in this way is feasible for \eqref{prob:stable-model}.

 At this point the strategy is to replace \eqref{eq:lmi-stab-cond-prob-2} with a stronger LMI condition. To separate convex and concave quadratic terms in \eqref{eq:lmi-stab-cond-prob-2}, we can express $\mat{P}=\mat{P}_+-\mat{P}_-$ as the difference of two positive definite matrices. It is not restrictive to require $\mat{P}_+\succeq \eye, \mat{P}_-\succeq \eye$. Also decompose the initially feasible $\bar{\mat{P}}$ as $\bar{\mat{P}}=\bar{\mat{P}}_+-\bar{\mat{P}}_-$ with $\bar{\mat{P}}_+\succeq \eye, \bar{\mat{P}}_-\succeq \eye$. We use the shorthand 
$\bar{\mat{K}}_{\rm st} = \mat{K}_{\rm st}(\bar{\boldsymbol{\beta}})$. The matrix
\begin{multline}
    \mathcal{R}(\mat{P}_+,\mat{P}_-,\boldsymbol{\beta},\mu; \bar{\mat{P}}_+,\bar{\mat{P}}_-, \bar{\boldsymbol{\beta}}) = \\\begin{pmatrix}
        2\bar{\mat{P}}^{-1}_{+}  - \bar{\mat{P}}_{+}^{-1}\mat{P}_+\bar{\mat{P}}_{+}^{-1}&\mat{0} &\mat{K}_{\rm st}(\boldsymbol{\beta}) 
        \\
        \star& \mat{P}_- & \bar{\mat{P}}_{-}\bar{\mat{K}}_{\rm st}
        \\
       \star &\star & \mu\mat{\Gamma}_{\rm st}-\eye+\bar{\mat{K}}_{\rm st}^*\bar{\mat{P}}_{-}\mat{K}_{\rm st}(\vet{\beta})+(\bar{\mat{K}}_{\rm st}^*\bar{\mat{P}}_{-}\mat{K}_{\rm st}(\vet{\beta}))^*
    \end{pmatrix}
\end{multline}
 is linear in the unknowns and it can be shown that the constraint $\mu\mat{\Gamma}_{\rm st} -\eye- \mat{K}_{\rm st}(\boldsymbol{\beta})^*\mat{P}\mat{K}_{\rm st}(\boldsymbol{\beta})\succeq 0$ that appears in \eqref{eq:lmi-stab-cond-prob-2} is implied by the LMI 
\begin{equation}
    \mathcal{R}(\mat{P}_+,\mat{P}_-,\boldsymbol{\beta},\mu; \bar{\mat{P}}_+,\bar{\mat{P}}_-, \bar{\boldsymbol{\beta}})\succeq 0.
\label{eq:relax}
\end{equation}
This is mainly a consequence of the relation $\mat{Y}^*\mat{X}\mat{Y}\succeq \bar{\mat{Y}}^*\bar{\mat{X}}\mat{Y} + \mat{Y}^*\bar{\mat{X}}\bar{\mat{Y}} - \bar{\mat{Y}}^*\bar{\mat{X}}\mat{X}^{-1}\bar{\mat{X}}\bar{\mat{Y}}$, known as Young's relation \cite{lmicookbook}, which is valid for generic $\mat{X},\bar{\mat{X}}\in\Hmat_{++}$, $\mat{Y},\bar{\mat{Y}}$ and it holds with equality when $\mat{X}=\bar{\mat{X}}$, $\mat{Y}=\bar{\mat{Y}}$. In fact, starting from $\mu\mat{\Gamma}_{\rm st} -\eye- \mat{K}_{\rm st}(\boldsymbol{\beta})^*(\mat{P}_+-\mat{P}_-)\mat{K}_{\rm st}(\boldsymbol{\beta})$, Young's inequality can be used to minorize terms like $\mat{K}_{\rm st}(\boldsymbol{\beta})^*\mat{P}_-\mat{K}_{\rm st}(\boldsymbol{\beta})$, and, combined with Schur complement, one can conclude that the LMI \eqref{eq:relax} implies the nonconvex constraint in \eqref{eq:lmi-stab-cond-prob-2}.

Upon replacing \eqref{eq:lmi-stab-cond-prob-2} with \eqref{eq:relax} in the nonconvex problem \eqref{prob:stable-model}, we obtain a convexification of \eqref{prob:stable-model} centered around $(\bar{\mat{P}}_+,\bar{\mat{P}}_- ,\bar{\boldsymbol{\beta}})$ defined as 
\begin{subequations}
\begin{align}
\operatorname*{minimize}_{\substack{\boldsymbol{\beta}, \mat{P}_+\succeq \eye,\mat{P}_-\succeq \eye,\\
\{\mat{X}_{\boldsymbol{n}}\},\mu>0  }}\, J(\boldsymbol{\beta})\quad 
    \textrm{s.t.}\; &\mathcal{R}(\mat{P}_+,\mat{P}_-,\boldsymbol{\beta},\mu; \bar{\mat{P}}_+,\bar{\mat{P}}_-, \bar{\boldsymbol{\beta}})\succeq 0,
    \\&
    \mat{W}_{{\rm ber},\boldsymbol{n}}^{(\bar{\delta})}(\mat{P}_+-\mat{P}_-)\succeq \begin{pmatrix}
        0 & \mat{X}_{\boldsymbol{n}} & 0
        \\
        \star & 0 & 0
        \\
        \star&\star&0
    \end{pmatrix}, \; \mat{X}_{\boldsymbol{n}}\succeq 0
    \end{align}
\label{prob:relaxed-stable-model}
\end{subequations}
The values $\mat{P}_+=\bar{\mat{P}}_+$, $\mat{P}_-=\bar{\mat{P}}_-$, $\boldsymbol{\beta} = \bar{\boldsymbol{\beta}}$, $\mu=\bar{\mu}$ are still feasible for \eqref{prob:relaxed-stable-model}. Therefore, the optimal cost $\gamma_1 = J(\boldsymbol{\beta})$ of \eqref{prob:relaxed-stable-model} is not greater than the initial $\gamma_0$, i.e. $\gamma_1\leq \gamma_0$. Because \eqref{eq:relax} implies the nonconvex constraint \eqref{eq:lmi-stab-cond-prob-2}, the optimal solution of  \eqref{prob:relaxed-stable-model} is feasible for the original \eqref{prob:stable-model}. Hence, the result of \eqref{prob:relaxed-stable-model} can be used again iteratively as a starting point to define a new convexification of \eqref{prob:stable-model} centered at the newly found solution (as in \cite{warner-ico}).

\begin{algorithm}
\caption{Multivariate Rational Fitting with Stability Constraints}
\label{alg:stable-identification}
\begin{algorithmic}[1]
\Require Dataset $\mathcal{D}$ as in \eqref{eq:dataset-def}, basis poles $\{\sigma_n\}_{n=1}^{\nu}$, parameterization $(\mat{\Theta}(\mvartheta),\mat{\Pi}, \mat{B}_{\theta}, \vet{\eta})$, elevation degree $\bar{\delta}$
\State Initialize weights $w_i \gets 1$, $i=1,\dots,I$

\For{$\ell=1,\dots,n_{\mathrm{psk}}$}
    \State Solve the Bernstein relaxation of \eqref{prob:spr-model} to obtain the optimal model coefficients $\boldsymbol{\beta}$
    \State With this $\boldsymbol{\beta}$, compute denominator values $\beta(s^{(i)},\mvartheta^{(i)})$ and update the weights
    \[
        w_i \gets |\beta(s^{(i)},\mvartheta^{(i)})|^{-1} \qquad i=1,\dots,I
    \]
    \State Redefine the cost function $J(\boldsymbol{\beta})$ using the updated weights $\{w_i\}_{i=1}^I$
\EndFor

\State Set $\bar{\boldsymbol{\beta}} \gets \boldsymbol{\beta}$ and compute cost $\gamma_0 \leftarrow J(\bar{\boldsymbol{\beta}})$.
\State Find feasible $\bar{\mat{P}}=\bar{\mat{P}}_+-\bar{\mat{P}}_-$ with $\bar{\mat{P}}_+,\bar{\mat{P}}_-\succeq \eye$ by solving the convex feasibility problem associated with \eqref{prob:stable-model} after fixing $\boldsymbol{\beta}=\bar{\boldsymbol{\beta}}$.
\For{$\ell=1,\dots,n_{\mathrm{ico}}$}
    \State Solve \eqref{prob:relaxed-stable-model} to obtain optimal $\boldsymbol{\beta}$, $\mat{P}_+$, $\mat{P}_-$, $\mu$ with optimal cost $\gamma_\ell = J(\boldsymbol{\beta})$.
    \State Update the convexification point $
        \bar{\boldsymbol{\beta}} \gets \boldsymbol{\beta}, \;
        \bar{\mat{P}}_+ \gets \mat{P}_+, \;
        \bar{\mat{P}}_- \gets \mat{P}_-
    $
    \If{$
        \frac{\gamma_{\ell-1}-\gamma_\ell}{\gamma_{\ell-1}} < \varepsilon_{\rm ico}
    $}
        \State \textbf{break}
    \EndIf
\EndFor

\State \Return $\boldsymbol{\beta}$
\end{algorithmic}
\end{algorithm}

The algorithmic procedure based on this convexification approach is summarized as part of the overall Algorithm \ref{alg:stable-identification} (lines 9-14).
Problem \eqref{prob:relaxed-stable-model} is solved repeatedly for at most $n_{\rm ico}$ iterations with updated convexification center $(\bar{\mat{P}}_+, \bar{\mat{P}}_-, \bar{\boldsymbol{\beta}})$ each time. At the $\ell$-th iteration, the optimal cost $\gamma_\ell$ is not greater than $\gamma_{\ell-1}$ and the procedure is terminated if the relative change with respect to $\gamma_{\ell-1}$ is smaller than a threshold $\varepsilon_{\rm ico}$. 

Before starting the ICO iteration, the initial part of Algorithm \ref{alg:stable-identification} comprising lines 2-6 is the standard PSK iteration with the additional conservative stability constraint based on ESP. The model retrieved after this first loop is thus expected to have a suboptimal cost $\gamma_0$. The purpose of the ICO iteration based on the exact stability criterion is to find a more accurate model by removing conservatism. 
Note that the ICO procedure leads to the convex problem \eqref{prob:relaxed-stable-model} whose solution satisfies the constraints of the original nonconvex problem \eqref{prob:stable-model}, but is not guaranteed to be globally optimal, as typically happens in nonconvex optimization. In general, ICO only converges to a local optimum as observed in \cite{warner-ico,ico}. However, ICO  returns a sequence of nonincreasing costs throughout iterations, i.e.  $\gamma_0\geq \gamma_1\geq\gamma_2\geq...$. Hence, this procedure is guaranteed to produce a model with a lower cost than the initial guess based on ESP even at the first ICO iteration, and this cost becomes lower as iterations proceed. 
The final model is thus at least as accurate as the one found with the ESP criterion. In practice, numerical results confirm that the final stable model has substantially better accuracy. 

\section{Numerical Results}\label{sec:results}
Four numerical examples are reported in this section to demonstrate the performance of Algorithm \ref{alg:stable-identification}. These include 
the synthetic model from \cite{penzl2006} as well as realistic electronic systems, namely transmission line networks and a high-speed data link.
We describe the experimental methodology before discussing the results.
\paragraph{Parameterization} In the following experiments we consider polynomial and rational parameterizations. Both are constructed from univariate bases defined along each of the $k$ parameters. The complexity of the parameterization is encoded in the $k$-tuple $\boldsymbol{\rho} = (\rho_1,\dots,\rho_k)$, where $\rho_{k'}$ is the number of univariate basis functions along dimension $k'$. The multivariate basis functions in $\boldsymbol{\varphi}(\mvartheta)$ are then taken to be the tensor-product basis using the construction in Appendix \ref{app:parameterization}. The overall dimension is thus $\rho = \rho_1\cdots\rho_k$. 

For the polynomial parameterization, we use the Chebyshev polynomial bases of degrees $\rho_{k'}-1$ along dimension $k'$. On the other hand, the rational (Rat.) parameterization is implemented through a partial fraction basis. The construction consists in choosing $\lfloor\rho_{k'}/2\rfloor$ complex poles that are uniformly spaced on a segment $[0,1]+h\jj $, i.e. $p_{k',n} = \frac{n-1}{\lfloor\rho_{k'}/2\rfloor-1}+h\jj$, $n = 1,\dots,\lfloor\rho_{k'}/2\rfloor$ with a non-zero imaginary part $h$. In case $\rho_{k'}\leq 3$, $p_{k',1}=1+h\jj$. Elementary univariate basis functions are defined along each dimension $k'$ as
\begin{multline}\label{eq:rat-basis}
    \left\{
        \re\left(\frac{1}{\vartheta_{k'}-p_{k',1}}\right), \;\;\operatorname{Im}\left(\frac{1}{\vartheta_{k'}-p_{k',1}}\right), \;\; \cdots,
        \right.
        \\
        \left.\;\; \re\left(\frac{1}{\vartheta_{k'}-p_{k',\lfloor\rho_{k'}/2\rfloor}}\right), \;\; \operatorname{Im}\left(\frac{1}{\vartheta_{k'}-p_{k',\lfloor\rho_{k'}/2\rfloor}}\right)
    \right\}
\end{multline}
with the realization in Appendix \ref{app:parameterization}.
If $\rho_{k'}$ is odd, the basis is completed with an additional simple fraction centered at a real pole outside $[0,1]$.

\paragraph{Comparative analysis} For all examples, \emph{ESP} models are those obtained after running the first part of Algorithm \ref{alg:stable-identification} (up to line 7). These are compared with so-called \emph{stable} models, which are obtained at the end of the algorithm after the ICO iteration. Higher accuracy is expected in stable models as they are based on the exact criterion of Prop. \ref{prop:stab-criterion}. We fix $n_{\rm PSK}=5$ PSK iterations in the first part of Algorithm \ref{alg:stable-identification} and a termination threshold $\varepsilon_{\rm ico} = 10^{-2}$ for the ICO iteration. The elevation degree for Bernstein relaxation is set to $\bar{\delta} = 6$. 

Model accuracy is first evaluated in terms of the value of the PSK cost function $J(\boldsymbol{\beta})$. The optimal cost is denoted as $\gamma_0$ for ESP models as defined in Algorithm \ref{alg:stable-identification}, and $\gamma_{\rm opt}$ for stable models. Besides this quantity, a more significant evaluation is done in terms of the actual \emph{RMS error}, that is the Root-Mean-Square error between data and model across all points in the dataset. In formulae,
$\sqrt{\sum_{i=1}^{I}|\data{H}^{(i)} - H(s^{(i)},\mvartheta^{(i)})|^2/I}.
$

In all examples, after fixing the number $\nu$ of basis poles $\{\sigma_n\}_{n=1}^{\nu}$, their location is chosen by running the standard VF algorithm on samples of $\data{H}(s,\mvartheta)$ for a fixed $\mvartheta$, as suggested by \cite{triverio2009psk}.
Optimization problems are solved using MOSEK \cite{mosek} through the YALMIP interface \cite{yalmip}. Our MATLAB implementation is tested on a 3.3-GHz machine with 64 GB of RAM.

\begin{table}[]
\centering
\begin{tabular}{cccccccc}
\toprule
 & \multicolumn{3}{c}{} 
 & \multicolumn{2}{c}{ESP model} 
 & \multicolumn{2}{c}{Stable model} \\
\cmidrule(lr){2-4}\cmidrule(lr){5-6}\cmidrule(lr){7-8}
Example & Param. & $\nu$ & $\boldsymbol{\rho}$ 
        & $\gamma_0$ & RMS err. 
        & $\gamma_{\rm opt}$ & RMS err. \\
\midrule
\multirow{6}{*}{\makecell{Penzl\\(Sec. \ref{sec:penzl})}} & Poly  & 14 & $4$ & $1.79\cdot 10^{-1}$ & $3.12\cdot 10^{-4}$ & $3.5\cdot 10^{-2}$ & $6.15\cdot 10^{-5}$ \\ 
 & Poly  & 14 & $5$ & $1.52\cdot 10^{-1}$ & $2.54\cdot 10^{-4}$ & $6.91\cdot 10^{-3}$ & $1.18\cdot 10^{-5}$ \\
 & Poly  & 14 & $6$ & $1.52\cdot 10^{-1}$ & $2.37\cdot 10^{-4}$ & $8.49\cdot 10^{-3}$ & $1.35\cdot 10^{-5}$ \\ 
 & Rat.  & 14 & $4$ & $1.13\cdot 10^{-1}$ & $1.7\cdot 10^{-4}$ & $1.06\cdot 10^{-1}$ & $1.62\cdot 10^{-4}$ \\ 
 & Rat.  & 14 & $5$ & $1.53\cdot 10^{-2}$ & $1.12\cdot 10^{-5}$ & $9.62\cdot 10^{-3}$ & $7.54\cdot 10^{-6}$ \\ 
 & Rat. & 14 & $6$ & $1.74\cdot 10^{-2}$ & $1.72\cdot 10^{-5}$ & $9.64\cdot 10^{-3}$ & $1.03\cdot 10^{-5}$ \\ 
\midrule
\multirow{12}{*}{\makecell{TL-RLC\\(Sec. \ref{sec:tl-example})}} & Poly  & 12 & $(4,3)$ & $9.08\cdot 10^{-1}$ & $3.11\cdot 10^{-2}$ & $2.82\cdot 10^{-1}$ & $1.18\cdot 10^{-2}$ \\
 & Poly  & 12 & $(5,3)$ & $4.42\cdot 10^{-1}$ & $1.18\cdot 10^{-2}$ & $1.68\cdot 10^{-1}$ & $4.57\cdot 10^{-3}$ \\
 & Poly  & 12 & $(5,4)$ & $4.29\cdot 10^{-1}$ & $1.04\cdot 10^{-2}$ & $1.79\cdot 10^{-1}$ & $4.11\cdot 10^{-3}$ \\
 & Poly  & 14 & $(4,3)$ & $9.83\cdot 10^{-1}$ & $2.83\cdot 10^{-2}$ & $2.39\cdot 10^{-1}$ & $8.12\cdot 10^{-3}$ \\
 & Poly  & 14 & $(5,3)$ & $5.51\cdot 10^{-1}$ & $1.02\cdot 10^{-2}$ & $2.16\cdot 10^{-1}$ & $4.41\cdot 10^{-3}$ \\
 & Poly  & 14 & $(5,4)$ & $5.17\cdot 10^{-1}$ & $9.23\cdot 10^{-3}$ & $2.96\cdot 10^{-1}$ & $5.26\cdot 10^{-3}$ \\
 & Rat.  & 12 & $(4,3)$ & $6.05\cdot 10^{-1}$ & $2.08\cdot 10^{-2}$ & $2.76\cdot 10^{-1}$ & $1.03\cdot 10^{-2}$ \\
 & Rat.  & 12 & $(5,3)$ & $2.14\cdot 10^{-1}$ & $7.84\cdot 10^{-3}$ & $1.27\cdot 10^{-1}$ & $5.09\cdot 10^{-3}$ \\
 & Rat.  & 12 & $(5,4)$ & $6.63\cdot 10^{-2}$ & $2.78\cdot 10^{-3}$ & $5.42\cdot 10^{-2}$ & $2.4\cdot 10^{-3}$ \\
 & Rat.  & 14 & $(4,3)$ & $5.92\cdot 10^{-1}$ & $1.91\cdot 10^{-2}$ & $2.37\cdot 10^{-1}$ & $8.58\cdot 10^{-3}$ \\
 & Rat.  & 14 & $(5,3)$ & $2.05\cdot 10^{-1}$ & $6.97\cdot 10^{-3}$ & $1.25\cdot 10^{-1}$ & $4.42\cdot 10^{-3}$ \\
 & Rat. & 14 & $(5,4)$ & $5.53\cdot 10^{-2}$ & $2.22\cdot 10^{-3}$ & $4.02\cdot 10^{-2}$ & $1.87\cdot 10^{-3}$ \\
\midrule
\multirow{4}{*}{\makecell{Data link\\(Sec. \ref{sec:schuster})}} & Poly  & 22 & $(3,3)$ & $2.45\cdot 10^{-2}$ & $4.28\cdot 10^{-3}$ & $8.65\cdot 10^{-3}$ & $2.72\cdot 10^{-3}$ \\
 & Poly  & 22 & $(4,4)$ & $1.9\cdot 10^{-2}$ & $2.36\cdot 10^{-3}$ & $3.61\cdot 10^{-3}$ & $5.51\cdot 10^{-4}$ \\
 & Rat.  & 22 & $(3,3)$ & $4.96\cdot 10^{-2}$ & $4.96\cdot 10^{-3}$ & $2.13\cdot 10^{-2}$ & $2.85\cdot 10^{-3}$ \\
 & Rat. & 22 & $(4,4)$ & $3\cdot 10^{-2}$ & $2.4\cdot 10^{-3}$ & $7.7\cdot 10^{-3}$ & $7.31\cdot 10^{-4}$ \\
\midrule
\multirow{2}{*}{\makecell{TL network\\(Sec. \ref{sec:tlnet-example})}} & Poly  & 14 & $(2,2,2)$ & $1.87\cdot 10^{-2}$ & $5.61\cdot 10^{-4}$ & $4.46\cdot 10^{-3}$ & $2.48\cdot 10^{-4}$ \\
 & Rat. & 14 & $(2,2,2)$ & $7.03\cdot 10^{-2}$ & $1.74\cdot 10^{-3}$ & $3.67\cdot 10^{-2}$ & $9.34\cdot 10^{-4}$ \\
\bottomrule
\end{tabular}
\caption{Comparison of ESP and stable models for different examples, parameterizations and model complexities.}
\label{tab:errors}
\end{table}

\subsection{Penzl model}\label{sec:penzl}
The Penzl model is a benchmark example that appears in \cite{ionita2014parameterized} as a parameterized version of an example originally presented in \cite{penzl2006}. It is given explicitly as $\data{H}(s,\mvartheta) = \vet{c}_{\rm pzl}[s\eye-\mat{A}_{\rm pzl}(\mvartheta)]^{-1}\vet{c}_{\rm pzl}^T$ with a block-diagonal $\mat{A}_{\rm pzl}(\mvartheta)=\mathrm{diag}(\mat{A}_1(\vartheta_1),\mat{A}_2,\mat{A}_3,\mat{A}_4)$,
\begin{align}
    \mat{A}_1(\vartheta_1)
    &=
    \begin{pmatrix}
    -1 & 100(0.5+\vartheta_1) \\ -100(0.5+\vartheta_1) & -1
    \end{pmatrix}
    ,\quad\mat{A}_2=
    \begin{pmatrix}
    -1 & 200 \\ -200 & -1
    \end{pmatrix},
    \\
    \mat{A}_3&=
    \begin{pmatrix}
    -1 & 400 \\ -400 & -1
    \end{pmatrix},\quad
    \mat{A}_4=\mathrm{diag}\{-1,-2,\dots,-10^3\},
\end{align}
$\vet{c}_{\rm pzl}=\left(1 \; 0 \; 1\;0\;1\;0\;1\;1\;\cdots \;1\right)$. The Penzl model is specifically constructed to have a pair of complex poles at $ -1\pm (50+100\vartheta_1)\jj$ whose imaginary part varies in $[50,150]$ as $\vartheta_1$ varies in $[0,1]$. The initial dataset is made up of samples of $\data{H}(s,\mvartheta)$ evaluated at $s=\jj 2\pi f$ for $2\cdot 10^3$ uniformly spaced values of $f$ in the interval $[1,100]$. As for $\mvartheta$, 50 uniformly spaced values are considered in $[0,1]$. The full-order transfer function is thus evaluated on a grid in $(s,\mvartheta)$ corresponding to all $10^5$ combinations of these $s$ and $\mvartheta$ values.

For this benchmark we set $n_{\rm ico} = 20$ and $h=0.5$.
With fixed $\nu=14$, several values of $\boldsymbol{\rho}$ are considered with the results summarized in Table \ref{tab:errors}. The ICO iteration reduces the cost function value by more than an order of magnitude (for polynomial parameterization with $\boldsymbol{\rho}= 6$). It is found that the relative gap between $\gamma_0$ and $\gamma_{\rm opt}$ for rationally-parameterized models is not as large as with the polynomial parameterization.

Similarly, the data in Table \ref{tab:errors} shows that the effective RMS error of stable models built with the exact criterion can be more than ten times smaller than that of ESP models. The same is confirmed by Fig.~\ref{fig:penzl-contour}, which compares the model-data error at every point in the $(s,\mvartheta)$ domain available in the dataset, considering polynomial parameterization and $\vet{\rho}=5$.

\begin{figure}
    \centering
    \subfloat[\label{fig:penzl-ico}]{
\includegraphics[width=0.49\textwidth]{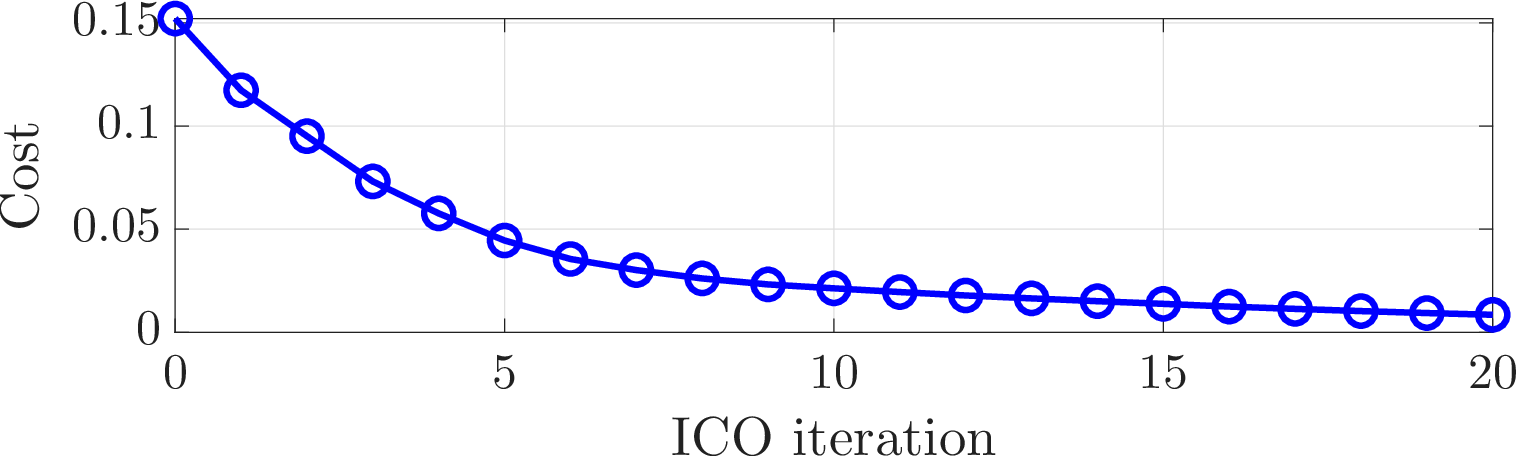}
    }
    \subfloat[\label{fig:penzl-zeros}]{
\includegraphics[width=0.43\textwidth]{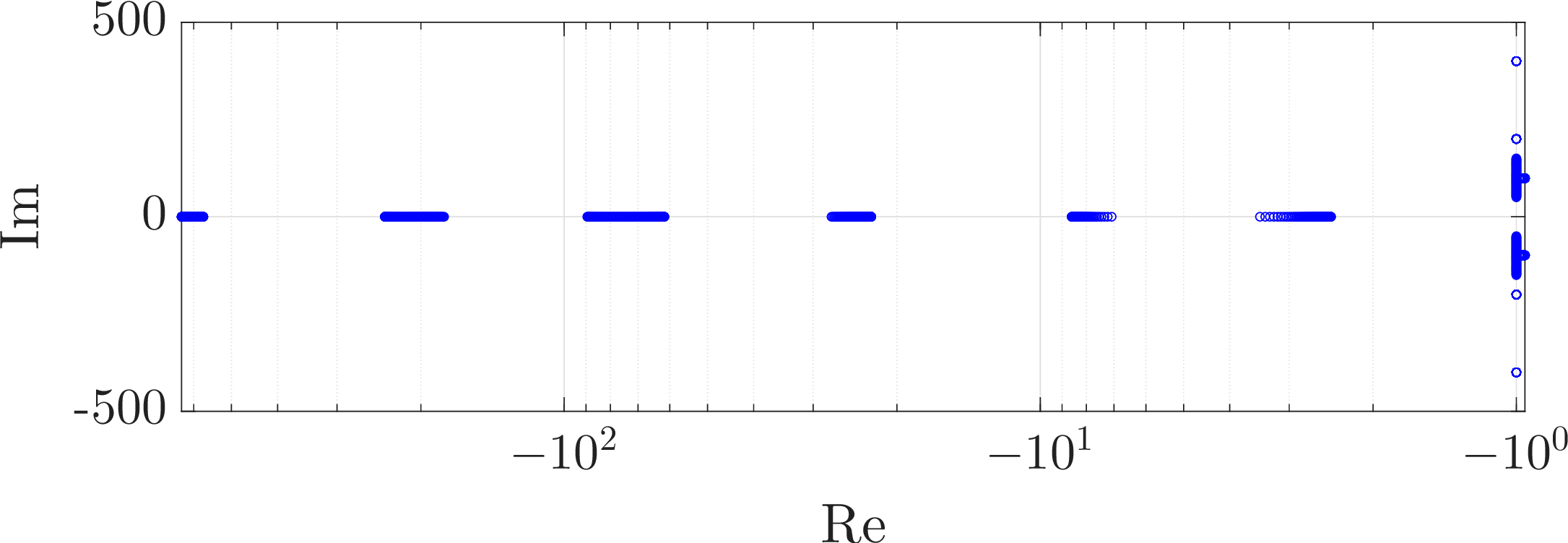}
}
    \caption{Penzl model with $\nu = 14$, $\boldsymbol{\rho} = 6$. (a) Cost function throughout ICO iterations. (b) Location of denominator zeros for $100$ fixed values of $\mvartheta$.}
    \label{fig:penzl-ico-zeros}
\end{figure}

As for runtime, it takes less than five seconds to build the ESP model with $\boldsymbol{\rho}=6$, and 211 seconds to complete twenty ICO iterations to obtain the stable model with the same complexity. The trend of the cost throughout ICO iterations is reported in Fig.~\ref{fig:penzl-ico}, where the zeroth iteration is $\gamma_0$, i.e. the value of $J(\vet{\beta})$ at the end of the first phase of the algorithm where the ESP criterion is enforced, and before starting the actual ICO iteration with the exact criterion. Figure \ref{fig:penzl-zeros} confirms that the model is stable as the denominator zeros are in the left half-plane as $\mvartheta$ varies in $[0,1]$.

\begin{figure}
\centering
\subfloat[Polynomial, $\boldsymbol{\rho}= 4$]{
    \includegraphics[width=0.49\textwidth]{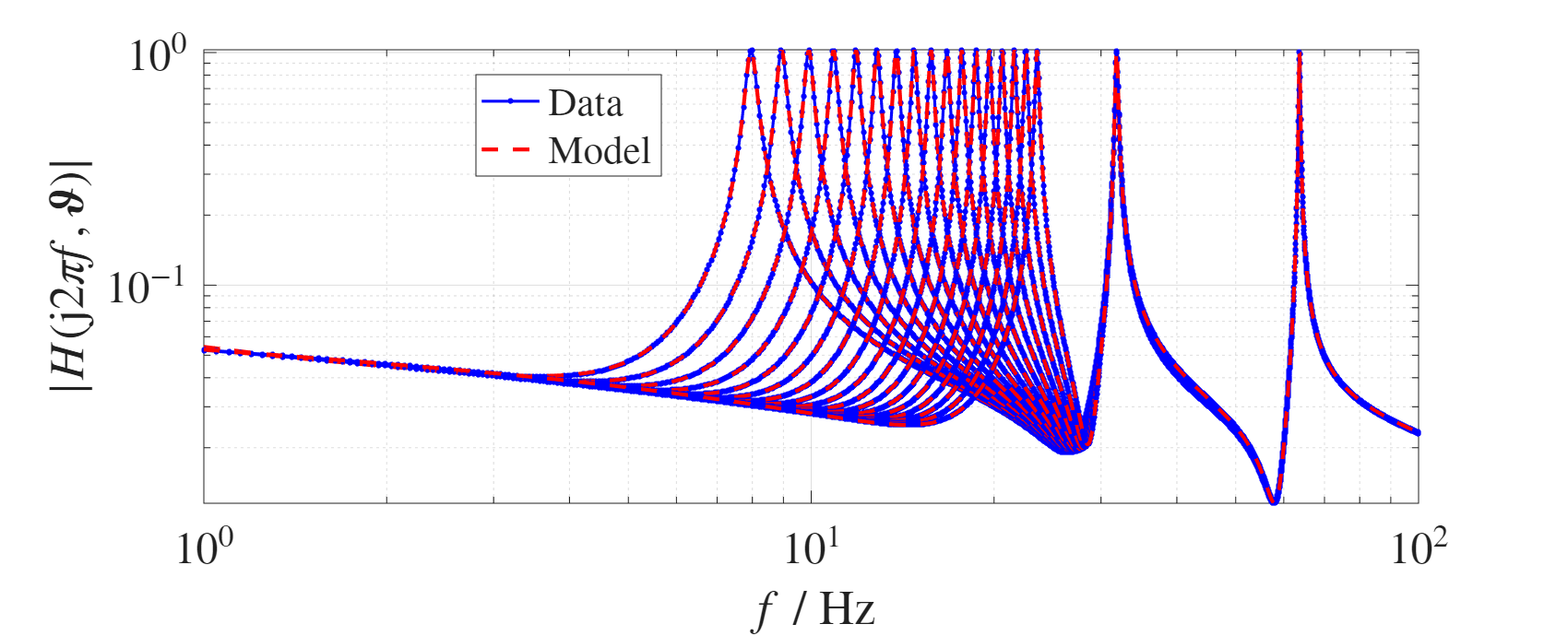}
}
\subfloat[Polynomial, $\boldsymbol{\rho} = 6$]{
    \includegraphics[width=0.49\textwidth]{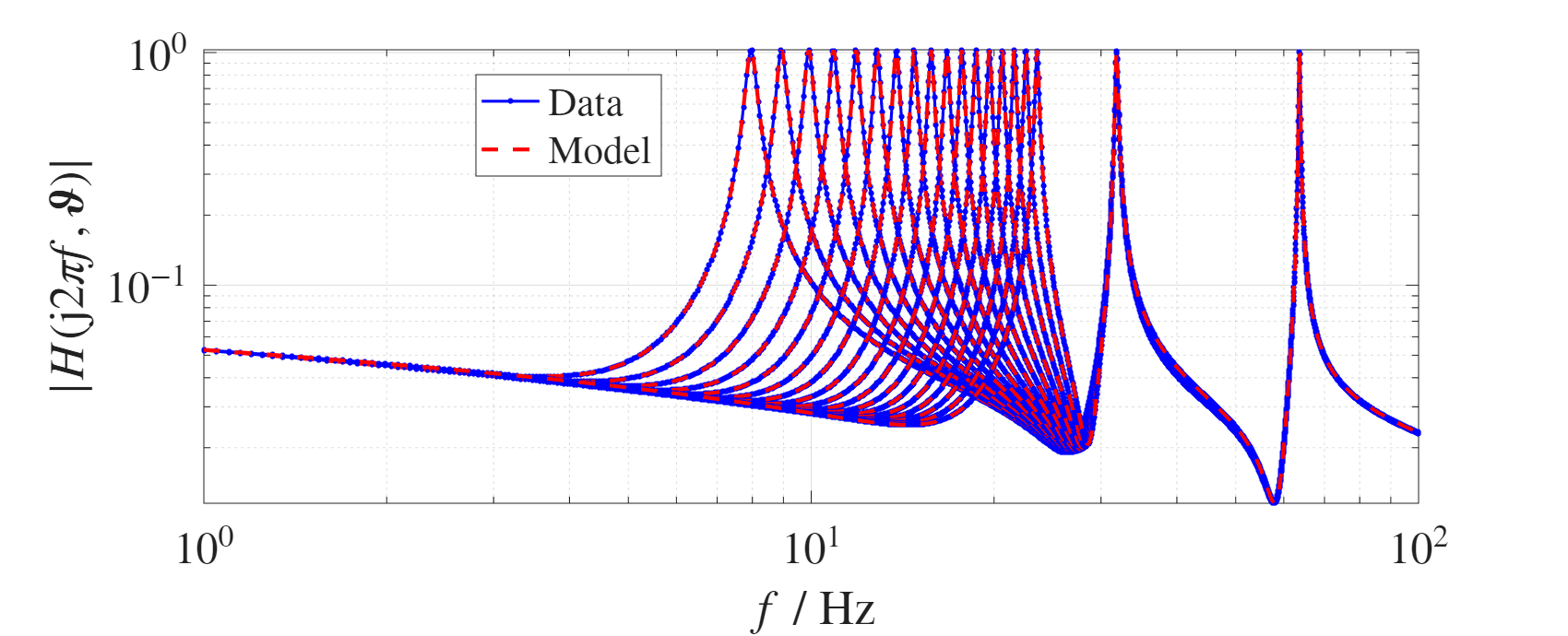}
    }
    \\
\subfloat[Rational, $\boldsymbol{\rho} = 4$]{
    \includegraphics[width=0.49\textwidth]{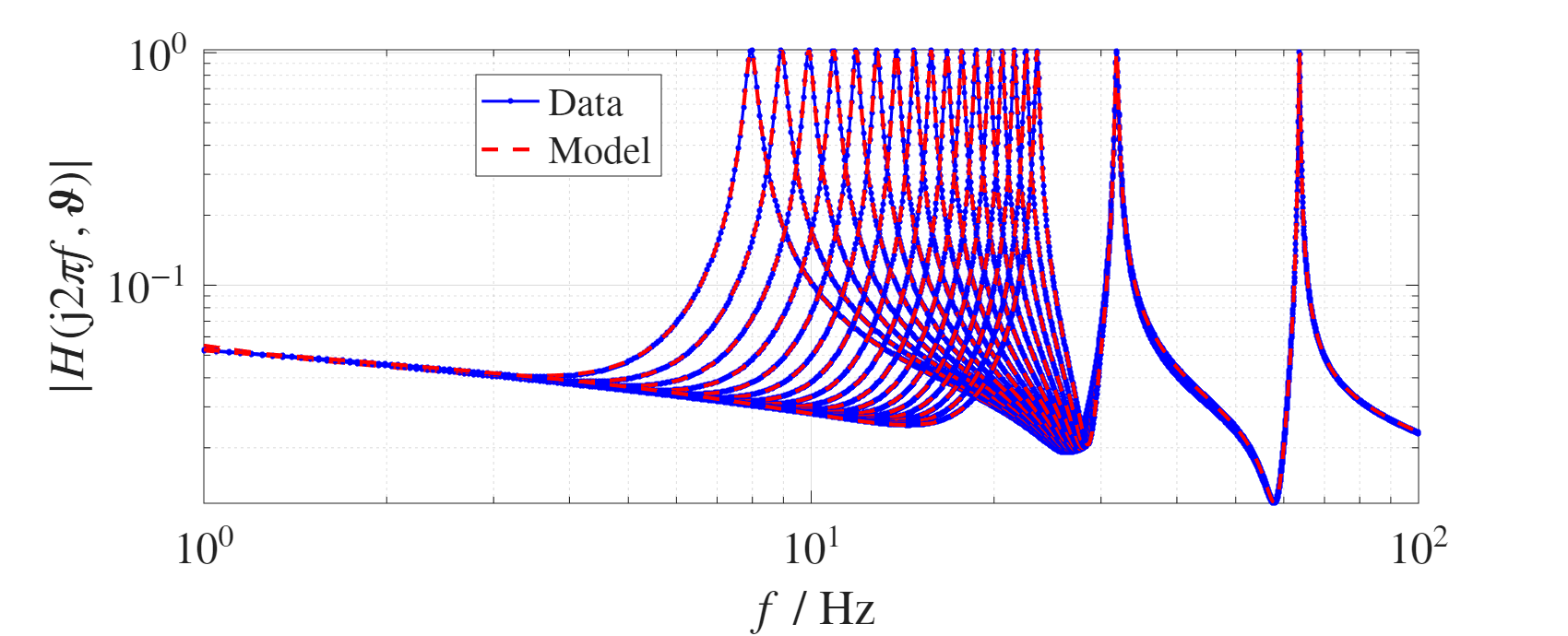}}
\subfloat[Rational, $\boldsymbol{\rho} = 6$]{
    \includegraphics[width=0.49\textwidth]{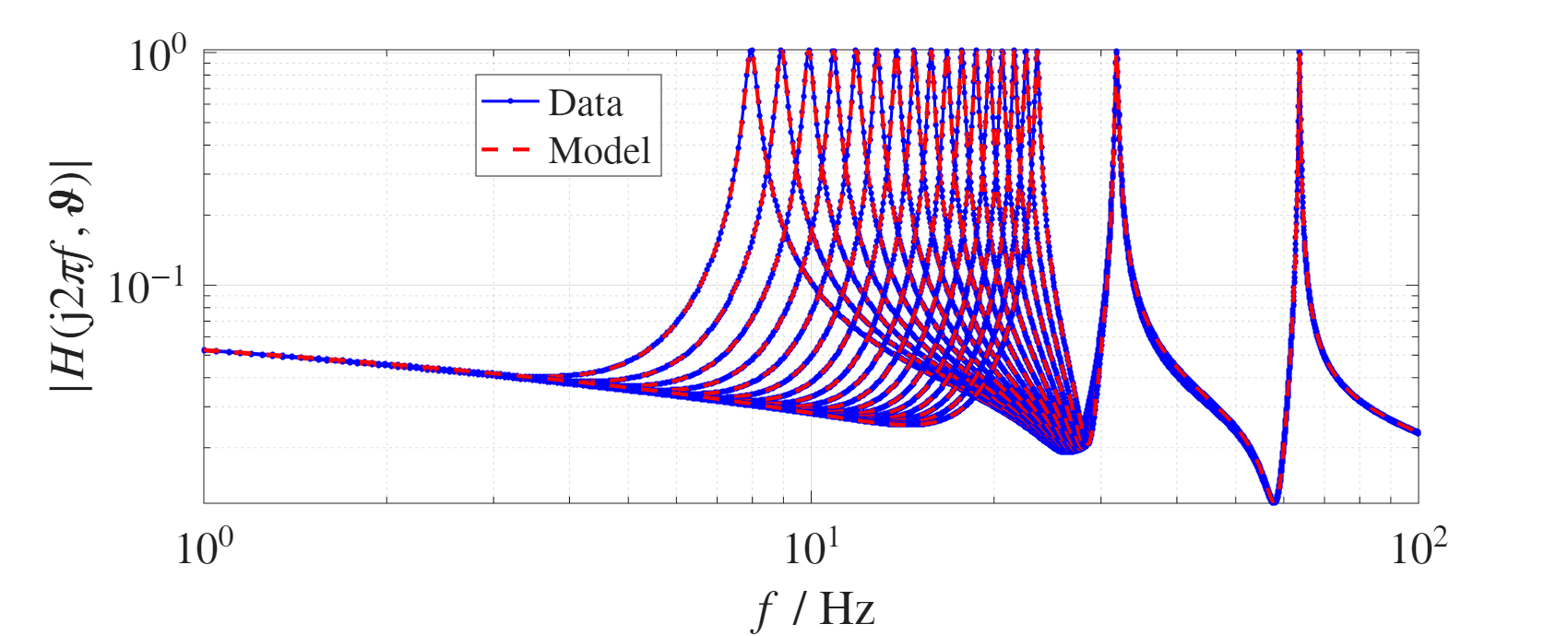}
    }
    \caption{Data-model comparison for the Penzl model, built with polynomial or rational parameterization. Different curves correspond to different fixed values of $\mvartheta$.}
    \label{fig:penzl-datamodel}
\end{figure}

\begin{figure}
    \centering
    \includegraphics[width=0.8\linewidth]{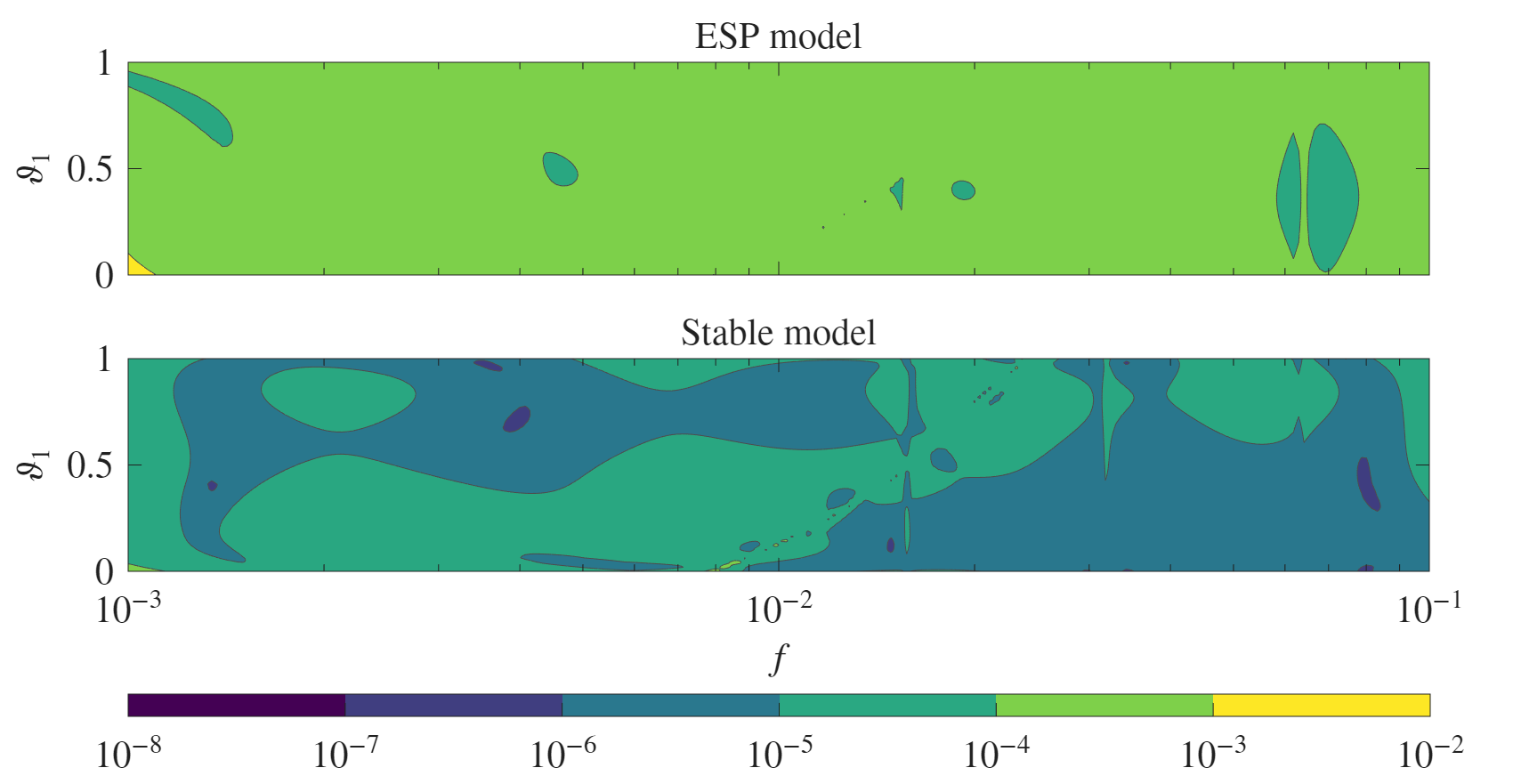}
    \caption{Penzl model of Sec. \ref{sec:penzl}. Contour plot of the model-data error for each point $(\jj2\pi f,\vartheta_1)$ in the dataset.}
    \label{fig:penzl-contour}
\end{figure}

\subsection{Transmission Line with RLC load (TL-RLC)}\label{sec:tl-example}
\begin{figure}
    \centering
\includegraphics[width=0.7\textwidth,clip,trim={0 23.7cm 9cm 0.05cm}]{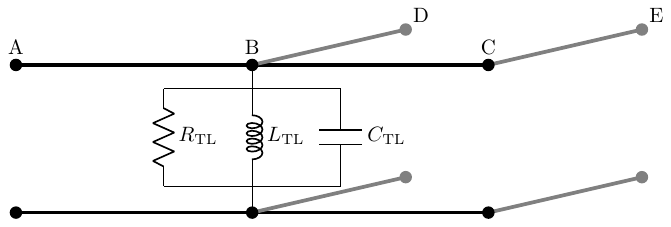}
    \caption{Network of transmission line segments and lumped components.}
    \label{fig:tline-diagram}
\end{figure}
\begin{figure}
\subfloat[Polynomial,  $\boldsymbol{\rho}= (5,3)$]{
\includegraphics[width=0.49\textwidth]{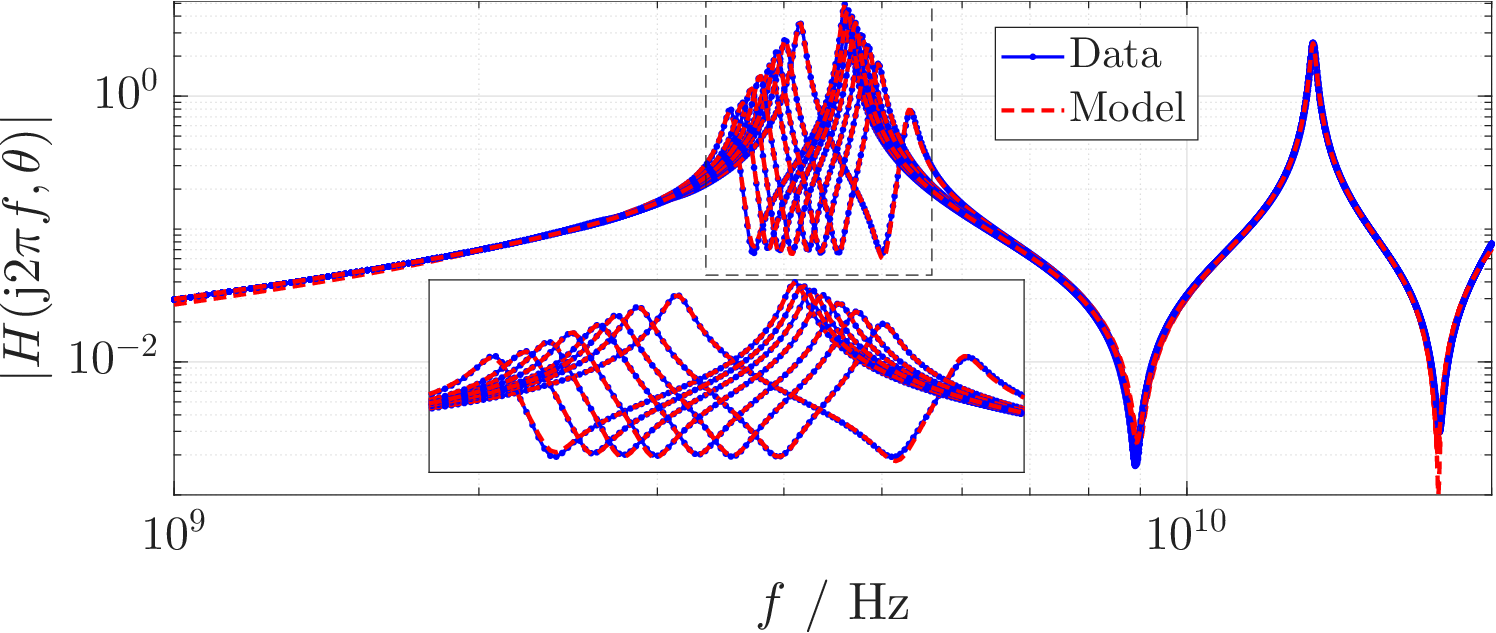}
}
\subfloat[Rational, $\boldsymbol{\rho} = (5,3)$]{
    \includegraphics[width=0.49\textwidth]{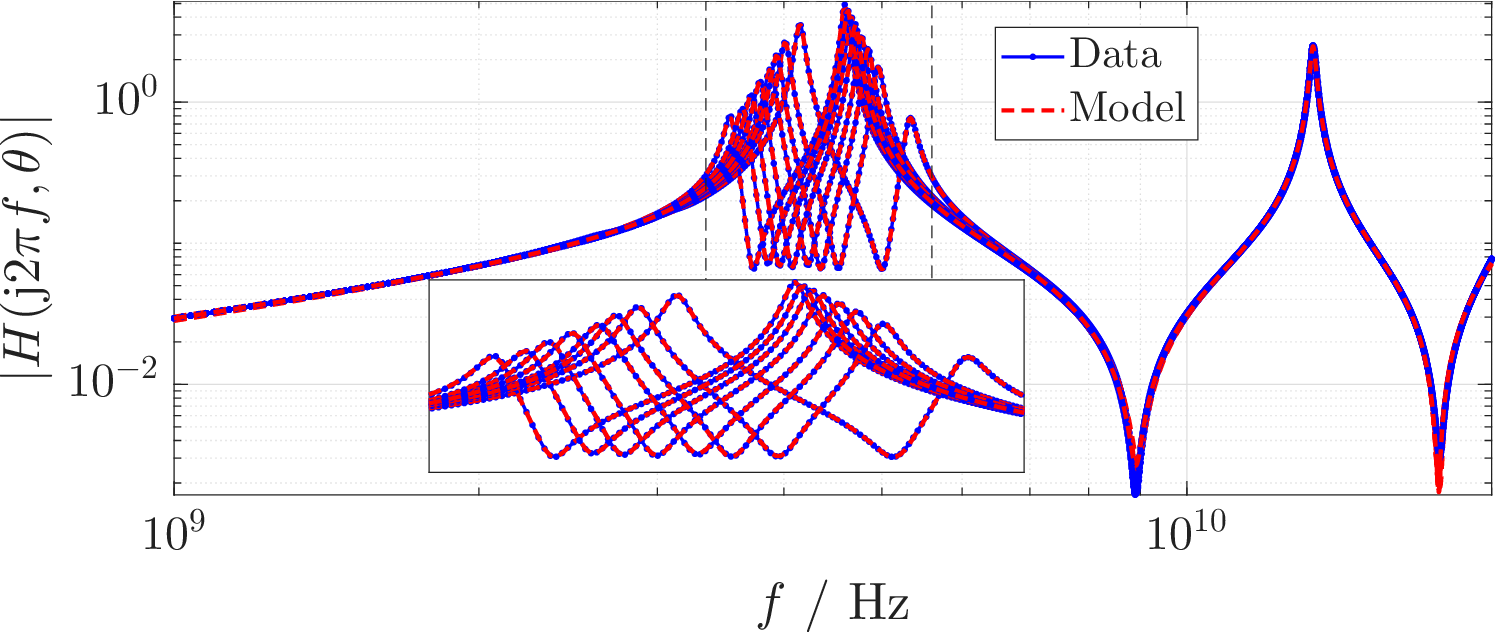}
}
\\
\subfloat[Polynomial, $\boldsymbol{\rho} = (5,4)$]{
    \includegraphics[width=0.49\linewidth]{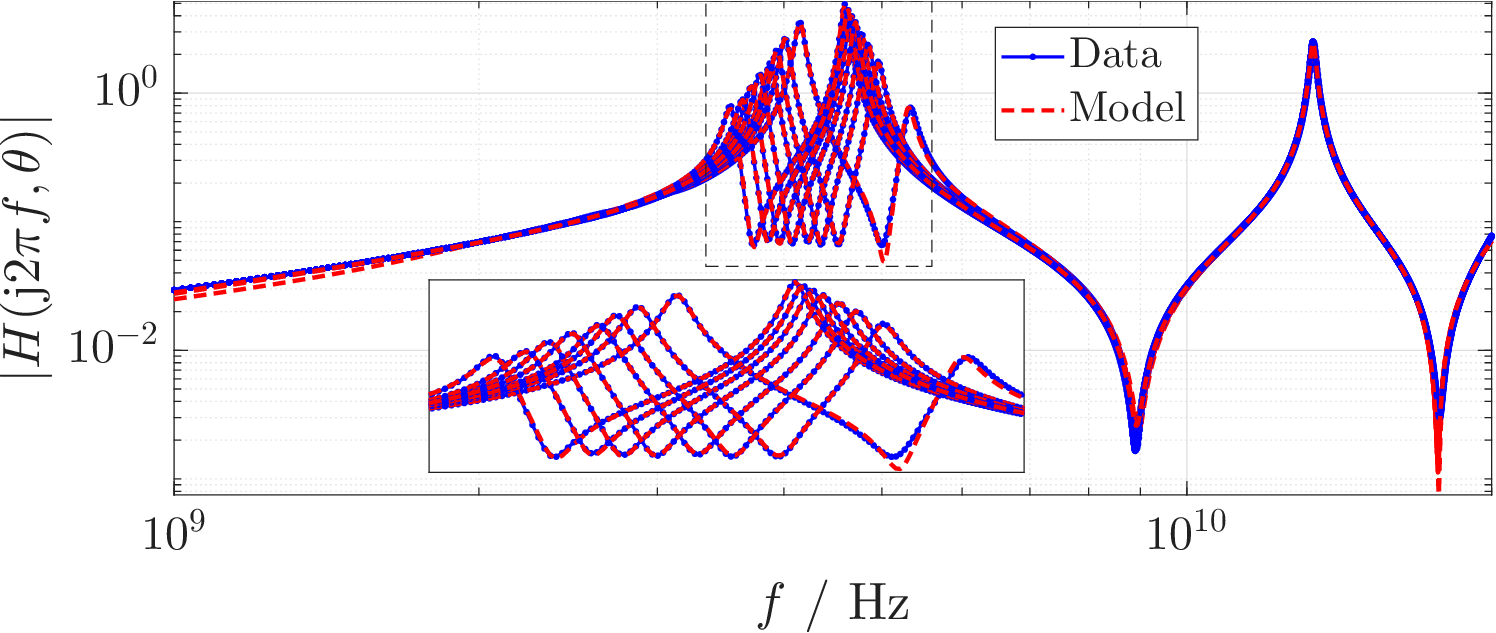}
    }
\subfloat[Rational, $\boldsymbol{\rho} = (5,4)$]{
    \includegraphics[width=0.49\linewidth]{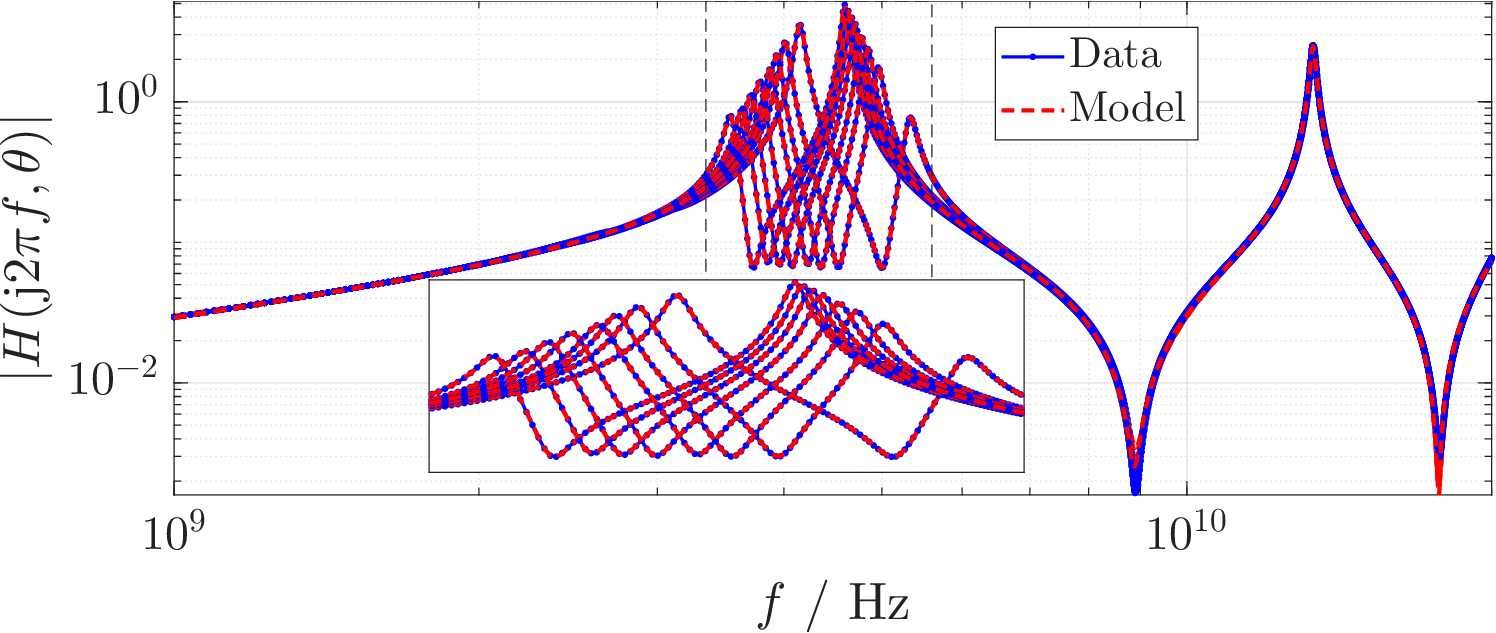}
   }
    \caption{Data-model comparison for the TL model of Sec. \ref{sec:tl-example} with $\nu = 14$. Different curves correspond to different fixed values of $\mvartheta$.}
    \label{fig:tline-datamodel}
\end{figure}
This section considers an electrical network made up of Transmission Line (TL) segments and lumped circuit elements (RLC), as shown in Fig.~\ref{fig:tline-diagram}. This electrical system is modeled in a SPICE-based circuit solver where TL segments correspond to microstrip lines whose conductors are $150\,\mu\mathrm{m}$ wide, $50\,\mu\mathrm{m}$ thick, and are placed on a $200\,\mu$m-thick SiO\textsubscript{2} dielectric layer (with $\epsilon_r=4.1$) that separates them from a perfectly conducting metallic backplane. This arrangement represents a single-layer printed circuit board (PCB). The length of the A-B segment is $1\,\mathrm{cm}$. An RLC resonant circuit is connected at point B, with fixed resistance $R_{\rm TL} = 120\,\Omega$ and inductance $L_{\rm TL} = 0.1\,\mathrm{nH}$. The capacitance $C_{\rm TL} = 10(1+\vartheta_1)\,\mathrm{pF}$ is parameterized by $\vartheta_1$ and it varies in $[10,20]\,\mathrm{pF}$. The TL segment from B to C has length $\ell = (1+0.4\vartheta_2)\,\mathrm{cm}$ and thus ranges in $[1,1.4]\,\mathrm{cm}$. The two additional TL stubs placed between B-D and C-E have fixed length $3.75\,\mathrm{mm}$. 

The input impedance at point A is a transfer function $\data{H}(s,\mvartheta)$ depending on $k=2$ parameters. Samples of $\data{H}(\jj2\pi f,\mvartheta)$ are collected at $10^3$ values of $f$ uniformly spaced in $[1,20]\,\mathrm{GHz}$ using the circuit solver HSPICE. 
These samples are evaluated for $50$ different values of $\mvartheta$ chosen according to a space-filling Sobol sequence in $[0,1]^2$. The dataset thus consists of $5\cdot 10^4$ evaluations of the target transfer function. The presence of distributed components implies that the TF of the original system is not rational. 

\begin{figure}
    \centering
\subfloat[\label{fig:tline-ico}]{
\includegraphics[width=0.49\linewidth]{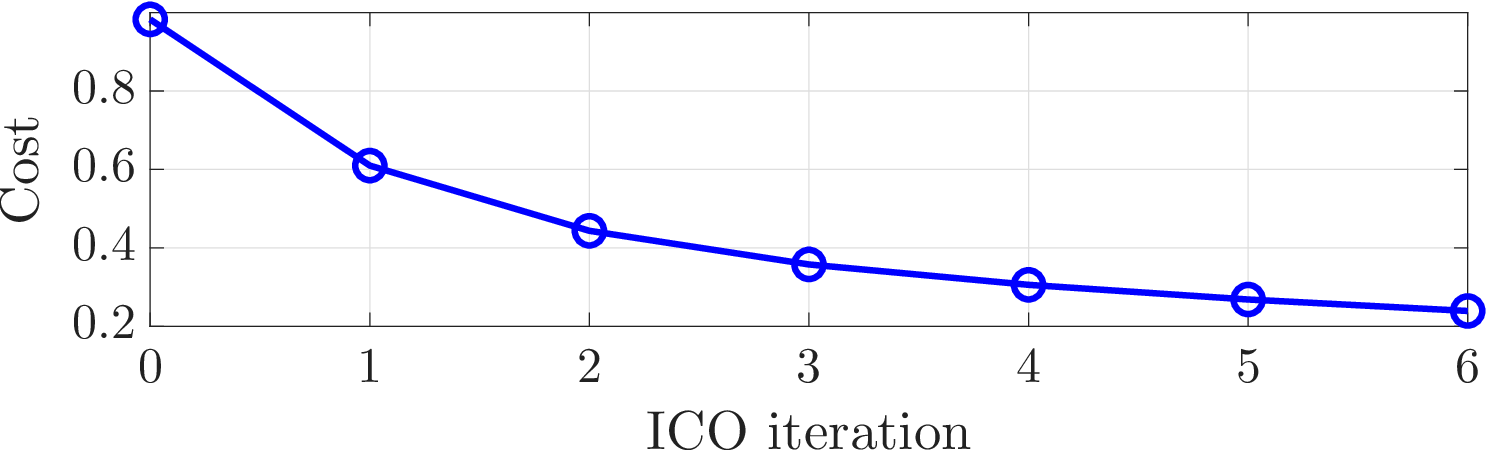}
}
    \subfloat[\label{fig:tline-poles}]{
\includegraphics[width=0.44\linewidth]{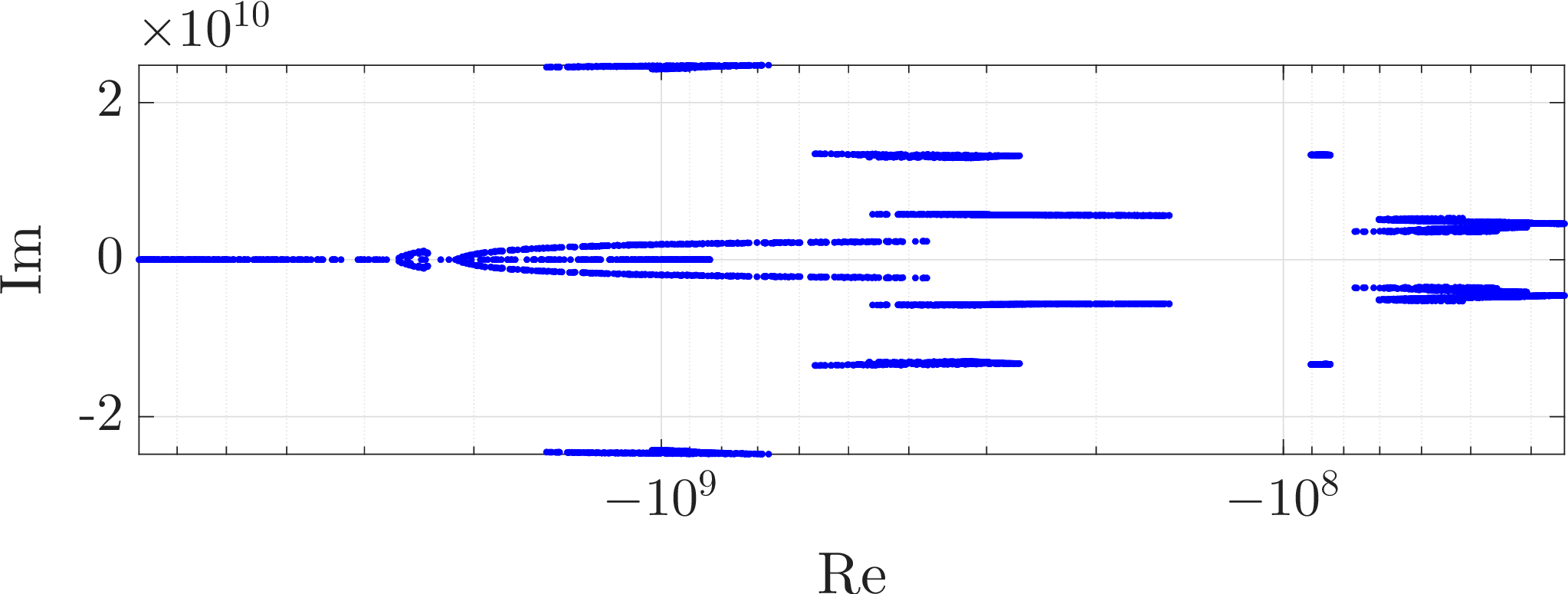}
    }
    \caption{TL model of Sec. \ref{sec:tl-example}, with $\nu=14$, $\boldsymbol{\rho}=(4,3)$. (a) Optimal cost along ICO iterations. (b) Location of denominator zeros for 100 values of $\mvartheta$.}
    \label{fig:tline-poles-hist}
\end{figure}
    
We fix $n_{\rm ico} = 6$ and $h = 0.5$.
The proposed method is used to construct models of varying complexity. The number of basis poles is set to either $\nu=12$ or $\nu=14$ and several combinations of $\boldsymbol{\rho}$ are considered. A graphical comparison between data and model is reported in Fig.~\ref{fig:tline-datamodel} for four particular choices of model parameterization and complexity. The complete set of experiments on this test case is summarized in Table \ref{tab:errors}. It is found that using the exact stability criterion decreases the RMS error by up to a factor of three. For this example, using a polynomial parameterization usually yields a slightly lower error than the partial fraction basis with the same complexity $\boldsymbol{\rho}$. However, there are also instances where the error of the rationally parameterized model is about half (e.g. $\boldsymbol{\rho}=(5,4)$, $\nu=12$) or one third of the error of the polynomially parameterized model with the same complexity (e.g. $\boldsymbol{\rho}=(5,4)$, $\nu=14$).

The behavior of the ICO iteration is reported in Fig.~\ref{fig:tline-ico}, showing the optimal cost function value for each iteration. Considering the final model, the locations of denominator zeros in the complex plane are shown in Fig.~\ref{fig:tline-poles} for 100 values of $\mvartheta$. Considering the model with $\nu=14$, $\boldsymbol{\rho} = (5,4)$, the first part of the algorithm takes 118 seconds to complete the PSK iterations and retrieve the ESP model. Then, six ICO iterations take 26 minutes (approximately $ 4.3$ minutes per iteration).

\subsection{High-speed Data Link}\label{sec:schuster}
\begin{figure}
    \centering
    \subfloat[Polynomial, $\boldsymbol{\rho} = (3,3)$]{
    \includegraphics[width=0.48\textwidth]{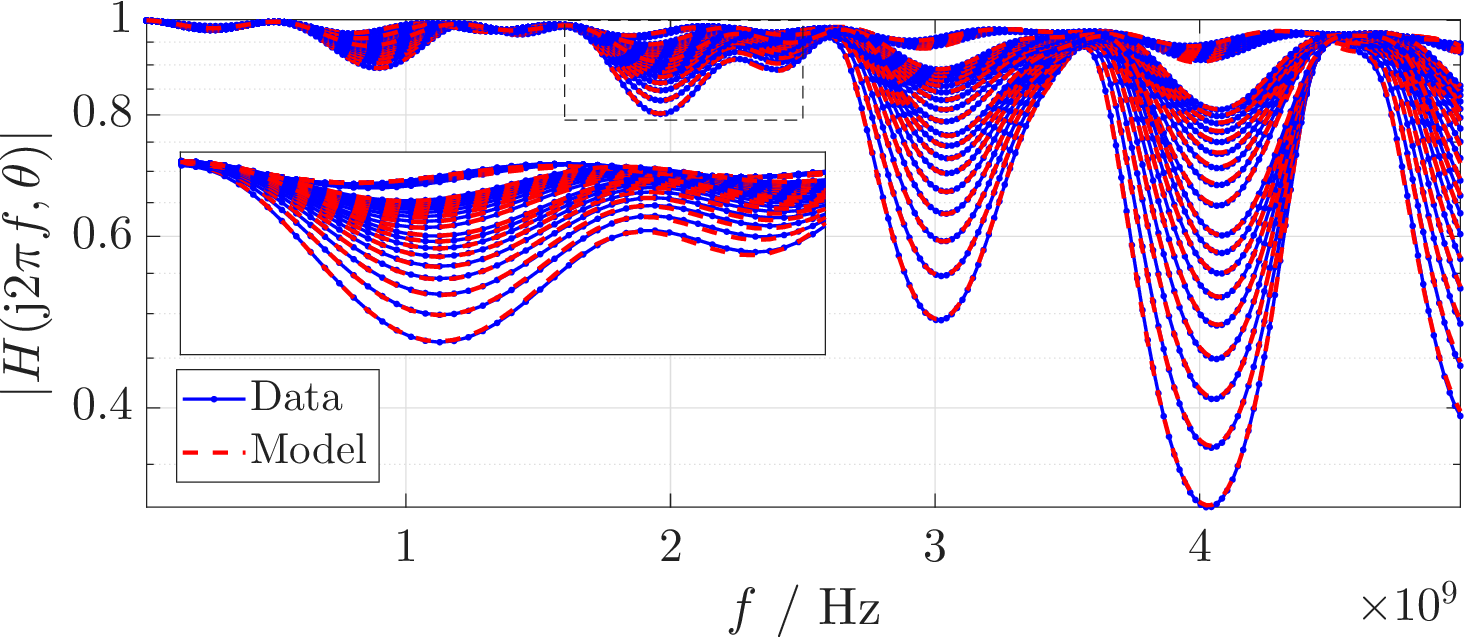}
    }
    \subfloat[Rational, $\boldsymbol{\rho} = (3,3)$]{\includegraphics[width=0.48\textwidth]{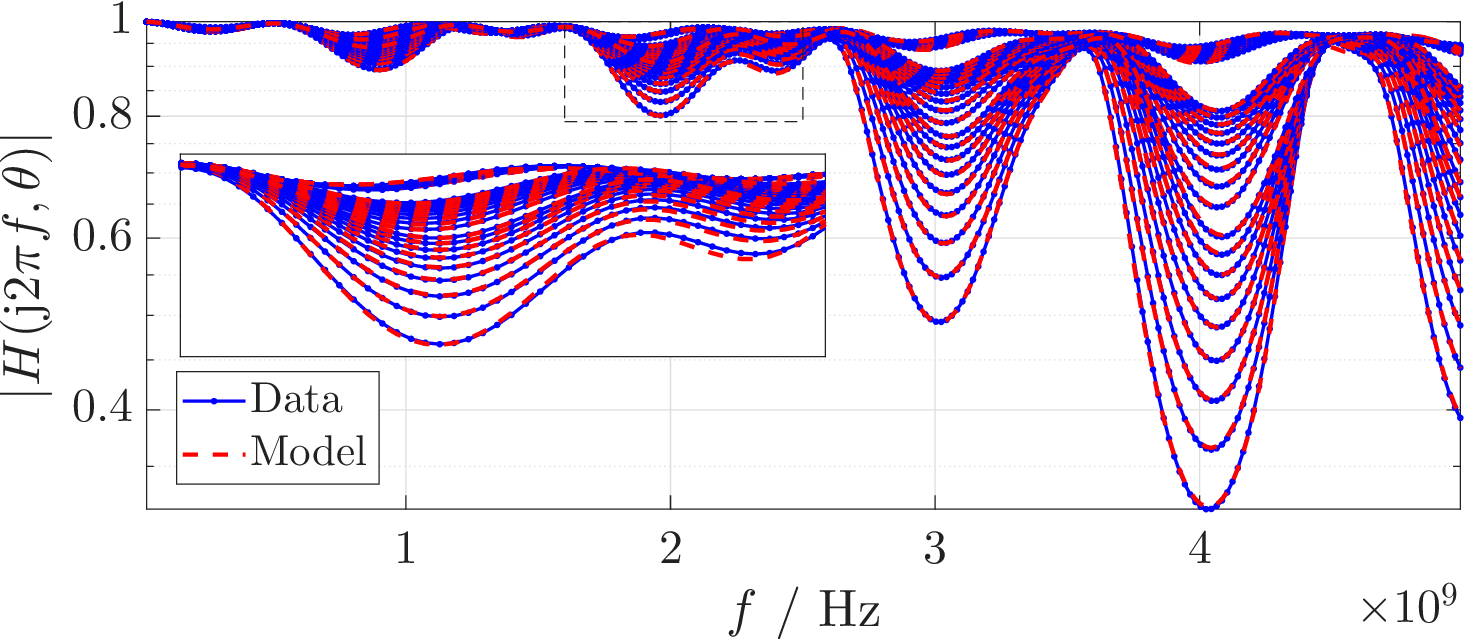}
    }
    \\
    \subfloat[Polynomial, $\boldsymbol{\rho}=(4,4)$]{
\includegraphics[width=0.48\textwidth]{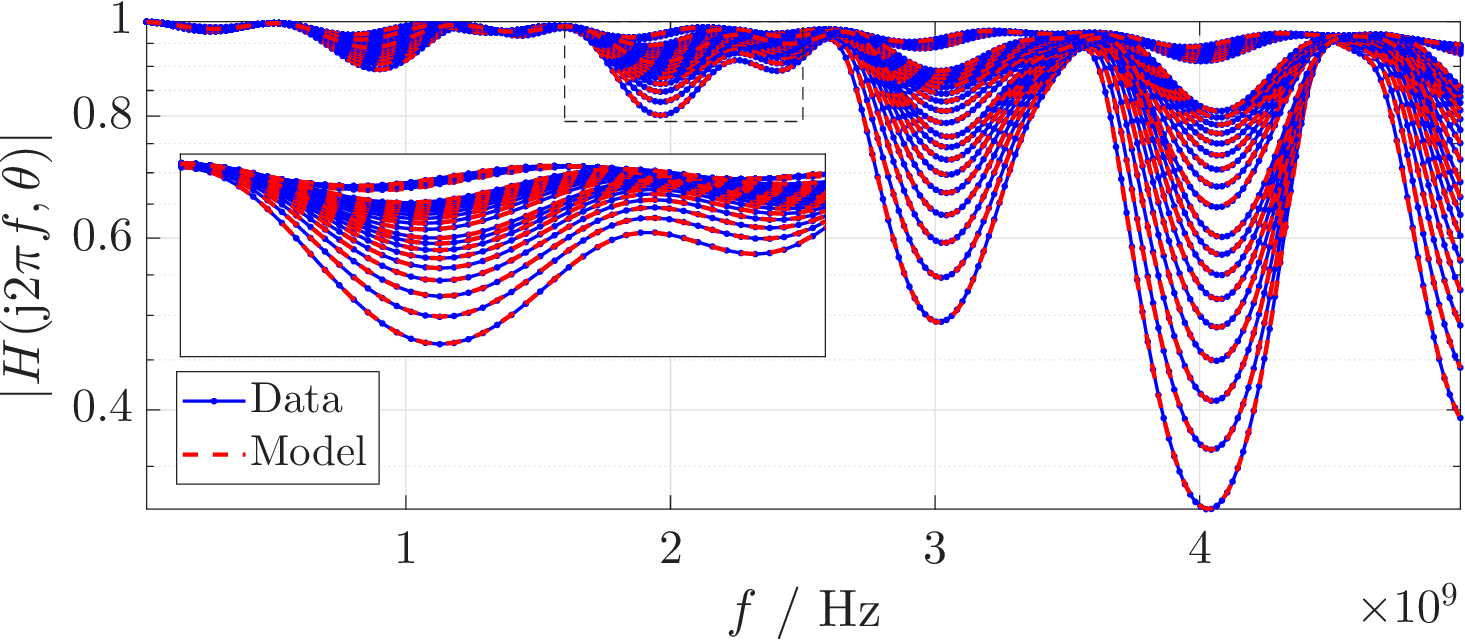}
    }
    \subfloat[Rational, $\boldsymbol{\rho}=(4,4)$]{
\includegraphics[width=0.48\textwidth]{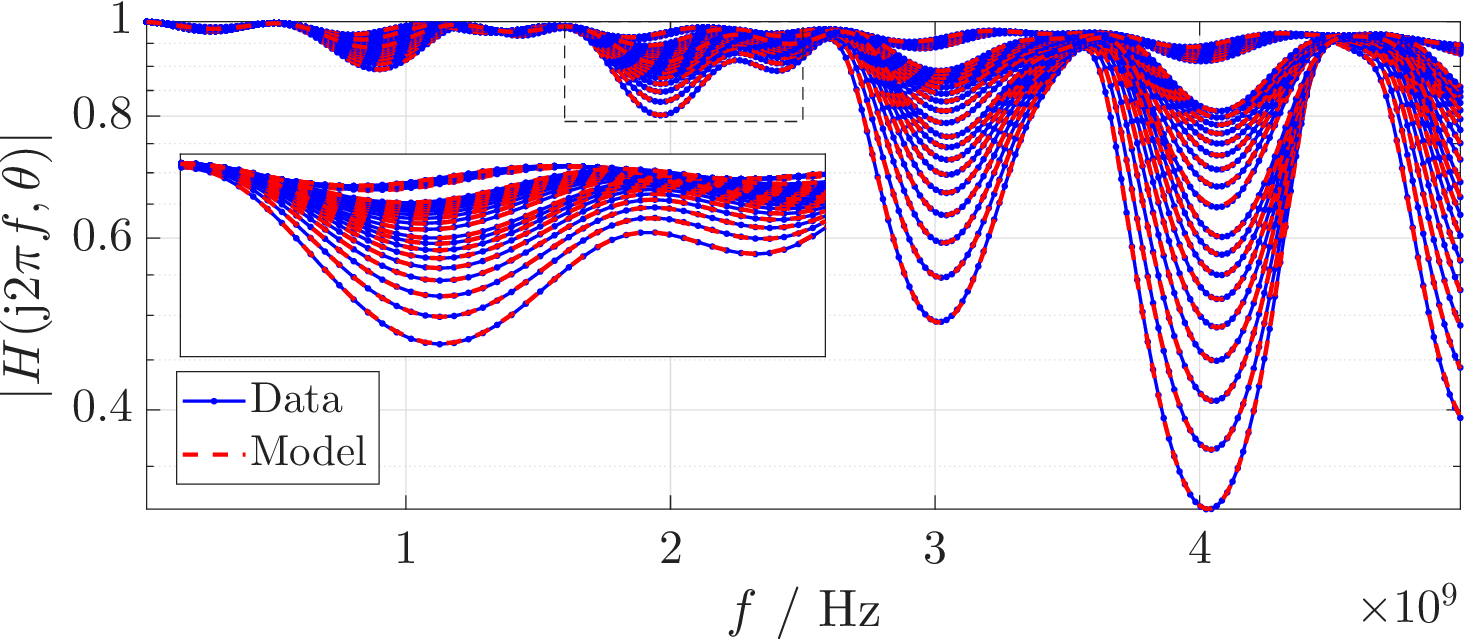}
    }
    \caption{Data-model comparison for the data link example in Sec. \ref{sec:schuster}.}
    \label{fig:schuster-responses}
\end{figure}
This section shows an application with $k=2$ parameters, originally presented in \cite{preibisch2017} and used in \cite[Sec. VII-C]{bradde-stability}. It consists in a PCB interconnect designed to operate at up to $5$ GHz to transmit signals between two electrical terminals (ports) placed on different PCB layers. This link includes a PCB via, whose geometry depends on the via radius $r_v$ and the antipad radius $r_a$. The link efficiency is characterized by the transmission coefficient. This quantity is the target $\data{H}(s,\mvartheta)$ to be modeled, and it is parameterized by the two radii $r_v = (100+200\vartheta_1)\,\mu\mathrm{m}$ and $r_a= (400+200\vartheta_2)\,\mu\mathrm{m}$. The initial TF samples were obtained from a field solver for $250$ values of $s=\jj2\pi f$, with $f$ uniformly spaced in $[20\,\mathrm{MHz}, 5\,\mathrm{GHz}]$, and $81$ values of $\mvartheta$. 
The number of basis poles is $\nu = 22$. We also fix $n_{\rm ico}=6$ and $h = 0.5$. Polynomial basis functions $\vet{\varphi}(\mvartheta)$ are considered first, with $\boldsymbol{\rho}=(3,3)$ or $\boldsymbol{\rho}=(4,4)$, i.e. polynomials of degree two or three in each variable.  The model responses are depicted in Fig.~\ref{fig:schuster-responses}. The errors reported in Table \ref{tab:errors} show that the exact stability criterion reduces the RMS error by more than a factor of four. 
\begin{figure}
    \centering
    \includegraphics[width=0.8\linewidth]{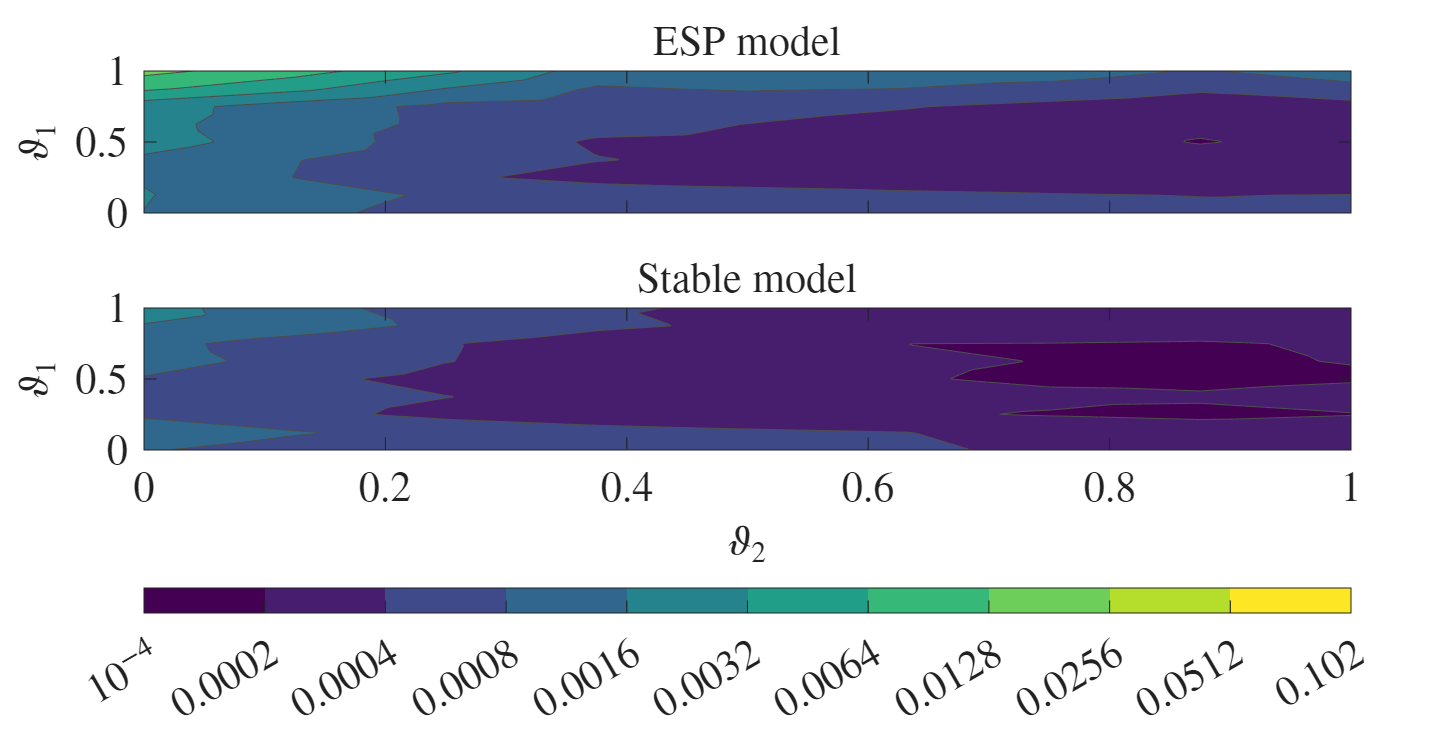}
    \caption{Data link model of Sec. \ref{sec:schuster}. Contour plot of the RMS error (over $f$) for each point $\mvartheta\in[0,1]^2$ in the dataset.}
    \label{fig:schuster-contour}
\end{figure}

For the most complex configuration with $\boldsymbol{\rho} = (4,4)$, polynomial parameterization and $\nu = 22$, the first part of the algorithm takes 5.8 minutes to produce an ESP model, and 45 minutes to complete $6$ ICO iterations (7.5 minutes per iteration).

\subsection{Transmission Line Network}\label{sec:tlnet-example}
In this section, we consider again a network of transmission lines as in Fig. \ref{fig:tline-diagram}, but with no RLC circuit. The structure is extended with two additional segments after point C, including TL stubs analogously to those at the end of the A-B and B-C sections. The final structure is thus made up of four TL segments interleaved with three TL stubs. All geometric dimensions of the microstrip lines are as in Sec. \ref{sec:tl-example}. The lengths $\ell_1$, $\ell_2$, $\ell_3$, $\ell_4$ of the four segments are nominally $1$ cm. The first three are parameterized as $\ell_i = (1+0.1\vartheta_i)\,\mathrm{cm}$, $i = 1,2,3$. The input impedance at point A is thus a transfer function depending on $k=3$ parameters. It is initially evaluated at $10^3$ values of $s=\jj2\pi f$, with $f\in[0.01,12]$ GHz, and $100$ values of $\mvartheta\in[0,1]^3$ according to a space-filling Sobol sequence. The dataset thus obtained is used to construct models $H(s,\mvartheta)$ with $\nu = 14$ and $\boldsymbol{\rho} = (2,2,2)$, with the results reported in Fig. \ref{fig:tlnet-datamodel}. The errors reported in Table \ref{tab:errors} confirm that even in this case the exact stability criterion leads to more accurate models.
The runtime to optimize the ESP models in the first part of the algorithm is 4.8 minutes. The final six ICO iterations take 21.5 minutes to construct the stable model.

 \begin{figure}
 \centering
\subfloat[Polynomial, $\boldsymbol{\rho}= (2,2,2)$]{ 
\includegraphics[width=0.49\textwidth]{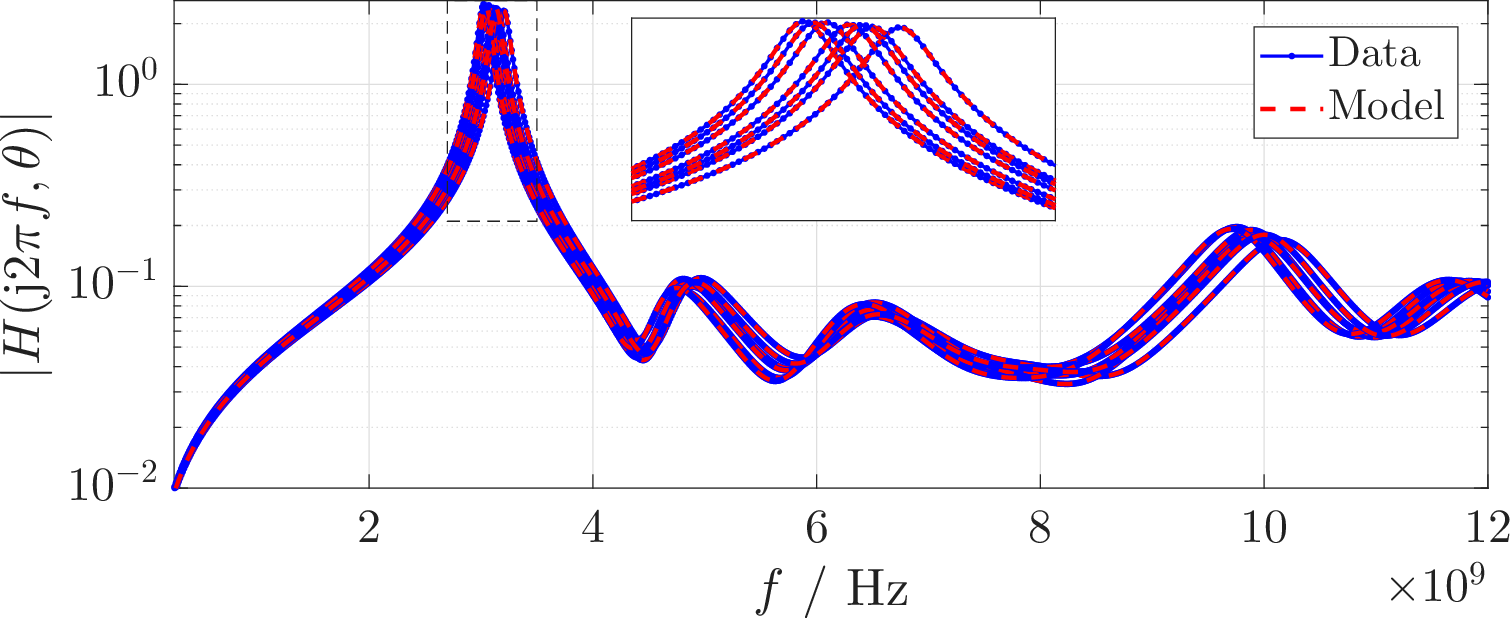}
}
\subfloat[Rational, $\boldsymbol{\rho} = (2,2,2)$]{
    \includegraphics[width=0.49\textwidth]{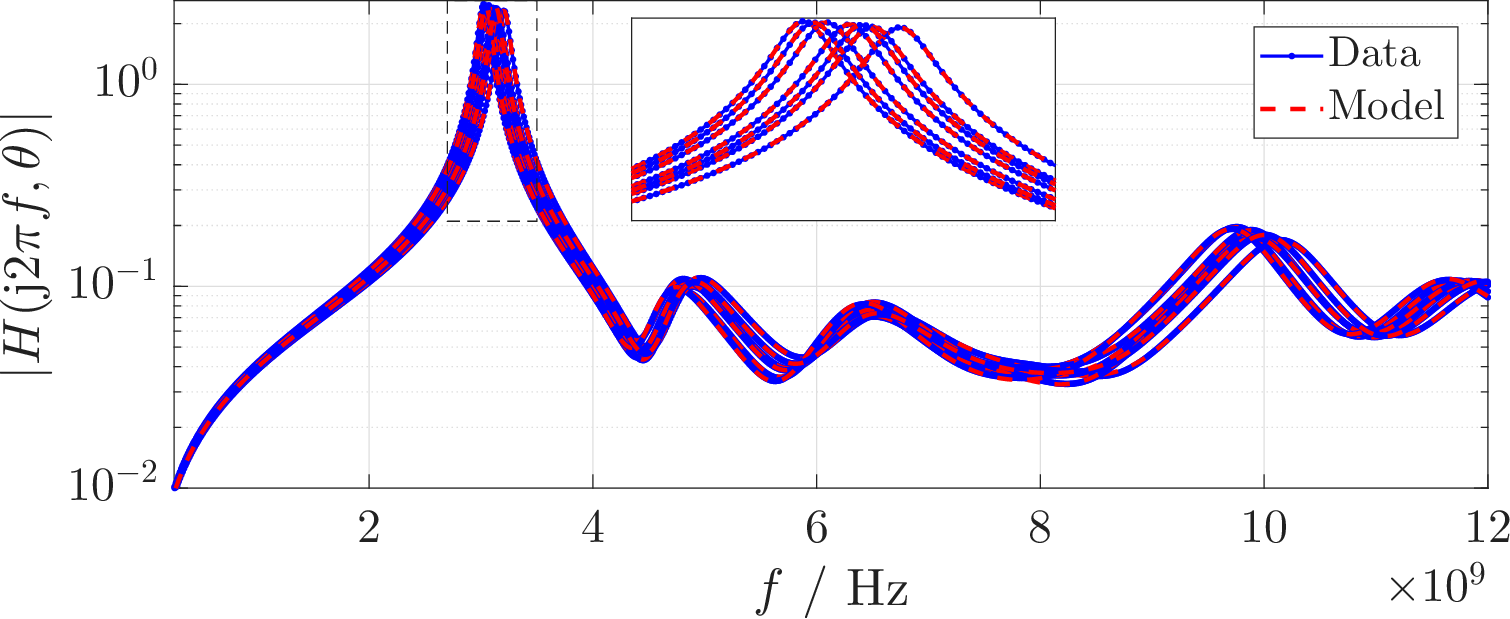}
    }
    \caption{Data-model comparison for the TL network model of Sec. \ref{sec:tlnet-example}. Different curves correspond to different fixed values of $\mvartheta$.}
    \label{fig:tlnet-datamodel}
\end{figure}

\section{Conclusions and Outlook}
This paper presents a stability enforcement scheme for multivariate rational approximation of parametric transfer functions. The PSK iteration has been equipped with stability constraints that apply to models with polynomial or rational parameterization. In addition, the stability constraints are exact and non-conservative. 

The results of this paper significantly extend the class of parameterized models for which stability can be enforced exactly. An interesting direction to explore further is the case where the parameterization is defined in terms of inverse quadratic RBFs. In fact, RBFs are known to be particularly well suited for approximation in high-dimensional spaces.    
In addition, the structure of pAAA models is essentially captured by the class considered here. However, the interpolatory constraints of pAAA have not been included, so that an extension to the pAAA algorithm is left for future work.

\begin{appendices}
\section{Proof of Proposition 
\ref{prop:stab-criterion}}
\label{sec:appendix2}
The following two auxiliary lemmas characterize the duals of $\mathcal{Q}$, $\mathcal{Q}_0$ and $\mathcal{P}$.
\begin{lemma}[Dual cone of $\mathcal{Q}$]\label{lemma:dualq}
Let $\mathcal{Q}$ be defined as in \eqref{eq:q-def}. Its dual cone is
\begin{equation}
    \label{eq:yfact}
    \mathcal{Q}^* =\left\{\begin{pmatrix}
        \mat{F} \\\mat{G} \\ \mat{V}
    \end{pmatrix}\begin{pmatrix}
        \mat{F} \\ \mat{G} \\\mat{V}
    \end{pmatrix}^*: \mat{F},\mat{G} \in\mathbb{C}^{\nu\times m},\;  
    \mat{V} \in\mathbb{C}^{K\times m}
    ,\;\mat{F}\mat{G}^*+\mat{G}\mat{F}^*\succeq 0\right\}
\end{equation}  
where $m = 2\nu+K$ is the size of matrices in $\mathcal{Q}$.
The dual cone $\mathcal{Q}_0^*$ has the same description with $\mat{F}\mat{G}^*+\mat{G}\mat{F}^*=\mat{0}$. 
\end{lemma}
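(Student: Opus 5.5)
The plan is to compute $\mathcal{Q}^*$ directly from its description as a Minkowski sum. Write
\[
\mathcal{L}=\left\{\begin{pmatrix}\mat{0}&\mat{X}&\mat{0}\\ \mat{X}&\mat{0}&\mat{0}\\ \mat{0}&\mat{0}&\mat{0}\end{pmatrix}:\mat{X}\succeq 0\right\}.
\]
Then $\mathcal{Q}=\mathcal{L}+\Hmat_+^{m}$, since $\mat{Q}\in\mathcal{Q}$ if and only if $\mat{Q}=\mat{L}+\mat{S}$ with $\mat{L}\in\mathcal{L}$ and $\mat{S}\succeq 0$.

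For convex cones, the dual of a sum is the intersection of the duals. This gives
\[
\mathcal{Q}^*=\mathcal{L}^*\cap(\Hmat_+^m)^*=\mathcal{L}^*\cap\Hmat_+^m,
\]
using self-duality of the PSD cone. No closure issue arises here, because the dual of any set is automatically closed and $(\mathcal{A}+\mathcal{B})^*=\mathcal{A}^*\cap\mathcal{B}^*$ holds for arbitrary cones containing $0$.

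The next step is to compute $\mathcal{L}^*$. Partition $\mat{Z}\in\Hmat^m$ into $3\times 3$ blocks $\mat{Z}_{ij}$ conforming to sizes $(\nu,\nu,K)$. Then
\[
\dotprod{\mat{L}}{\mat{Z}}=\mathrm{tr}(\mat{X}\mat{Z}_{21})+\mathrm{tr}(\mat{X}\mat{Z}_{12})=\mathrm{tr}\big(\mat{X}(\mat{Z}_{12}+\mat{Z}_{12}^*)\big).
\]
This is nonnegative for all $\mat{X}\succeq 0$ if and only if $\mat{Z}_{12}+\mat{Z}_{12}^*\succeq 0$, again by self-duality of $\Hmat_+^\nu$. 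Hence
\[
\mathcal{Q}^*=\{\mat{Z}\succeq 0:\ \mat{Z}_{12}+\mat{Z}_{12}^*\succeq 0\}.
\]

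It remains to match this with the factored form in the statement. Any $\mat{Z}\succeq 0$ of size $m$ admits a factorization $\mat{Z}=\mat{Y}\mat{Y}^*$ with $\mat{Y}\in\mathbb{C}^{m\times m}$, for instance $\mat{Y}=\mat{Z}^{1/2}$. Partition $\mat{Y}=(\mat{F};\mat{G};\mat{V})$ with $\mat{F},\mat{G}\in\mathbb{C}^{\nu\times m}$ and $\mat{V}\in\mathbb{C}^{K\times m}$. Then $\mat{Z}_{12}=\mat{F}\mat{G}^*$, so the condition $\mat{Z}_{12}+\mat{Z}_{12}^*\succeq 0$ reads exactly $\mat{F}\mat{G}^*+\mat{G}\mat{F}^*\succeq 0$.

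Conversely, every matrix $\mat{Y}\mat{Y}^*$ with this block property is PSD and satisfies the block condition, so it lies in $\mathcal{Q}^*$. The property is independent of the chosen factor: replacing $\mat{Y}$ by $\mat{Y}\mat{U}$ for a unitary $\mat{U}$ leaves $\mat{F}\mat{G}^*$ unchanged. Both inclusions therefore hold, which proves \eqref{eq:yfact}.

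For $\mathcal{Q}_0$, the same argument applies with $\mat{X}$ ranging over all of $\Hmat^\nu$. The dual of $\{\mat{X}\in\Hmat^\nu\}$ under the pairing $\mathrm{tr}(\mat{X}\mat{H})$ is $\{\mat{H}=\mat{0}\}$, because $\Hmat^\nu$ is a subspace. This forces $\mat{Z}_{12}+\mat{Z}_{12}^*=\mat{0}$, i.e. $\mat{F}\mat{G}^*+\mat{G}\mat{F}^*=\mat{0}$.

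I expect the only delicate point to be the justification of $(\mathcal{L}+\Hmat_+)^*=\mathcal{L}^*\cap\Hmat_+$. This is elementary: both cones contain $0$, which gives one inclusion, and the other inclusion follows by linearity of the pairing. The rest is bookkeeping of the block traces, using $\mat{Z}_{21}=\mat{Z}_{12}^*$ and the cyclicity of the trace.
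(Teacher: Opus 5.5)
Your proof is correct and follows essentially the same route as the paper, which also writes each $\mat{Q}\in\mathcal{Q}$ as the skew-block term plus a PSD $\mat{S}$. The paper then tests separately with $\mat{X}=\mat{0}$, which gives $\mat{Y}\succeq 0$ and hence the factorization, and with $\mat{S}=\mat{0}$, which gives $\mathrm{tr}(\mat{X}(\mat{F}\mat{G}^*+\mat{G}\mat{F}^*))\geq 0$; this is your identity $(\mathcal{L}+\Hmat_+)^*=\mathcal{L}^*\cap\Hmat_+$ used implicitly, and you simply make that step and the reverse inclusion explicit.
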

\begin{proof}
    By definition, $\mat{Y}\in\mathcal{Q}^*$ if and only if $
        \dotprod{\mat{Q}}{\mat{Y}}\geq 0\; \forall\, \mat{Q} \in\mathcal{Q}
    $. 
    Since elements $\mat{Q}\in\mathcal{Q}$ are exactly the matrices that can be written as 
    \begin{equation}
        \mat{Q} = \begin{pmatrix}
            \mat{0} & \mat{X} &\mat{0} \\ \mat{X} & \mat{0} &\mat{0}\\\mat{0} & \mat{0} & \mat{0}
        \end{pmatrix} + \mat{S}
    \end{equation}
    for some $\mat{X}\succeq 0$, $\mat{S}\succeq 0$, we can set $\mat{X}=0$ and obtain that $\dotprod{\mat{S}}{\mat{Y}}\geq 0$ for all $\mat{S}\succeq 0$, hence $\mat{Y}\succeq 0$. This is equivalent to existence of a factorization of $\mat{Y} = \left(\mat{F};\mat{G};\mat{V}\right)\left(\mat{F};\mat{G};\mat{V}\right)^*$ in terms of matrices $\mat{F},\mat{G} \in\mathbb{C}^{\nu\times m},  
    \mat{V} \in\mathbb{C}^{K\times m}$. By setting $\mat{S}=0$ we have
    \begin{equation}
    \dotprod{\begin{pmatrix}
        \mat{F} \\ \mat{G}   \\ \mat{V} 
        \end{pmatrix}\begin{pmatrix}
        \mat{F} \\ \mat{G}  \\ \mat{V}  
        \end{pmatrix}^*}{\begin{pmatrix}
            \mat{0} & \mat{X} &\mat{0} \\ \mat{X} & \mat{0} & \mat{0}
            \\
            \mat{0} & \mat{0} & \mat{0}
        \end{pmatrix}}\geq 0,\qquad \forall\; \mat{X}\succeq 0
    \end{equation}
    that
    is
$0\leq\textrm{tr}(\mat{F}^*\mat{X}\mat{G} + \mat{G}^*\mat{X}\mat{F})=\textrm{tr}(\mat{X}(\mat{G}\mat{F}^* + \mat{F}\mat{G}^*))$ for any $\mat{X}\succeq 0$, which is equivalent to $\mat{F}\mat{G}^*+\mat{G}\mat{F}^*\succeq 0$. In the case of $\mathcal{Q}_0$, a similar reasoning leads to $\textrm{tr}(\mat{X}(\mat{G}\mat{F}^* + \mat{F}\mat{G}^*))\geq 0$ for all $\mat{X}\in\Hmat$, that is $\mat{F}\mat{G}^*+\mat{G}\mat{F}^*=0$.
\end{proof}

\begin{lemma} \label{lemma:p-dual}
Let $\mat{M}(\mvartheta)$ be a continuous matrix-valued function of $\mvartheta\in[0,1]^k$ such that $\mat{M}(\mvartheta)$ has full column rank. 
Assume $\mathcal{K}\subset \Hmat$ is a closed convex cone for which $\mathcal{K}^*\subseteq \Hmat_+$. 
Let $\mathcal{P} = \{\mat{P}\in\Hmat: \;\mat{M}(\mvartheta)^*\mat{P}\mat{M}(\mvartheta)\succeq_{\mathcal{K}} 0\;\forall \,\mvartheta\in[0,1]^k\}$. Then $\mathcal{P}^* = \mathrm{conv}\,\mathcal{C}$ is the convex hull of the cone
\begin{equation}
    \mathcal{C} = \{\mat{M}(\mvartheta)\mat{Y}\mat{M}(\mvartheta)^*: \; \mvartheta \in [0,1]^k,\;\; \mat{Y} \in\mathcal{K}^*\}.
\end{equation}

\end{lemma}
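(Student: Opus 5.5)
The plan is to prove the two inclusions $\mathrm{conv}\,\mathcal{C}\subseteq\mathcal{P}^*$ and $\mathcal{P}^*\subseteq\overline{\mathrm{conv}}\,\mathcal{C}$, and then show that $\mathrm{conv}\,\mathcal{C}$ is already closed. Since $\mathcal{P}$ is a closed convex cone, the bipolar theorem gives $\mathcal{P}^*=(\mathcal{C}^*)^*=\overline{\mathrm{conv}}\,\mathcal{C}$ once we know $\mathcal{P}=\mathcal{C}^*$.

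\textbf{Step 1: $\mathcal{P}=\mathcal{C}^*$.} This is a direct computation from the trace identity $\dotprod{\mat{P}}{\mat{M}\mat{Y}\mat{M}^*}=\dotprod{\mat{M}^*\mat{P}\mat{M}}{\mat{Y}}$. Because $\mathcal{K}$ is a closed convex cone, $\mathcal{K}^{**}=\mathcal{K}$. Hence $\mat{M}(\mvartheta)^*\mat{P}\mat{M}(\mvartheta)\in\mathcal{K}$ if and only if $\dotprod{\mat{P}}{\mat{M}(\mvartheta)\mat{Y}\mat{M}(\mvartheta)^*}\ge0$ for every $\mat{Y}\in\mathcal{K}^*$. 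Taking this over all $\mvartheta\in[0,1]^k$ gives exactly $\mat{P}\in\mathcal{C}^*$. Since $\mathcal{C}^*=(\mathrm{conv}\,\mathcal{C})^*$, the bipolar theorem then yields $\mathcal{P}^*=\overline{\mathrm{conv}\,\mathcal{C}}$.

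\textbf{Step 2: $\mathrm{conv}\,\mathcal{C}$ is closed.} This is the main obstacle, and the plan is the standard compact-base argument.

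First, $\mathcal{C}$ is a cone, so $\mathrm{conv}\,\mathcal{C}$ is the set of finite sums of elements of $\mathcal{C}$. Consider the base
\[
\mathcal{C}_1=\{\mat{M}(\mvartheta)\mat{Y}\mat{M}(\mvartheta)^*:\ \mvartheta\in[0,1]^k,\ \mat{Y}\in\mathcal{K}^*,\ \mathrm{tr}\,\mat{Y}=1\}.
\]
The set $\{\mat{Y}\in\mathcal{K}^*:\mathrm{tr}\,\mat{Y}=1\}$ is compact, because $\mathcal{K}^*\subseteq\Hmat_+$ is closed and the trace bounds the norm on $\Hmat_+$. Since $\mat{M}$ is continuous on the compact set $[0,1]^k$, $\mathcal{C}_1$ is compact.

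Next, $0\notin\mathrm{conv}\,\mathcal{C}_1$. Full column rank and compactness give a uniform bound $\mat{M}(\mvartheta)^*\mat{M}(\mvartheta)\succeq c\eye$ with $c>0$. Therefore
\[
\mathrm{tr}(\mat{M}\mat{Y}\mat{M}^*)=\dotprod{\mat{M}^*\mat{M}}{\mat{Y}}\ge c\,\mathrm{tr}\,\mat{Y}=c
\]
for every $\mat{Y}\succeq0$ with unit trace. So every element of $\mathcal{C}_1$ has trace at least $c$, and the same holds for convex combinations.

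By Carathéodory's theorem, $\mathrm{conv}\,\mathcal{C}_1$ is compact. Every nonzero element of $\mathrm{conv}\,\mathcal{C}$ is a positive multiple of an element of $\mathrm{conv}\,\mathcal{C}_1$, so $\mathrm{conv}\,\mathcal{C}=\mathrm{cone}(\mathrm{conv}\,\mathcal{C}_1)$.

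Now take $\mat{Z}_j=t_j\mat{B}_j\to\mat{Z}$ with $\mat{B}_j\in\mathrm{conv}\,\mathcal{C}_1$ and $t_j\ge0$. The trace bound gives $t_j\le \mathrm{tr}\,\mat{Z}_j/c$, so the $t_j$ are bounded. Passing to convergent subsequences of $t_j$ and $\mat{B}_j$ shows $\mat{Z}\in\mathrm{conv}\,\mathcal{C}$, including the case $\mat{Z}=0$. Hence $\mathrm{conv}\,\mathcal{C}$ is closed.

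\textbf{Conclusion.} Combining the two steps gives $\mathcal{P}^*=\mathrm{conv}\,\mathcal{C}$. The only delicate points are using $\mathcal{K}^*\subseteq\Hmat_+$ to make the trace a valid normalization, and using full column rank to keep the base $\mathcal{C}_1$ away from the origin; both are handled in Step 2.
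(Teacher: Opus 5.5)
Your proposal is correct and follows the paper's proof essentially step for step: you show $\mathcal{P}=\mathcal{C}^*$ via the trace identity and $\mathcal{K}^{**}=\mathcal{K}$, apply the bipolar theorem, and get closedness of $\mathrm{conv}\,\mathcal{C}$ from a compact base whose convex hull avoids the origin. The only differences are that you normalize by $\mathrm{tr}\,\mat{Y}$ rather than $\snorm{\mat{Y}}_F$, and that you prove the closedness of the conic hull directly (uniform bound $\mat{M}(\mvartheta)^*\mat{M}(\mvartheta)\succeq c\eye$, Carath\'eodory, bounded scalings), whereas the paper cites a textbook result on conic hulls of compact sets.
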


\begin{proof}
    Since $\mathcal{K}$ is a closed convex cone, $\mat{M}(\mvartheta)^*\mat{P}\mat{M}(\mvartheta)\succeq_{\mathcal{K}}0$ for any fixed $\mvartheta$ if and only if  
\begin{equation}
\dotprod{\mat{M}(\mvartheta)^*\mat{P}\mat{M}(\mvartheta)}{\mat{Y}} \geq 0\qquad \forall\; \mat{Y} \in\mathcal{K}^*\label{eq:eq44}
\end{equation}
Therefore $\mat{P}\in\mathcal{P}$ if and only if \eqref{eq:eq44} holds for all $\mvartheta \in [0,1]^k$, i.e.
\begin{equation}
\dotprod{\mat{P}}{\mat{M}(\mvartheta)\mat{Y}\mat{M}(\mvartheta)^*} \geq 0 \quad \forall\; \mvartheta \in [0,1]^k, \quad \mat{Y}\in\mathcal{K}^*.
\end{equation}
In turn, this is equivalent to  $\mat{P}\in\mathcal{C}^*$, hence
 $\mathcal{P} = \mathcal{C}^*$. By the bipolar theorem, the dual of $\mathcal{P}$ is thus the closed convex hull of $\mathcal{C}$, i.e. $\mathcal{P}^*=\mathcal{C}^{**}=\textrm{cl}\,\textrm{conv}\, \mathcal{C}$.
 
 Consider the set $\mathcal{C}_1 = \{\mat{M}(\mvartheta)\mat{Y}\mat{M}(\mvartheta)^*: \; \mvartheta \in [0,1]^k,\, \mat{Y} \in\mathcal{K}^*,\, \snorm{\mat{Y}}_F=1\}$. Since $\mathcal{K}^*\subseteq\Hmat_+$, also $\mathcal{C}_1\subseteq \Hmat_+$.
 The assumptions that $\mat{M}(\mvartheta)$ has full column rank for all $\mvartheta\in[0,1]^k$ and $\mat{Y}\succeq 0$ imply that $\mat{0}\not \in \mathcal{C}_1$. Elements of $\mathcal{C}_1$ are thus nonzero positive semidefinite matrices, and a convex combination of such matrices cannot yield zero. Hence, $\mat{0}\not\in\mathrm{conv}\,\mathcal{C}_1$. 
 Since $\mat{M}(\mvartheta)$ is continuous, the set $\mathcal{C}_1$ is compact as the continuous image of the compact set $[0,1]^k\times \{\mat{Y}\in\mathcal{K}^*:\,\snorm{\mat{Y}}_F=1\}$.
 Observe that elements of the cone $\mathcal{C}$ are exactly those obtained by nonnegative rescaling of elements of $\mathcal{C}_1$, from which it is possible to show that $\mathrm{conv}\,\mathcal{C} = \mathrm{conv}\{\alpha\mat{X}:\;\alpha\geq0,\,\mat{X}\in\,\mathcal{C}_1\}$ is the conic hull of $\mathcal{C}_1$. This set is closed by \cite[Prop. 1.4.7]{HiriartUrruty} because it is the conic hull of a nonempty compact set whose convex hull does not contain zero. Hence $\mathrm{conv}\,\mathcal{C}$ is closed and $\mathrm{conv}\,\mathcal{C} = \mathrm{cl}\,\mathrm{conv}\,\mathcal{C} = \mathcal{P}^*$
\end{proof}

The following lemma summarizes instrumental results originally presented in \cite{rantzer-on-the-kalman} and \cite{generalized-s-procedure}.
\begin{lemma}[\cite{rantzer-on-the-kalman},\cite{generalized-s-procedure}]\label{lemma:1vector}
    Let $\mat{F}, \mat{G}\in \mathbb{C}^{n\times m}$ be given and $\mat{U}\in\mathbb{C}^{m\times m}$ indicate a unitary matrix. Let $\vet{f}_1,\dots,\vet{f}_m\in\mathbb{C}^n$ be the columns of $\mat{F}\mat{U}$ and $\vet{g}_1,\dots,\vet{g}_m\in\mathbb{C}^n$ the columns of $\mat{G}\mat{U}$.
    \begin{itemize}
        \item[\emph{a)}] If $\mat{F}\mat{G}^*+\mat{G}\mat{F}^*\succeq \mat{0}$, then $\mat{U}$ can be chosen so that $\vet{f}_1\vet{g}_1^* + \vet{g}_1\vet{f}_1^*\succeq \mat{0}$;
        \item[\emph{b)}] If $\mat{F}\mat{G}^*+\mat{G}\mat{F}^*=\mat{0}$, then $\mat{U}$ can be chosen so that $\vet{f}_\ell \vet{g}_\ell^* + \vet{g}_\ell \vet{f}_\ell^*=\mat{0}$, $\forall\,\ell = 1,\dots,m$.
    \end{itemize}
    \end{lemma}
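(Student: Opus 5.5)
The plan is to reduce both statements to a comparison of two Gram matrices, using the change of variables $\mat{A}=\mat{F}+\mat{G}$, $\mat{B}=\mat{F}-\mat{G}$, both in $\mathbb{C}^{n\times m}$. Expanding gives
\begin{equation*}
\mat{A}\mat{A}^*-\mat{B}\mat{B}^* = 2(\mat{F}\mat{G}^*+\mat{G}\mat{F}^*).
\end{equation*}
The same identity holds for columns. If $\vet{a}_\ell=\vet{f}_\ell+\vet{g}_\ell$ and $\vet{b}_\ell=\vet{f}_\ell-\vet{g}_\ell$ are the columns of $\mat{A}\mat{U}$ and $\mat{B}\mat{U}$, then $\vet{a}_\ell\vet{a}_\ell^*-\vet{b}_\ell\vet{b}_\ell^*=2(\vet{f}_\ell\vet{g}_\ell^*+\vet{g}_\ell\vet{f}_\ell^*)$. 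Hence the hypotheses read $\mat{A}\mat{A}^*\succeq\mat{B}\mat{B}^*$ in case a) and $\mat{A}\mat{A}^*=\mat{B}\mat{B}^*$ in case b). It then suffices to find columns, or a unitary change of columns, with $\vet{b}_\ell=\lambda_\ell\vet{a}_\ell$ for scalars with $|\lambda_\ell|\leq 1$ (respectively $|\lambda_\ell|=1$). Indeed, in that situation $\vet{a}_\ell\vet{a}_\ell^*-\vet{b}_\ell\vet{b}_\ell^*=(1-|\lambda_\ell|^2)\vet{a}_\ell\vet{a}_\ell^*$, which is $\succeq 0$ (respectively $=0$).

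For a), I would invoke Douglas' factorization lemma. If $\mat{A}\mat{A}^*\succeq\mat{B}\mat{B}^*$, then there is a contraction $\mat{C}\in\mathbb{C}^{m\times m}$ with $\snorm{\mat{C}}_2\leq 1$ and $\mat{B}=\mat{A}\mat{C}$. A direct argument also works: define $\mat{C}$ on the range of $\mat{A}^*$ by $\mat{A}^*\vet{x}\mapsto\mat{B}^*\vet{x}$. This map is well defined and contractive because $\snorm{\mat{B}^*\vet{x}}\leq\snorm{\mat{A}^*\vet{x}}$. Extend it by zero on the orthogonal complement, and take the adjoint of the result to get $\mat{C}$.

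Since we work over $\mathbb{C}$, $\mat{C}$ has an eigenpair $\mat{C}\vet{u}=\lambda\vet{u}$ with $\snorm{\vet{u}}=1$, and $|\lambda|\leq\snorm{\mat{C}}_2\leq 1$. Complete $\vet{u}$ to a unitary $\mat{U}$ whose first column is $\vet{u}$. Then $\vet{b}_1=\mat{B}\vet{u}=\mat{A}\mat{C}\vet{u}=\lambda\mat{A}\vet{u}=\lambda\vet{a}_1$, and the identity above gives $\vet{f}_1\vet{g}_1^*+\vet{g}_1\vet{f}_1^*=\tfrac12(1-|\lambda|^2)\vet{a}_1\vet{a}_1^*\succeq 0$.

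For b), the equality $\mat{A}\mat{A}^*=\mat{B}\mat{B}^*$ implies that the map $\mat{A}^*\vet{x}\mapsto\mat{B}^*\vet{x}$ is an isometry between the ranges of $\mat{A}^*$ and $\mat{B}^*$. These ranges have equal dimension, so the isometry extends to a unitary on $\mathbb{C}^m$, giving $\mat{B}=\mat{A}\mat{V}$ with $\mat{V}$ unitary. A unitary matrix is normal, so $\mat{V}=\mat{U}\mat{D}\mat{U}^*$ with $\mat{U}$ unitary and $\mat{D}=\mathrm{diag}\{e^{\jj\phi_\ell}\}$. Then $\mat{B}\mat{U}=\mat{A}\mat{U}\mat{D}$, that is $\vet{b}_\ell=e^{\jj\phi_\ell}\vet{a}_\ell$ for every $\ell$. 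Hence $\vet{a}_\ell\vet{a}_\ell^*=\vet{b}_\ell\vet{b}_\ell^*$, and so $\vet{f}_\ell\vet{g}_\ell^*+\vet{g}_\ell\vet{f}_\ell^*=\mat{0}$ for all $\ell=1,\dots,m$.

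The only step needing care is the factorization $\mat{B}=\mat{A}\mat{C}$, with $\mat{C}$ contractive in case a) and unitary in case b). In particular one must check well-definedness when $\mat{A}^*$ has a nontrivial kernel. In case b) one must also use that the orthogonal complements of the two ranges have equal dimension, so the partial isometry extends to a full unitary. Everything else is the algebraic identity relating $\mat{F},\mat{G}$ to $\mat{A},\mat{B}$, together with the existence of a complex eigenvector.
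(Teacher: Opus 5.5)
Your proof is correct, and it takes a different route from the paper only in the sense that the paper gives no argument here: it cites Lemma~5.1 of Iwasaki et al.\ for (a) and Lemma~5 of Rantzer for (b). Your proof is self-contained and elementary. The substitution $\mat{A}=\mat{F}+\mat{G}$, $\mat{B}=\mat{F}-\mat{G}$ turns both hypotheses into $\mat{A}\mat{A}^*\succeq\mat{B}\mat{B}^*$ (resp.\ $=$). Douglas' lemma then gives $\mat{B}=\mat{A}\mat{C}$ with $\mat{C}$ contractive (resp.\ unitary), and one eigenvector of $\mat{C}$ (resp.\ its spectral decomposition) finishes the argument. For (b) you rely on the fact $\mat{A}\mat{A}^*=\mat{B}\mat{B}^*\Rightarrow\mat{B}=\mat{A}\mat{V}$ with $\mat{V}$ unitary, which the paper itself takes from Horn--Johnson and Rantzer in the proof of Lemma~\ref{lemma:sqrt}, so no new machinery is needed. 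Your route has a further benefit. The relation $\vet{b}_1=\lambda\vet{a}_1$ reads $(1-\lambda)\vet{f}_1=(1+\lambda)\vet{g}_1$, which immediately gives the dichotomy the paper later imports from Rantzer's Corollary~4. If $\lambda=1$ then $\vet{g}_1=\vet{0}$; otherwise $\vet{f}_1=s\vet{g}_1$ with $s=(1+\lambda)/(1-\lambda)$ and $\operatorname{Re} s=(1-|\lambda|^2)/|1-\lambda|^2\geq 0$.

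One point is worth tightening. As literally stated, (a) is weak, and your argument may satisfy it only trivially. Any unit vector $\vet{u}\in\ker\mat{F}\cap\ker\mat{G}$ lies in $\ker\mat{C}=\ker\mat{B}$, so it is an admissible eigenvector with $\lambda=0$, and it yields $\vet{f}_1=\vet{g}_1=\vet{0}$. In the paper's application $\mat{F},\mat{G}$ are $\nu\times(2\nu+K)$, so this intersection is never trivial. Such a column is useless in the proof of Proposition~\ref{prop:stab-criterion}: it forces $\vet{r}=\vet{0}$, and the step ``$\vet{b}_\sigma r_2=\vet{f}$, so $r_2\neq 0$'' needs $\vet{f}\neq\vet{0}$. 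Your construction gives the nontrivial version at no extra cost. Since $\mat{C}$ is the adjoint of a map vanishing on $\ker\mat{A}$, we have $\mathrm{range}\,\mat{C}\subseteq\mathrm{range}\,\mat{A}^*$, so this subspace is $\mat{C}$-invariant. Taking an eigenvector of $\mat{C}$ restricted to it gives $\vet{a}_1=\mat{A}\vet{u}\neq\vet{0}$ whenever $\mat{F},\mat{G}$ are not both zero.
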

    \begin{proof}
        The first part is Lemma 5.1 in \cite{generalized-s-procedure}. The second part follows from Lemma 5 in \cite{rantzer-on-the-kalman} (with $\mat{W} = \eye$ in the notation of \cite{rantzer-on-the-kalman}).
    \end{proof}
The following is an auxiliary linear-algebraic lemma.
    \begin{lemma}\label{lemma:sqrt}
        Let $\mat{K}\in\mathbb{C}^{n\times m}$, $\mat{Z}\in\mathbb{C}^{m\times m}$ with $\mat{Z}\succeq 0$ and $\mat{G}\in\mathbb{C}^{n\times l}$. Assume $\mat{K}$ has linearly independent columns. If $\mat{K}\mat{Z}\mat{K}^* = \mat{G}\mat{G}^*$, then there exists $\mat{R}$ such that $\mat{Z} = \mat{R}\mat{R}^*$ and $\mat{K}\mat{R} = \mat{G}$.
    \end{lemma}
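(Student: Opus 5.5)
The plan is to build $\mat{R}$ explicitly from a factorization of $\mat{Z}$ plus a left inverse of $\mat{K}$. Since $\mat{K}$ has linearly independent columns, $\mat{K}^*\mat{K}$ is invertible and $\mat{K}^\dagger=(\mat{K}^*\mat{K})^{-1}\mat{K}^*$ is a left inverse, $\mat{K}^\dagger\mat{K}=\eye_m$. Let $\Pi=\mat{K}\mat{K}^\dagger$ be the orthogonal projector onto $\mathrm{range}\,\mat{K}$. The natural candidate is $\mat{R}=\mat{K}^\dagger\mat{G}$. Two things must then be checked: $\mat{K}\mat{R}=\mat{G}$, and $\mat{R}\mat{R}^*=\mat{Z}$.

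For the first, $\mat{K}\mat{R}=\Pi\mat{G}$, so it suffices to show $\mathrm{range}\,\mat{G}\subseteq\mathrm{range}\,\mat{K}$. This follows from the hypothesis:
\begin{equation*}
\mathrm{range}\,\mat{G}=\mathrm{range}\,\mat{G}\mat{G}^*=\mathrm{range}\,\mat{K}\mat{Z}\mat{K}^*\subseteq\mathrm{range}\,\mat{K}.
\end{equation*}
The first equality is standard: $\ker\mat{G}\mat{G}^*=\ker\mat{G}^*$, and ranges are orthogonal complements of kernels of adjoints. Hence $\Pi\mat{G}=\mat{G}$.

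For the second, multiply the identity $\mat{K}\mat{Z}\mat{K}^*=\mat{G}\mat{G}^*$ on the left by $\mat{K}^\dagger$ and on the right by $(\mat{K}^\dagger)^*$. Using $\mat{K}^\dagger\mat{K}=\eye$, the left side becomes $\mat{Z}$, and the right side becomes $\mat{K}^\dagger\mat{G}\mat{G}^*(\mat{K}^\dagger)^*=\mat{R}\mat{R}^*$. This gives $\mat{Z}=\mat{R}\mat{R}^*$ as required.

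The only nonroutine point is the range inclusion, which is what ensures $\mat{K}\mat{K}^\dagger\mat{G}=\mat{G}$. The hypothesis $\mat{Z}\succeq 0$ is not used beyond being consistent with $\mat{Z}=\mat{R}\mat{R}^*$. The number of columns $l$ of $\mat{R}$ matches that of $\mat{G}$, so no padding is needed.
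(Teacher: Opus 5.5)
Your proof is correct, and it takes a more direct route than the paper's.

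The paper's proof proceeds in four steps:
\begin{enumerate}
\item It notes that $\mat{Z}$ and $\mat{K}\mat{Z}\mat{K}^*$ have the same rank $r\leq l$.
\item It takes a thin factorization $\mat{Z}=\mat{L}\mat{L}^*$ and pads it with zeros to $\tilde{\mat{L}}=(\mat{L}\;\;\mat{0})\in\mathbb{C}^{m\times l}$.
\item It applies the unitary freedom of square roots (Horn--Johnson, Thm.~7.3.11) to $\mat{F}\mat{F}^*=\mat{G}\mat{G}^*$ with $\mat{F}=\mat{K}\tilde{\mat{L}}$, obtaining a unitary $\mat{U}$ with $\mat{F}\mat{U}=\mat{G}$.
\item It sets $\mat{R}=\tilde{\mat{L}}\mat{U}$.
\end{enumerate}

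You instead write down $\mat{R}=\mat{K}^\dagger\mat{G}$ explicitly. You then verify both identities using only the left inverse and the inclusion $\mathrm{range}\,\mat{G}\subseteq\mathrm{range}\,\mat{K}$.

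The two constructions give the same matrix. Since $\mat{K}$ is injective, $\mat{K}\mat{R}=\mat{G}$ determines $\mat{R}$ uniquely, so the paper's $\tilde{\mat{L}}\mat{U}$ must equal your $\mat{K}^\dagger\mat{G}$.

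What your version buys:
\begin{itemize}
\item It avoids the rank bookkeeping, the zero-padding, and the external unitary-equivalence theorem.
\item It makes clear that $\mat{Z}\succeq 0$ is a consequence of the other hypotheses rather than an assumption, since $\mat{Z}=\mat{K}^\dagger\mat{G}\mat{G}^*(\mat{K}^\dagger)^*$.
\end{itemize}

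The paper's route fits the theme of rotating a factorization by a unitary, which it also uses with Lemma~\ref{lemma:1vector}. However, nothing downstream depends on that. The later argument only needs $\mat{K}\mat{R}=\mat{G}$ column by column, which your explicit formula provides directly.
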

    \begin{proof}
    Since $\mat{K}$ has linearly independent columns, $\mat{K}\mat{Z}\mat{K}^*$ and $\mat{Z}$ have the same rank $r = \mathrm{rank}\,\mat{Z}$. Since $\mat{Z}\succeq 0$, there is a factorization $\mat{Z}=\mat{L}\mat{L}^*$ with $\mat{L}\in\mathbb{C}^{m\times r}$. The rank of $\mat{G}\mat{G}^*$ is also $r$, and it cannot be greater than the number of columns of $\mat{G}$, i.e. $r\leq l$. Extend $\mat{L}$ to a matrix $\tilde{\mat{L}} = (\mat{L} \;\; \mat{0})\in\mathbb{C}^{m\times l}$ with $l$ columns. Let $\mat{F} = \mat{K}\tilde{\mat{L}}$. Then $\mat{F}\mat{F}^* =\mat{K}\mat{Z}\mat{K}^* = \mat{G}\mat{G}^*$ and $\mat{F}$, $\mat{G}$ have the same size. By \cite[Theorem 7.3.11]{horn-johnson} or \cite[Lemma 3]{rantzer-on-the-kalman}, there is a unitary $\mat{U}\in\mathbb{C}^{l\times l}$ such that $\mat{F}\mat{U}=\mat{G}$. With $\mat{R} = \tilde{\mat{L}}\mat{U}$, we have $\mat{R}\mat{R}^* = \tilde{\mat{L}}\tilde{\mat{L}}^* = \mat{Z}$ and $\mat{G}=\mat{F}\mat{U} = \mat{K}\tilde{\mat{L}}\mat{U} = \mat{K}\mat{R}$.
    \end{proof}

We finally turn to the proof of Proposition \ref{prop:stab-criterion}.
\begin{proof}
     To show sufficiency, we follow the main steps of \cite{scherer2005}. Assume \eqref{eq:q-cond}-\eqref{eq:lmi-stab-cond} hold. Then condition \eqref{eq:q-cond} is equivalent to $\mat{P}\in\mathcal{P}$. Since  $\mathcal{P}$ is a subset of admissible matrices $\mat{P}$ for \eqref{eq:p-def} as discussed in Sec. \ref{sec:global-cert},
     \begin{equation}
         \begin{pmatrix}
             \mat{\Delta}(s,\mvartheta)\\\eye
         \end{pmatrix}^*\mat{P}
         \begin{pmatrix}
             \mat{\Delta}(s,\mvartheta)\\\eye
         \end{pmatrix}\succeq 0,\qquad \forall\; (s,\mvartheta) \in {\Lambda}.
     \end{equation}
     Let $\vet{\psi}(s,\mvartheta) = [\mat{\Delta}(s,\mvartheta) - \mat{A}]^{-1}\vet{b}$ so that $\beta(s,\mvartheta)=\vet{c}\vet{\psi}(s,\mvartheta)$. Note that, since the left side of \eqref{eq:lmi-stab-cond} is positive definite, there is an $\epsilon>0$ for which the same matrix is $\succ \epsilon \eye$. Multiply \eqref{eq:lmi-stab-cond} to the right and left by $(\vet{\psi}(s,\mvartheta);1)$ and its Hermitian transpose, respectively. Using the identity \cite{scherer2005}
     \begin{equation}
         \begin{pmatrix}
             \mat{A} & \vet{b} \\
             \eye & \vet{0}
         \end{pmatrix}\begin{pmatrix}
             \vet{\psi}(s,\mvartheta) \\ 1
         \end{pmatrix} =
         \begin{pmatrix}
             \mat{\Delta}(s,\mvartheta) \\ \eye
         \end{pmatrix}\vet{\psi}(s,\mvartheta)
     \end{equation}
     condition \eqref{eq:lmi-stab-cond} gives
     \begin{multline}
         \vet{\psi}(s,\mvartheta)^*\vet{c}^*\vet{c}\vet{\psi}(s,\mvartheta) > \\
             \vet{\psi}(s,\mvartheta)^*\begin{pmatrix}
             \mat{\Delta}(s,\mvartheta) \\ \eye
         \end{pmatrix}^*P\begin{pmatrix}
             \mat{\Delta}(s,\mvartheta) \\ \eye
         \end{pmatrix}
             \vet{\psi}(s,\mvartheta)  + \epsilon (\norm{\vet{\psi}(s,\mvartheta)}^2 + 1) 
     \end{multline}
     that is $|\beta(s,\mvartheta)|^2>\epsilon(\norm{\vet{\psi}(s,\mvartheta)}^2 + 1)>\epsilon \; \forall\, (s,\mvartheta)\in {\Lambda}$. This implies $\beta(s,\mvartheta)\neq 0$ for all $(s,\mvartheta)\in\Lambda $ and $\beta(\infty,\mvartheta)=\lim_{s\to \infty}\beta(s,\mvartheta)\neq 0$.

     Necessity is established next by contrapositive: we show that if \eqref{eq:q-cond}-\eqref{eq:lmi-stab-cond} have no solution $\mat{P}$, then $\beta(s,\mvartheta)$ must have a zero in $\bar{\Lambda}$. Recall that $\mathcal{P}$ is the set of matrices satisfying \eqref{eq:q-cond} and assume that \eqref{eq:lmi-stab-cond} is not verified for any $\mat{P}\in\mathcal{P}$. Equivalently, $\Hmat_{++}$ and the set spanned by the left side of \eqref{eq:lmi-stab-cond} as $\mat{P}$ spans $\mathcal{P}$ are disjoint. The latter set is convex as an affine image of the convex set $\mathcal{P}$. It follows from the separating hyperplane theorem that there is a nonzero $\mat{Z}\succeq 0$ such that
\begin{align}
    \dotprod{
    \begin{pmatrix}
         \mat{A} & \vet{b} 
         \\
         \eye &\vet{0}
     \end{pmatrix}\mat{Z}\begin{pmatrix}
         \mat{A} & \vet{b} 
         \\
         \eye &\vet{0}
     \end{pmatrix}^*}{
     \mat{P}
    }&\geq 0\qquad \forall\; \mat{P} \in\mathcal{P},\label{eq:dual-1}
    \\
    \dotprod{\begin{pmatrix}
        \vet{c}^*\vet{c} & \mat{0} \\ \mat{0} & 0
    \end{pmatrix}}{\mat{Z}} &\leq 0.\label{eq:dual-2}
\end{align}
 Condition \eqref{eq:dual-1} gives
\begin{equation}
     \begin{pmatrix}
            \mat{A} & \vet{b} 
         \\
         \eye &\vet{0}
     \end{pmatrix}\mat{Z}\begin{pmatrix}
         \mat{A} & \vet{b} 
         \\
         \eye &\vet{0}
     \end{pmatrix}^*\in\mathcal{P}^* \label{eq:z-in-c}
\end{equation}
By Lemma \ref{lemma:p-dual}, $\mathcal{P}^*=\textrm{conv}\,\{\mat{M}(\mvartheta)\mat{Y}\mat{M}(\mvartheta)^*: \; \mvartheta \in [0,1]^k,\;\; \mat{Y} \in\mathcal{Q}^*\}$. Therefore the matrix in \eqref{eq:z-in-c} can be expressed as convex combination 
\begin{equation}
    \begin{pmatrix}
         \mat{A} & \vet{b} 
         \\
         \eye &\vet{0}
     \end{pmatrix}\mat{Z}\begin{pmatrix}
         \mat{A} & \vet{b} 
         \\
         \eye &\vet{0}
     \end{pmatrix}^* = \sum_{\ell=1}^{\bar{\ell}}\mat{M}(\mvartheta_\ell)\mat{Y}_\ell \mat{M}(\mvartheta_\ell)^*\label{eq:quad-eq}
\end{equation}
in terms of $\mvartheta_\ell\in[0,1]^k$ and $\mat{Y}_\ell\in\mathcal{Q}^*$, $\ell = 1,\dots,\bar{\ell}$.
By Lemma \ref{lemma:dualq}, every $\mat{Y}_{\ell}\in\mathcal{Q}^*$ has a decomposition $\mat{Y}_\ell = \tilde{\mat{X}}_\ell \tilde{\mat{X}}_\ell^{*}$ with $\tilde{\mat{X}}_{\ell} = (\tilde{\mat{F}}_{\ell}; \tilde{\mat{G}}_{\ell}; \tilde{\mat{V}}_{\ell})$ and $\tilde{\mat{F}}_{\ell}\tilde{\mat{G}}_{\ell}^{*} + \tilde{\mat{G}}_{\ell}\tilde{\mat{F}}_{\ell}^{*} \succeq 0$. By Lemma \ref{lemma:1vector}(a) there is a unitary $\mat{U}_\ell$ for which the matrix $\tilde{\mat{X}}_{\ell}\mat{U}_\ell = \mat{X}_{\ell}$ is such that its first column $\vet{\xi}_{\ell}$ has the form $\vet{\xi}_\ell = (\vet{f}_\ell;\vet{g}_\ell;\vet{v}_\ell)$ with  $\vet{f}_\ell,\vet{g}_{\ell}\in\mathbb{C}^\nu$, $\vet{v}_{\ell}\in\mathbb{C}^{K}$ satisfying $\vet{f}_\ell \vet{g}_\ell^* + \vet{g}_\ell \vet{f}_\ell^*\succeq \mat{0}$. By Corollary 4 in \cite{rantzer-on-the-kalman}, this implies that either 
\begin{itemize}
    \item[\emph{a)}] $
\exists\;s_{\ell}\in\bar{\mathbb{C}}_+$, $\vet{f}_\ell=s_{\ell}\vet{g}_\ell,\;\;\vet{g}_\ell \neq \vet{0}
$, or
\item[\emph{b)}] $\vet{g}_\ell=\vet{0}$.
\end{itemize}
The decomposition $\mat{Y}_{\ell} = \mat{X}_{\ell}\mat{X}_{\ell}^*$ is unaffected by $\mat{U}_\ell$. Since $\vet{b}\neq \vet{0}$ in \eqref{eq:quad-eq}, the linear independence hypothesis of Lemma \ref{lemma:sqrt} is satisfied and it can be applied to \eqref{eq:quad-eq} to deduce that there is a matrix $\mat{R}$ such that $\mat{Z} = \mat{R}\mat{R}^*$ and 
\begin{equation}
    \begin{pmatrix}
         \mat{A} & \vet{b} 
         \\
         \eye &\vet{0}
     \end{pmatrix}\mat{R} = \begin{pmatrix}
         \mat{M}(\mvartheta_1)\mat{X}_1 &\cdots & \mat{M}(\mvartheta_{\bar{\ell}})\mat{X}_{\bar{\ell}}
     \end{pmatrix}
     \label{eq:RXrel}
\end{equation}
Let $\vet{r} = (\vet{r}_1;r_2)$, $r_2\in\mathbb{C}$, be the first column of $\mat{R}$ and denote the first column $\vet{\xi}_1$ of $\mat{X}_1$ as $\vet{\xi} = (\vet{f};\vet{g};\vet{v})$ with $\vet{f}$, $\vet{g}\in\mathbb{C}^{\nu}$. From \eqref{eq:RXrel}, 
\begin{equation}
    \begin{pmatrix}
         \mat{A} & \vet{b} 
         \\
         \eye &\vet{0}
     \end{pmatrix}\begin{pmatrix}
         \vet{r}_1\\r_2
     \end{pmatrix} = \mat{M}(\mvartheta_1)\vet{\xi}=
     \begin{pmatrix}
         \vet{f}\\
         \mat{\Theta}(\mvartheta_1)\vet{v}\\
         \vet{g}\\
         \vet{v}
     \end{pmatrix}
     \label{eq:r-xi-eq}
\end{equation}
with $\vet{\xi}$ satisfying either \emph{a)} or \emph{b)}. Replacing $\mat{Z} = \mat{R}\mat{R}^*$ in \eqref{eq:dual-2} gives $\norm{\begin{pmatrix}
    \vet{c} & 0
\end{pmatrix}\mat{R}}^2\leq 0$, hence $\begin{pmatrix}
        \vet{c} & 0
    \end{pmatrix}\vet{r}=0$. Since the second block-row of \eqref{eq:r-xi-eq} gives $\vet{r}_1=(\vet{g};\vet{v})$, 
\begin{equation}
    \vet{c}\begin{pmatrix}
        \vet{g} \\ \vet{v}
    \end{pmatrix} = 0\label{eq:outzero}
\end{equation}
 Consider case \emph{a)} first, with $\vet{f} = s_{1}\vet{g}$ and $\vet{g}\neq 0$. Equality \eqref{eq:r-xi-eq} is
\begin{equation}
    \begin{pmatrix}
        \mat{A} & \vet{b} \\ \eye & \vet{0}
    \end{pmatrix}\begin{pmatrix}
        \vet{r}_1\\r_2
    \end{pmatrix} = \mat{M}(\mvartheta_1)\begin{pmatrix}
        s_1\vet{g} \\ \vet{g}\\\vet{v}
    \end{pmatrix} = \begin{pmatrix}
     s_1\vet{g}
     \\
     \mat{\Theta}(\mvartheta_1)\vet{v}
     \\
     \vet{g} \\
     \vet{v}
 \end{pmatrix}.
\end{equation}
The second block-row gives gives $\vet{r}_1=(\vet{g};\vet{v})$ and the first block-row gives
\begin{equation}
    \mat{A}\vet{r}_1 + \vet{b}r_2 = \begin{pmatrix}
        s_1\eye_\nu & \mat{0} \\ \mat{0} & \mat{\Theta}(\mvartheta_1)
    \end{pmatrix}\begin{pmatrix}
        \vet{g} \\ \vet{v}
    \end{pmatrix} \implies 
    \begin{cases}
    \mat{\Sigma}\vet{g}+ \vet{b}_\sigma r_2= s_1\vet{g}
    \\
    \mat{A}_{21}\vet{g} + \mat{\Pi} \vet{v} +\vet{b}_{\theta}r_2= \mat{\Theta}(\mvartheta_1)\vet{v}
    \end{cases}
\end{equation}
that is $\vet{r}_1=(\vet{g};\,\vet{v})=[\mat{\Delta}(s_1,\mvartheta_1) - \mat{A}]^{-1}\vet{b}r_2$ and $r_2\neq 0$ because it is assumed that $\mat{\Delta}(s_1,\mvartheta_1)-\mat{A}$ is nonsingular.  By \eqref{eq:outzero}, $0=\vet{c}\vet{r}_1=\beta(s_1,\mvartheta_1)r_2$ and $\beta(s,\mvartheta)$ is zero at $(s_1,\mvartheta_1)\in\bar{\Lambda}$.

Let us now turn to case \emph{b)}, that is $\vet{g}=0$. In this case $\vet{r}_1 = (\vet{0};\vet{v})$,  the first block-row of \eqref{eq:r-xi-eq} yields
\begin{equation}
    \begin{cases}
        \vet{b}_\sigma r_2 = \vet{f} \\
        \mat{\Pi} \vet{v} + \vet{b}_{\theta} r_2 = \mat{\Theta}(\mvartheta_1)\vet{v}
    \end{cases}
\end{equation}
so that $r_2\neq 0$,  $\vet{v} = [\mat{\Theta}(\mvartheta_1)-\mat{\Pi}]^{-1}\vet{b}_{\theta} r_2$ and $0=\vet{c}\vet{r}_1=\vet{\eta} [\mat{\Theta}(\mvartheta_1)-\mat{\Pi}]^{-1}\vet{b}_{\theta} r_2$ implies
\begin{equation}
    0=\vet{\eta}[\mat{\Theta}(\mvartheta_1)-\mat{\Pi}]^{-1}\vet{b}_{\theta} =\lim_{s\to\infty} \beta(s,\mvartheta_1)
\end{equation}
that is $\beta(s,\mvartheta)$ has a zero at $(\infty,\mvartheta_1)\in\bar{\Lambda}$.
\end{proof}

\section{Proof of Proposition \ref{prop:polya}}
\label{sec:appendix3}
This proof follows the strategy of \cite{scherer2005}, with the differences that it applies to the Bernstein relaxation instead of P\'olya's, and it is formulated in terms of generalized inequalities.
\begin{proof} 
Consider a fixed arbitrary element $\mat{Y}\in\mathcal{Q}^*$ with $\norm{\mat{Y}}_F = 1$. The hypothesis $\mat{W}(\mvartheta)\succ_{\mathcal{Q}}0$ implies $\dotprod{\mat{W}(\mvartheta)}{\mat{Y}} > 0$. Hence, the scalar function $g(\mvartheta,\mat{Y})=\dotprod{\mat{W}(\mvartheta)}{\mat{Y}}$ is positive for all $\mvartheta\in[0,1]^k$ and $\mat{Y}\in\mathcal{Q}^*$ with $||\mat{Y}||_F=1$. Note that $g(\mvartheta,\mat{Y})$ is a real polynomial in $\mvartheta$ for any fixed $\mat{Y}$. 
Let $\delta_0$ be the maximum degree of $\mat{W}(\mvartheta)$ in each variable, and $\delta_{\rm tot}$ the total degree of $\mat{W}(\mvartheta)$.
The polynomial $g(\mvartheta,\mat{Y})$ can be expressed in the Bernstein basis of a given degree $\bar{\delta}\geq \delta_0$,
    \begin{equation}
g(\mvartheta,\mat{Y})= \dotprod{\mat{W}(\mvartheta)}{\mat{Y}} = \sum_{\boldsymbol{n}\leq \bar{\delta}}
\dotprod{\mat{W}^{(\bar{\delta})}_{{\rm ber},\boldsymbol{n}}}{\mat{Y}} \bernbas_{\boldsymbol{n}}^{(\bar{\delta})}(\mvartheta),
    \end{equation}
    with Bernstein coefficients $\dotprod{\mat{W}^{(\bar{\delta})}_{{\rm ber},\boldsymbol{n}}}{\mat{Y}}$.
    Let $g_{\boldsymbol{n}}(\mat{Y})$ be the coefficient of $g(\mvartheta,\mat{Y})$ in the monomial basis corresponding to the monomial $\mvartheta^{\boldsymbol{n}}$. 
In order to use the bounds presented in \cite{deklerk-bounds}, we introduce the quantities
\begin{equation}
    L(\mat{Y}) = \max_{\boldsymbol{n}\leq\delta_0}\frac{\boldsymbol{n}!}{|\boldsymbol{n}|!}|g_{\boldsymbol{n}}(\mat{Y})|\qquad 
    g_{\rm min}(\mat{Y}) = \min_{\mvartheta\in [0,1]^k}g(\mvartheta,\mat{Y})
\end{equation}
where $|\vet{n}|!=(n_1+\cdots+n_k)!$ and $\vet{n}!=n_1!\cdots n_k!$. 
For a given $\bar{\delta}\geq\delta_{\rm tot}$, Theorem 3.4(iii) in \cite{deklerk-bounds} gives a lower bound on the smallest Bernstein coefficient of the polynomial $g(\mvartheta,\mat{Y})$, that is 
\begin{equation}
    \dotprod{\mat{W}^{(\bar{\delta})}_{{\rm ber},\boldsymbol{n}}}{\mat{Y}}\geq g_{\rm min}(\mat{Y}) - C L(\mat{Y})/\bar{\delta}
\end{equation}
for all $\vet{n}\leq \bar{\delta}$, with $C=k^{\delta_{\rm tot}}\binom{\delta_{\rm tot}+1}{3}$ being a constant. Define the quantities
\begin{equation}\label{eq:bern-proof-defs}
    L_{\boldsymbol{n}} = \frac{\boldsymbol{n}!}{|\boldsymbol{n}|!}\max_{\substack{\mat{Y}\in\mathcal{Q}^*,\\\snorm{\mat{Y}}_F=1}} |g_{\boldsymbol{n}}(\mat{Y})|, 
\qquad g_{\rm min} = \min_{\substack{\mvartheta\in [0,1]^k,\\\mat{Y}\in\mathcal{Q}^*,\snorm{\mat{Y}}_F=1 } }g(\mvartheta,\mat{Y}),
\end{equation}
whose existence is guaranteed because they are extrema of continuous functions over compact sets. Note that $g_{\rm min}>0$ because $g(\mvartheta,\mat{Y})$ is positive in the considered domain. Also define the maximum of the $L_{\boldsymbol{n}}$'s as $L = \max_{\boldsymbol{n}\leq{\delta_0}}L_{\boldsymbol{n}}$. The definitions in \eqref{eq:bern-proof-defs} imply $L\geq L(\mat{Y})$ and $g_{\rm min}\leq g_{\rm min}(\mat{Y})$ for all $\mat{Y}\in\mathcal{Q}^*$ with $\snorm{\mat{Y}}_F=1$.
Choose an integer $\bar{\delta} > C L/g_{\rm min}$ and $\bar{\delta}\geq\delta_{\rm tot}$. Then the coefficients in the degree-$\bar{\delta}$ Bernstein basis satisfy
\begin{equation}
\dotprod{\mat{W}^{(\bar{{\delta}})}_{{\rm ber},\boldsymbol{n}}}{\mat{Y}} \geq g_{\rm min}(\mat{Y}) - C\frac{L(\mat{Y})}{\bar{\delta}}\geq g_{\rm min} - C\frac{L}{\bar{\delta}} \geq 0 
\end{equation}
for all $\mat{Y}\in\mathcal{Q}^*$ with $\snorm{\mat{Y}}_F=1$.
 Since any element of $\mathcal{Q}^*$ can be obtained by nonnegative rescaling of these particular $\mat{Y}$'s, the inequality $\dotprod{\mat{W}_{{\rm ber},\boldsymbol{n}}^{(\bar{\delta})}}{\mat{Y}}\geq 0$ holds for all $\mat{Y}\in\mathcal{Q}^*$, which implies $\mat{W}_{{\rm ber},\boldsymbol{n}}^{(\bar{\delta})}\in\mathcal{Q}^{**}=\mathcal{Q}$ because $\mathcal{Q}$ is a closed convex cone.
\end{proof}

\section{Proof of Proposition \ref{prop:pr-criterion}}
\label{sec:pr-proof}
\begin{proof}
Let us first prove sufficiency, that is fulfillment of \eqref{eq:q-cond-pr}-\eqref{eq:lmi-pr-cond} implies that $\beta(\cdot,\mvartheta)$ is ESP.  
Analyticity of $\beta(\cdot,\mvartheta)$ in a right-half plane containing $\bar{\mathbb{C}}_+$ follows from the assumption that $\mat{\Sigma}$ is Hurwitz. Assume \eqref{eq:q-cond-pr} and \eqref{eq:lmi-pr-cond} hold for some $\mat{P}$. Fix an arbitrary $\mvartheta\in[0,1]^k$. Condition \eqref{eq:q-cond-pr} implies that there is an $\mat{X}\in\Hmat^{\nu}$ such that
\begin{equation}
    \mat{M}_0(\mvartheta)^*\mat{P}\mat{M}_0(\mvartheta)\succeq \begin{pmatrix}
        \mat{0} & \mat{X} & \mat{0}
        \\
        \star & \mat{0} & \mat{0}
        \\
        \star & \star & \mat{0}
    \end{pmatrix}
\end{equation}
Introduce $\mat{\Delta}_0(s,\mvartheta) = \mathrm{blkdiag}\{s\eye_{\nu},\mat{\Theta}(\mvartheta)^*\}$, and restrict $s=\jj\omega$, $\omega\in\mathbb{R}$,
\begin{multline}
    \begin{pmatrix}
        \mat{\Delta}_0(s,\mvartheta)
        \\
        \eye
    \end{pmatrix}^*
   \mat{P}\begin{pmatrix}
        \mat{\Delta}_0(s,\mvartheta)
        \\
        \eye
    \end{pmatrix}
    =\\
    \begin{pmatrix}
        s\eye & \mat{0} 
        \\
        \eye & \mat{0}
        \\
        \mat{0} & \eye
    \end{pmatrix}^* \mat{M}_0(\mvartheta)^*\mat{P}\mat{M}_0(\theta)
    \begin{pmatrix}
        s\eye & \mat{0} 
        \\
        \eye & \mat{0}
        \\
        \mat{0} & \eye
    \end{pmatrix}
    \succeq \begin{pmatrix}
        (s+s^*)\mat{X} & \mat{0} & \mat{0} 
        \\
        \star & \mat{0} & \mat{0} 
        \\
        \star & \star & \mat{0}
    \end{pmatrix}
    = \mat{0}.\label{eq:deltajw0}
\end{multline}
As this holds for arbitrary $\mvartheta$, \eqref{eq:deltajw0} is verified for all $(s,\mvartheta)\in \jj\mathbb{R}\times[0,1]^k$. Introduce $\mat{A}_{\rm ext} = \mathrm{blkdiag}\{\mat{\Sigma},\mat{\Pi}^*\}$, $\vet{b}_{\rm ext} = (\vet{b}_{\sigma}; \vet{\eta}^*)$ and the vector $
    \vet{\psi}(s,\mvartheta) = 
    [\mat{\Delta}(s,\mvartheta)-\mat{A}_{\rm ext}]^{-1}\vet{b}_{\rm ext}=
    \left(
    [s\eye-\mat{\Sigma}]^{-1}\vet{b}_{\sigma}; [\mat{\Theta}(\mvartheta)^*-\mat{\Pi}^*]^{-1}\vet{\eta}^*
    \right)$.
    Multiply condition \eqref{eq:lmi-pr-cond} to the right and left by $(\vet{\psi}(s,\mvartheta);1)$ and its Hermitian transpose, respectively.  Inequality \eqref{eq:lmi-pr-cond} is strict so, analogously to the proof of Prop. \ref{prop:stab-criterion}, there is $\epsilon>0$ such that
    \begin{multline}
    \begin{pmatrix}
        \vet{\psi}(s,\mvartheta)\\1
    \end{pmatrix}^*
    \begin{pmatrix}
        \mat{0} & \mat{A}_{21}^* & \mat{0} \\ \star & \mat{0} &\vet{b}_{\theta}
        \\
       \star&\star&0
    \end{pmatrix}\begin{pmatrix}
        \vet{\psi}(s,\mvartheta)\\1
    \end{pmatrix}> \\
    \vet{\psi}(s,\mvartheta)^*\begin{pmatrix}
        \mat{\Delta}_0(s,\mvartheta)
        \\
        \eye
    \end{pmatrix}^*\mat{P}\begin{pmatrix}
        \mat{\Delta}_0(s,\mvartheta)
        \\
        \eye
    \end{pmatrix}\vet{\psi}(s,\mvartheta) + \epsilon\norm{\vet{\psi}(s,\mvartheta)}^2  +\epsilon
    \end{multline}
Since the left hand side is exactly $2\re{\beta(s,\mvartheta)}$, we have $\re{\beta(s,\mvartheta)}>\epsilon/2$ for all $(s,\mvartheta)\in \jj\mathbb{R}\times [0,1]^k$, implying that $\beta(\cdot,\mvartheta)$ is ESP for all $\mvartheta\in[0,1]^k$.

Necessity is also shown analogously to the proof in Appendix \ref{sec:appendix2}. Introduce the solution set of \eqref{eq:q-cond-pr}, $\mathcal{P}_0=\{\mat{P}\in\Hmat:\,\mat{M}_0(\mvartheta)^*\mat{P}\mat{M}_0(\mvartheta)\succeq_{\mathcal{Q}_0}0\;\forall\,\mvartheta\in[0,1]^k\}$. If \eqref{eq:q-cond-pr}-\eqref{eq:lmi-pr-cond} are infeasible, there is a $\mat{Z}\succeq 0$ such that
\begin{equation}\label{eq:71}
    \dotprod{\begin{pmatrix}
        \mat{0} & \mat{A}_{21}^* & \mat{0}\\ \star & \mat{0} & \vet{b}_{\theta}
        \\
        \star & \star & 0
    \end{pmatrix}}{\mat{Z}} \leq 0
\end{equation}
and
\begin{equation}
    \begin{pmatrix}
         \mat{A}_{\rm ext} & \vet{b}_{\rm ext} 
         \\
         \eye &\vet{0}
     \end{pmatrix}\mat{Z}\begin{pmatrix}
         \mat{A}_{\rm ext} & \vet{b}_{\rm ext} 
         \\
         \eye &0
     \end{pmatrix}^* = \sum_{\ell=1}^{\bar{\ell}}\mat{M}_0(\mvartheta_\ell)\mat{Y}_\ell \mat{M}_0(\mvartheta_\ell)^*\in\mathcal{P}_0^*\label{eq:72}
\end{equation}
with $\mvartheta_\ell\in[0,1]^k$, $\mat{Y}_\ell \in \mathcal{Q}_0^*$, and the dual $\mathcal{P}_0^*$ is characterized by Lemma \ref{lemma:p-dual}. Then Lemma \ref{lemma:dualq} implies that each $\mat{Y}_{\ell}$ has a decomposition $\mat{Y}_{\ell} = \sum_{m=1}^{m_{\ell}}\vet{\xi}_{\ell,m}\vet{\xi}_{\ell,m}^*$. Partition $\vet{\xi}_{\ell,m} = (\vet{f}_{\ell,m}; \vet{g}_{\ell,m}; \vet{v}_{\ell,m})$ with $\vet{f}_{\ell,m},\vet{g}_{\ell,m}\in\mathbb{C}^{\nu}$. By Lemma \ref{lemma:1vector}(b) the decomposition can be chosen so that $\vet{f}_{\ell,m}\vet{g}_{\ell,m}^* + \vet{g}_{\ell,m}\vet{f}_{\ell,m}^*=0$ holds for all $\ell$, $m$. Hence, by \cite[Corollary 4]{rantzer-on-the-kalman}, for each $\ell$, $m$ there is either a $\bar{\omega}_{\ell,m}\in\mathbb{R}$ such that $\vet{f}_{\ell,m} = \jj\bar{\omega}_{\ell,m}\vet{g}_{\ell,m}$ or $\vet{g}_{\ell,m}=0$. Lemma \ref{lemma:sqrt} applied to equation \eqref{eq:72} implies that there is a square-root $\mat{R}$ of $\mat{Z} = \mat{R}\mat{R}^*$ such that
\begin{equation}\label{eq:73}
     \begin{pmatrix}
         \mat{A}_{\rm ext} & \vet{b}_{\rm ext} 
         \\
         \eye &\vet{0}
     \end{pmatrix}\mat{R} = \begin{pmatrix}
         \mat{M}_0(\mvartheta_1)\vet{\xi}_{1,1} & \mat{M}_0(\mvartheta_1)\vet{\xi}_{1,2} &\cdots & \mat{M}_0(\mvartheta_{\bar{\ell}})\vet{\xi}_{\bar{\ell},{m_{\bar{\ell}}}}
     \end{pmatrix}
\end{equation}
Inserting $\mat{Z} = \mat{R}\mat{R}^*$ in the inequality \eqref{eq:71} shows that there is at least one column of $\mat{R}$, called $\vet{r}$, for which
\begin{equation}
    \vet{r}^*\begin{pmatrix}
        \mat{0} & \mat{A}_{21}^* & \mat{0}\\ \star & \mat{0} & \vet{b}_{\theta}
        \\
        \star & \star & \vet{0}
    \end{pmatrix}\vet{r} \leq 0
\end{equation}
Along the same lines as the proof in Appendix \ref{sec:appendix2}, combining this with equality \eqref{eq:73} allows to show that, for some $\bar{\mvartheta}\in[0,1]^k$, $\re{\beta(s,\bar{\mvartheta})} \leq 0$ for some $s\in\jj\mathbb{R}$ or at infinity, hence $\beta(\cdot,\bar{\mvartheta})$ is not ESP.
\end{proof}

\section{Realization of parameter-dependent basis functions}\label{app:parameterization}
When $k > 1$, the basis functions in $\boldsymbol{\varphi}(\mvartheta)$ can be generated as a tensor product of elementary univariate basis functions, that is
$
    \boldsymbol{\varphi}(\mvartheta) = \bigotimes_{k'=1}^{k}\boldsymbol{\psi}_{k'}(\vartheta_{k'})
$
where $\boldsymbol{\psi}_{k'}(\vartheta_{k'})$ are vectors of $\rho_{k'}$ univariate basis functions along $k'$ and $\otimes$ is the Kronecker product.  
\subsection{Realization of elementary univariate basis functions}
The monomial basis is obtained as $\vet{\psi}_{k'}(\vartheta_{k'}) = \vet{\eta}_{k'}[\mat{\Theta}_{k'}(\vartheta_{k'})-\mat{\Pi}_{k'}]^{-1}$ by setting
    \begin{equation}
        \mat{\Theta}_{k'}(\vartheta_{k'}) = \begin{pmatrix}
0&\vartheta_{k'}\eye_{\rho_{k'}-1} \\ 0 & 0
        \end{pmatrix},\quad \mat{\Pi}_{k'} = \eye_{\rho_{k'}},\quad
        \vet{\eta}_{k'} = \begin{pmatrix}
            -1 & 0& \cdots & 0
        \end{pmatrix}.
    \end{equation}
Any other polynomial basis is clearly obtained as $\boldsymbol{\psi}_{k'}\mat{T}$ through an invertible matrix $\mat{T}$ encoding the change of basis. Applying this $\mat{T}$ as $\mat{T}^{-1}\Theta_{k'}(\vartheta_{k'})$, $\mat{T}^{-1}\Pi_{k'}$ allows to obtain a realization for any desired polynomial basis.

 The model structure of pAAA as presented in \cite{paaa} is based on a set of real poles $\{p_{k',\ell}\}_{\ell=1}^{\rho_{k'}}$ along each dimension $k'$. With our notation, the corresponding univariate basis functions are partial fractions and $\vet{\psi}_{k'}(\vartheta_{k'}) = \vet{\eta}_{k'}[\mat{\Theta}_{k'}(\vartheta_{k'})-\mat{\Pi}_{k'}]^{-1}$ with $\mat{\Theta}_{k'}(\vartheta_{k'}) = \vartheta_{k'}\eye$,  $\mat{\Pi}_{k'}=\mathrm{diag}\,\{p_{k',\ell}\}_{\ell=1}^{\rho_{k'}}$, $\vet{\eta}=\vet{1}^T$.

 Finally, an extension to complex poles $p_{k',\ell}$ is as follows. 
 For even $\rho_{k'}$, a rational partial fraction basis with poles at $\{p_{k',1}, \dots, p_{k',\lfloor \rho_{k'}/2\rfloor}\}\subset \mathbb{C}\setminus\mathbb{R}$ as explicitly defined in \eqref{eq:rat-basis} is obtained by setting
\begin{equation}
    \mat{\Theta}_{k'}(\vartheta_{k'}) = \vartheta_{k'}\eye_{\rho_{k'}},\quad
    \mat{\Pi}_{k'} = \mathrm{diag}\left\{\begin{pmatrix}
        \re p_{k',\ell} & -\operatorname{Im} p_{k',\ell}
        \\
        \operatorname{Im} p_{k',\ell} & \re p_{k',\ell}
    \end{pmatrix}\right\}_{\ell=1}^{\lfloor\rho_{k'}/2\rfloor}
\end{equation}
and $\vet{\eta}_{k'} = \begin{pmatrix}
     1& 0 & 1 & 0 & \cdots
\end{pmatrix}$.

\subsection{Realization of multivariate basis as tensor product}
Starting from a choice of univariate functions, a realization of the multivariate basis obtained by tensor product is $\boldsymbol{\varphi}(\mvartheta) = \vet{\eta}[\mat{\Theta}(\mvartheta)- \mat{\Pi}]^{-1}\mat{B}_{\theta}$ with
\begin{equation}
    \mat{\Pi}=\begin{pmatrix}
    \mat{\Pi}_{1}\otimes \eye_{\rho_2\cdots \rho_k} & &
    \\
   \vet{\eta}_1 \otimes \eye_{\rho_2\cdots \rho_k} & \mat{\Pi}_{2}\otimes \eye_{\rho_3\cdots \rho_k}&
    \\
     &\ddots & 
        \\
       & \vet{\eta}_{k-1}\otimes \eye_{\rho_k}&\mat{\Pi}_k
    \end{pmatrix},
\end{equation}
$\mat{\Theta}(\mvartheta)=\mathrm{diag}\{\mat{\Theta}_{1}(\vartheta_{1})\otimes \eye_{\rho_2\cdots \rho_k}, \mat{\Theta}_{2}(\vartheta_{2})\otimes \eye_{\rho_3\cdots \rho_k},\cdots,\mat{\Theta}_k(\vartheta_k)\}$,
$\eta =
    \begin{pmatrix}
        0 & 0 &\cdots &\vet{\eta}_k^T
    \end{pmatrix}^T$
and $\mat{B}_{\theta}$ selects the first $\rho$ entries of $\vet{\eta}[\mat{\Theta}(\mvartheta)-\mat{\Pi}]^{-1}$, that is $\mat{B}_{\theta} = \left(\eye_{\rho}\;\; 0 \;\;\cdots\;\;0\right)^T$.
\end{appendices}

\bibliographystyle{siam}
\bibliography{sn-bibliography}

\end{document}